
\documentclass{article}
\usepackage[toc,page,header]{appendix}
\usepackage{minitoc}

\pdfminorversion=4   
\doparttoc 
\faketableofcontents 

\usepackage[toc,page,header]{appendix}
\usepackage{minitoc}
\doparttoc 
\faketableofcontents 
\newenvironment{smalleralign}[1][\small]
 {\par\nopagebreak\leavevmode\vspace*{-\baselineskip}%
  \skip0=\abovedisplayskip
  #1%
  \def\maketag@@@##1{\hbox{\m@th\normalfont\normalsize##1}}%
  \abovedisplayskip=\skip0
  \align}
 {\endalign\ignorespacesafterend}
\usepackage{balance}
\usepackage{url}            
\usepackage{booktabs}       
\usepackage{amsfonts}       
\usepackage{nicefrac}       
\usepackage{mathrsfs}
\usepackage{pgf,tikz}
\usepackage{setspace}
\usetikzlibrary{arrows}
\usepackage{tikz-network}
\usepackage{multirow,array}

\usepackage{microtype}      
\usepackage{algorithm}
\usepackage{graphicx}
\usepackage{subfigure}
\usepackage{mathtools,amsmath,amssymb,amsbsy}
\usepackage{amsthm}
\usepackage{color}
\usepackage{csquotes}
\usepackage{enumitem}
\usepackage{helvet}
\usepackage{courier}
\usepackage{hyperref}
\usepackage{tcolorbox}
\usepackage{tikz,pgfplots,tikz-3dplot}
\usetikzlibrary{arrows,shapes,positioning,calc,intersections,through}\tdplotsetmaincoords{55}{110} \usepackage{wrapfig}
\pgfplotsset{compat=newest} 
\pgfplotsset{plot coordinates/math parser=false} 
\usepackage{IEEEtrantools}
\usepackage[cal=boondoxo]{mathalfa}
\allowdisplaybreaks

\newtheorem{theorem}{Theorem}
\newtheorem{definition}{Definition}
\newtheorem{proposition}{Proposition}

\newtheorem{lemma}{Lemma}

\newtheorem*{remark-non}{Remark}

\providecommand{\customgenericname}{}
\newcommand{\newcustomtheorem}[2]{%
  \newenvironment{#1}[1]
  {%
   \renewcommand\customgenericname{#2}%
   \renewcommand\theinnercustomgeneric{##1}%
   \innercustomgeneric
  }
  {\endinnercustomgeneric}
}
\usepackage{algorithm}
\usepackage{algorithmic}

\newlength{\commentindent}
\setlength{\commentindent}{.1\textwidth}
\makeatletter

\newcustomtheorem{customtheorem}{Theorem}
\newcustomtheorem{customlemma}{Lemma}
\newcustomtheorem{customproposition}{Proposition}

\newcommand{\ourmethod}{\textbf{SP}\text{ot}$-$\textbf{AC}}

\usepackage[accepted]{icml2021}

\icmltitlerunning{Learning in Nonzero-Sum Stochastic Games with Potentials}

\begin{document}

\twocolumn[
\icmltitle{Learning in Nonzero-Sum Stochastic Games with Potentials}



\icmlsetsymbol{equal}{*}

\begin{icmlauthorlist}
\icmlauthor{David Mguni}{hu}
\icmlauthor{Yutong Wu}{ca}
\icmlauthor{Yali Du}{ucl}
\icmlauthor{Yaodong Yang}{hu,ucl}
\icmlauthor{Ziyi Wang}{ca}
\icmlauthor{Minne Li}{ucl}
\icmlauthor{Ying Wen}{sh}
\icmlauthor{Joel Jennings}{hu}
\icmlauthor{Jun Wang}{ucl}
\end{icmlauthorlist}

\icmlaffiliation{hu}{Huawei R\&D UK}
\icmlaffiliation{ca}{Institute of Automation, Chinese Academy of Sciences}
\icmlaffiliation{ucl}{University College London, UK}

\icmlaffiliation{sh}{Shanghai Jiao Tong University}
\icmlcorrespondingauthor{David Mguni}{davidmguni@hotmail.com}
\icmlcorrespondingauthor{Yaodong Yang}{yaodong.yang@outlook.com}

\icmlkeywords{Machine Learning, ICML}

\vskip 0.3in
]



\printAffiliationsAndNotice{} 

\begin{abstract}


Multi-agent reinforcement learning (MARL) has become effective in tackling discrete cooperative game scenarios. However, MARL has yet to penetrate settings beyond those modelled  by team and zero-sum games, confining it to a small subset of multi-agent systems.  In this paper, we introduce a new generation of MARL learners that can handle \textit{nonzero-sum} payoff structures and continuous settings. In particular, we study the MARL problem in a class of games known as stochastic potential games (SPGs) with continuous state-action spaces. Unlike cooperative games, in which all agents share a common reward, SPGs are capable of modelling real-world scenarios where agents seek to fulfil their individual goals. 
We prove theoretically our learning method, $\ourmethod$, enables independent agents to learn Nash equilibrium strategies in \textit{polynomial time}. We demonstrate our framework tackles previously unsolvable tasks such as \textit{Coordination Navigation} and \textit{large selfish routing games} and that it outperforms the state of the art MARL baselines such as MADDPG and COMIX in such scenarios. 

\end{abstract}

\section{Introduction}

Many real-world systems give rise to multi-agent systems
(MAS); traffic network systems with autonomous vehicles \cite{ye2015multi,zhou2020smarts}, network packet routing systems \cite{wiering2000multi} and financial trading \cite{mariano2001simulation}
are some examples. In these systems, self-interested
agents act in a shared environment, each seeking
to perform some pre-specified task. Each agent’s actions
affect the performance of other agents and may even prevent them from completing their tasks altogether. For example, autonomous vehicles seeking to arrive at their individual destinations must avoid colliding with other vehicles. Therefore to perform their task the agents must account for other agents’ behaviours. 

There is therefore a great need for reinforcement learning (RL) agents with their own goals to learn to perform in MAS. In these scenarios, agents are not required to behave as a team nor as perfect adversaries. 
These settings are modelled by \textit{nonzero-sum} stochastic games (SGs) whose solution concept is a fixed point known as Nash equilibrium (NE). An NE describes the stable point in which all agents respond optimally to the actions of other agent. Computing the NE is therefore central to solving MAS. 
%
%
%

Despite its fundamental importance for solving many MAS, computing the NE of \textit{any} SG with a general payoff structure remains an open challenge \cite{yang2020overview}. Presently, methods to compute NE in SGs that are neither zero-sum nor team settings are extremely scarce and impose limiting  assumptions. As such, the application of these methods is generally unsuitable for real world MAS \cite{shoham2008multiagent}. Moreover, finding NE even in the simple case of normal form games (where agents take only a single action) is generally intractable when the game is nonzero-sum \cite{chen2009settling}. 

Among multi-agent reinforcement learning (MARL) methods are a class of algorithms known as  independent learners e.g. independent Q learning \cite{tan1993multi}. These algorithms ignore actions of other agents and are ill-suited to tackle MAS and often fail to learn 
\cite{hernandez2017survey}. In contrast, algorithms such as MADDPG \cite{lowe2017multi}, COMA \cite{foerster2018counterfactual} and QDPP \cite{yang2020multi} include a centralised critic that accounts for the actions of all agents. To date  none of these algorithms have been proven to converge in SGs  that are neither team nor zero-sum (adversarial). Additionally, these methods suffer from combinatorial growth in complexity with the number of agents \cite{yang2019alpha} leading to prohibitively expensive computations in some systems.
 There is also a noticeable lack of MARL methods that can handle continuous spaces which is required for tasks such as physical control \cite{bloembergen2015evolutionary}. This has left MARL largely unable to solve various practical tasks such as multi-agent Mujoco \cite{de2020deep} which remains an open challenge. This is in contrast to discrete counterpart settings e.g. Starcraft micro-management in which MARL has had notable success  \cite{peng2017multiagent}.

In this paper, we address the challenge of solving MAS with payoff structures beyond zero-sum and team game settings in continuous systems. In particular, we develop a MARL solver that computes the NE within a new subclass of continuous nonzero-sum SGs, namely continuous stochastic potential games (c-SPGs)  in which the agents' interaction at each stage has a potential game property. Lastly, our solver 
avoids combinatorial complexity with the number of agents.



Our framework is developed through theoretical results that enable the NE of some SGs to be found tractably. First, we formalise a construction of continuous SGs in which the interaction between agents at each stage can be described by a PG. Thereafter, we show that the NE of the SG can be computed by solving a \textit{dual} Markov decision process (MDP) whose solution \textit{exactly coincides} with the NE of the original SG. 
This converts the problem of finding a fixed point NE of an (a priori unknown) nonzero-sum SG to solving an (unknown) MDP whose solution as we show, can be found tractably using a new distributed variant of actor-critic methods, which we call $\ourmethod$.

The paper is organised as follows: after the related work next,  we present our construction of c-SPGs in Sec.~\ref{sec_cspgs}. We continue in Sec.~\ref{sec:planning} to present a simple planner when the environment model is given and prove that c-SPGs have dual representations as MDPs. A polynomial-time fitted Q-learning solver (\textbf{SP}ot\textbf{Q}) is then given to find the NE in this setting. In Sec. \ref{sec:learning}, we extend the learning method and propose an actor-critic variant ($\ourmethod$) that solves c-SPGs in unknown environments. A fully distributed variant is also provided that scales with the number of agents. Robustness analysis is followed and we show that the method closely approximates the NE solution when the construction of the potential function has small estimation errors. Lastly in Sec.~\ref{sec:experiments}, we conduct detailed ablation studies and performance tests on various tasks and conclude the paper.

\section{Related Work} 
\label{sec_relwork}

MARL has been successful in zero-sum scenarios \cite{grau2018balancing} and settings of homogeneous agents with population sizes that approach infinity \cite{mguni2018decentralised,yang2018mean} and team game scenarios \cite{peng2017multiagent}. However, the restrictions on the payoffs therein means that these models are usually far away from many real-world scenarios, prohibiting the deployment of MARL therein. There have been few attempts at computing NE in settings outside of team and zero-sum SGs. Most notably is Nash Q-learning \cite{hu2003nash}; it however imposes stringent assumptions that force the SG to resemble a team game. For example, in \cite{hu2003nash} at each iteration a unique Pareto dominant NE must exist and be computed which is generally unachievable. `Friend or foe' learning \cite{littman2001friend} establishes convergence to NE in two-player \textit{coordination games} but requires known reward functions and solving a linear program at each time step. 
\citet{zhang2020bi} adopts the \emph{stackelberg equilibrium} as the learning target. 
More recently, \citet{lowe2017multi} suggests an actor-critic method (MADDPG) with centralised training on the critic. Nevertheless \citet{lowe2017multi} do not tackle SGs outside of the zero-sum or cooperative cases in either theoretical results or experiments. In particular, the experiments in \cite{lowe2017multi} are all aimed at either adversarial (zero-sum) or the fully cooperative settings.

Very recently \cite{zhang2021gradient} consider an SG setting in which all agents' value functions are assumed to satisfy a global PG condition, that is, the incentive of all agents to change their \textit{policies} can now be expressed using a single global function. As noted in their discussion, without further qualification, this assumption is rather strong and difficult to verify except in the case in which all agents share the same objective. In a later work, \cite{Leonardos2021GlobalCO} consider an SG setting with a PG game property while imposing conditions that either i) reduce the SG to a linear combination of normal form games and removes all planning aspects or ii) limit the agents' interaction to a term in their reward that does not depend on either the state or the agent's own actions. The latter condition (ii) results in an SG which is a restrictive case of our SG, in particular, our SG captures a richer, more general set of strategic interactions between agents (see Sec. \ref{sec:link_pgs} for a more detailed discussion).



We tackle a subclass of SGs which satisfy a PG condition at each stage game. We then show that with this construction, a potentiality property can be naturally extrapolated to the value functions of the SG without imposing restrictive assumptions. With this we prove that the NE of the game can be learned by independent agents without the need to impose restrictive assumptions as in \cite{hu2003nash,littman2001friend} and in a way that scales with the number of agents; in contrast to centralised critic methods that scale combinatorially.

\section{Continuous Stochastic Potential Games}\label{section:SPG_formalism}
\textbf{Continuous Stochastic Games}\\
%
%
MAS are modelled by SGs \cite{shoham2008multiagent,shapley1953stochastic}. An SG is an augmented MDP involving two or more agents $\{1,2,\ldots, N\}=:\mathcal{N}$ that simultaneously take actions over many (possibly infinite) rounds. Formally, a continuous SG is a tuple $\mathcal{G}=\langle \mathcal{N},\mathcal{S},\left(\mathcal{A}_{i}\right)_{i\in\mathcal{N}},P,\left(R_{i}\right)_{i\in\mathcal{N}},\gamma\rangle$ where $\mathcal{S}$ is the set of states, $\mathcal{A}_i\subset\mathbb{R}^q$ is an action set and $R_{i}:\mathcal{S}\times\boldsymbol{\mathcal{A}}\to\mathcal{P}(D)$ is the distribution reward function for agent $i\in\mathcal{N}$ where $D$ is a compact subset of $\mathbb{R}$ and lastly, $P:\mathcal{S} \times \boldsymbol{\mathcal{A}} \times \mathcal{S} \rightarrow [0, 1]$ is the probability function $P:\mathcal{S} \times \boldsymbol{\mathcal{A}} \times \mathcal{S} \rightarrow [0, 1]$ describing the system dynamics where $\boldsymbol{\mathcal{A}}:=\times_{i=1}^N\mathcal{A}_i$. 

In an SG, at each time $t\in 0,1,\ldots,$ the system is in state $s_t\in\mathcal{S}$ and each agent $i\in\mathcal{N}$ takes an action $a^i_t\in\mathcal{A}_i$. The \textit{joint action}\ $\boldsymbol{a}_t=(a^1_t,\ldots, a^N_t)\in\boldsymbol{\mathcal{A}}$  produces an immediate reward $R_i(s_t,\boldsymbol{a}_t)$ for agent $i\in\mathcal{N}$ and influences the next-state transition which is chosen according to $P$.  
Using a (parameterised) \textit{Markov strategy}\footnote{A Markov strategy requires as input only the current state (and not the game history or other agents' actions or strategies).} $\pi_{i,\boldsymbol{\eta}^i}: \mathcal{S} \times \mathcal{A}_i \rightarrow [0,1]$ to select its actions, each agent $i$ seeks to maximise its \textit{individual} expected returns as measured by its value function: $
v^{\pi^i,\pi^{j\neq i}}_i(s)=\mathbb{E}\left[\sum_{t=0}^\infty \gamma^tR_i(s_t,\boldsymbol{a}_t)\big|\boldsymbol{a}_t\sim(\pi^i_{\boldsymbol{\eta}^i},\pi^{j\neq i}_{\boldsymbol{\eta}^{-i}})\right]$ where $\boldsymbol{\eta}^i\in E_i\subset\mathbb{R}^l$ and $\Pi_i$ is a compact Markov strategy space. 
A \textit{pure strategy} (PS) is a map $\pi_i: \mathcal{S}\to\mathcal{A}_i $, for any $i\in\mathcal{N}$ that assigns to any state an action in $\mathcal{A}_i$. 
%
%

We denote the space of joint policies by $\boldsymbol{\Pi}:=\times_{i\in\mathcal{N}}\Pi_i$; where it will not cause confusion (and with a minor abuse of notation) we use the shorthands $\pi_{i}\equiv\pi^{i}_{\boldsymbol{\eta}_i}$ and  $f^{(\pi^{i},\pi^{-i})}(s)= f(s,\pi^i,\pi^{-i})\equiv\mathbb{E}_{\pi^i,\pi^{-i}}[f(s,a^i,a^{-i})]$.

SGs can be viewed as a sequence of \textit{stage games} $\{\mathcal{M}(s)\}_{s\in \mathcal{S}}$ that take place at each time step where $\mathcal{M}(s)=\langle \left(\mathcal{A}_i\right)_{i\in\mathcal{N}},\left(R_{i}(s)\right)_{i\in\mathcal{N}},\mathcal{N}\rangle$. Therefore, at each time step a stage game is played and then the game transitions to the next stage game which is selected according to $P$.

%

%


\begin{figure}\label{example_cSPGS_fig}
    \centering
    \subfigure{
		 \includegraphics[width=0.45\columnwidth]{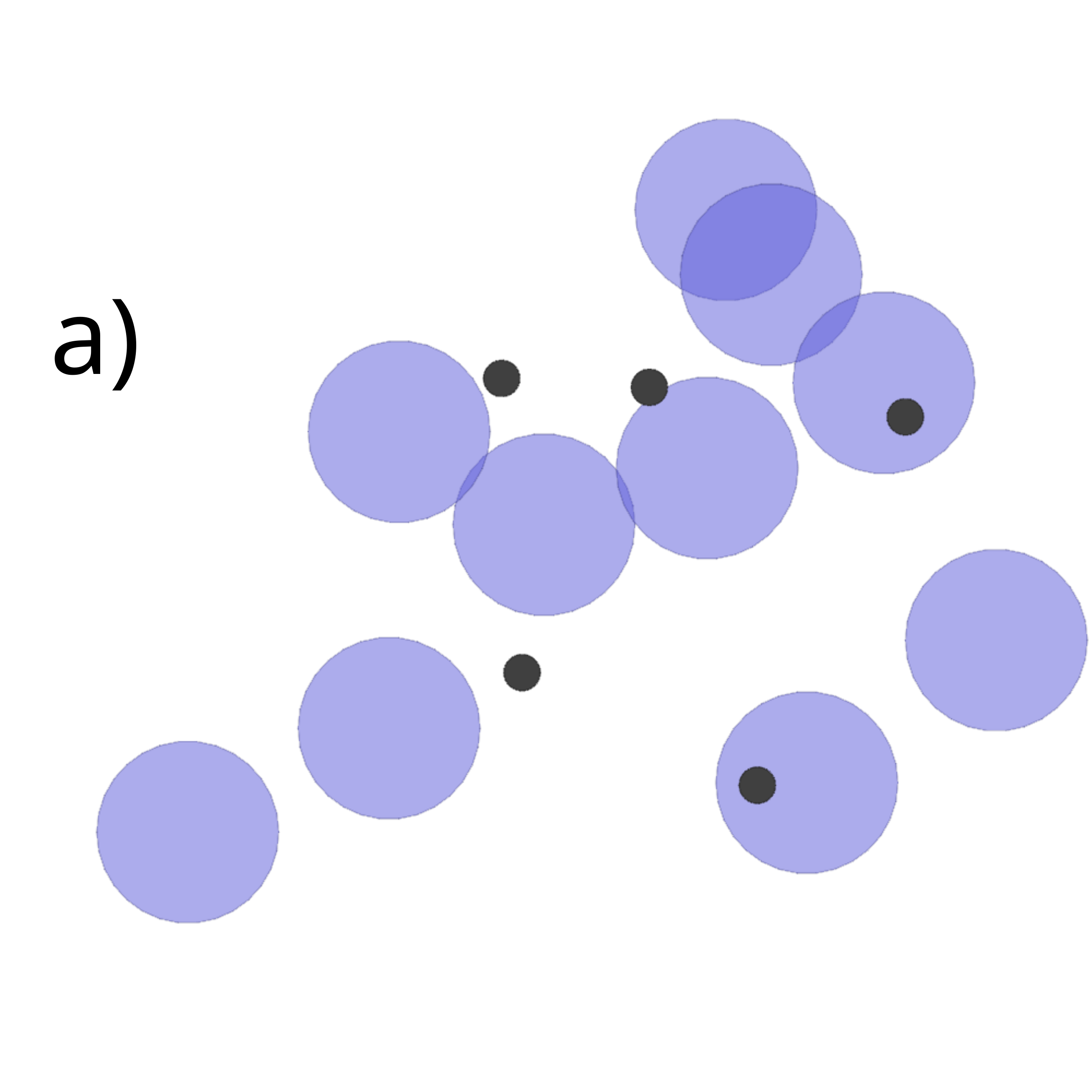}
    }
	\subfigure{
		 \includegraphics[width=0.45\columnwidth]{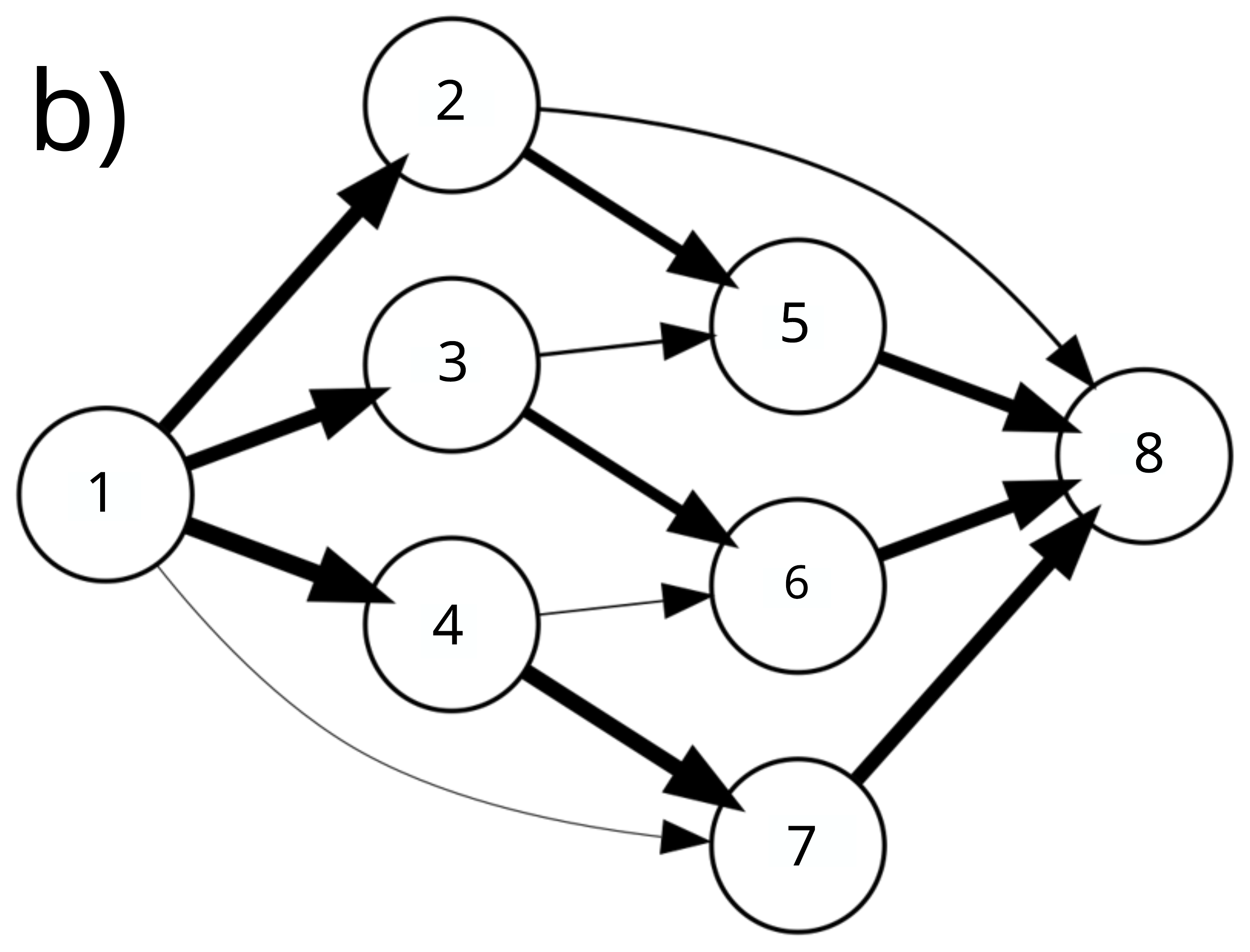}
	}
	\caption{\textbf{a)} Coordination Navigation: selfish agents (purple) seek to reach rewards (black) whilst minimising contact with each other. \textbf{b)} Routing Networks: agents split their own commodity flow between edges in a network over a sequence of time-steps. Starting at a source (node 1) and arriving at a target node (8), paths that have more commodity incur higher congestion costs.}
\vspace{-5 mm}\end{figure}

\textbf{Continuous Stochastic Potential Games}
\label{sec_cspgs}

We now introduce a new subset of SGs namely c-SPGs which is the framework of our approach. 
\begin{definition}
An SG is a c-SPG if for all states there exists a function $\phi:\mathcal{S}\times\boldsymbol{\mathcal{A}}\to\mathbb{R}$ such that the following holds for any 
$(a^i,a^{-i}),(a'^i,a^{-i})\in\boldsymbol{\mathcal{A}}$ where $a^{-i}_{t}:=(a^1,\ldots a^{i-1},a^{i+1},\ldots, a^N)$, $\forall i\in\mathcal{N}, \forall s\in \mathcal{S}$:
\begin{align}
\nonumber R_i(s,(a^i,a^{-i}))- R_i(s,(a'^i,a^{-i}))\\
=\phi(s,(a^i,a^{-i}))-\phi(s,(a'^i,a^{-i})). \label{potential_condition_static}
\end{align}
\end{definition}
Condition (\ref{potential_condition_static}) says that the difference in payoff from a deviation by one of the agents is exactly quantified by a global function $\phi$ that does not depend on the agent's identity. We call  $\phi$ the \textit{potential function} or potential for short. The condition extends the notion of static \textit{one-shot} potential games (PGs) \cite{monderer1996potential} to a continuous SG setting that now includes states and transition dynamics.

To complete the construction we introduce a condition which is a natural extension of PGs to state-based settings:
\begin{definition}
A stage game $\mathcal{M}(s)$ is state transitive if there exists a $ \phi:\mathcal{S}\times\boldsymbol{\mathcal{A}}\to\mathbb{R}$ s.th.
$\forall (a^i,a^{-i})\in\boldsymbol{\mathcal{A}}, \forall i\in\mathcal{N}, \forall s,s'\in \mathcal{S}$: \begin{align}
R_i(s,(a^i,a^{-i}))&- R_i(s',(a^i,a^{-i}))\nonumber
\\=\phi(s,(a^i,a^{-i}))&-\phi(s',(a^i,a^{-i})).
\end{align}
\end{definition}

 The intuition is that the difference in rewards for changing state is the same for each agent. 
%
%
 Some classic examples of where state transitivity holds are anonymous games \cite{daskalakis2007computing}, symmetric SGs \cite{jaskiewicz2018symmetric}, team SGs \cite{cheng2017playing}. 


Our results are built under the  assumption\footnote{Statements of the technical assumptions are in the Appendix.} that state transitivity assumption holds.

Fig.~1 illustrates two examples of c-SPGs. For instance, in Coordination Navigation, one can verify that the state transitivity assumption is satisfied:
a collection of agents seeks to arrive at some  destination $x^\star\in\mathbb{R}^p$. Crucially, the agents must avoid colliding with other agents. Each agent's value function is given by:
\begin{align*}
V_i^{\boldsymbol{\pi}}&(\boldsymbol{x})=\frac{1}{2}\mathbb{E}\Bigg[\sum_{t=0}^\infty\gamma^t\Big\{K_i-\alpha\|x_{i,t}-x^\star\|^2
\\\textcolor{white}{xxx}&-\beta\sum_{j\in\mathcal{N}/\{i\}}(\|x_{i,t}-x_{j,t}\|^2+\epsilon)^{-1/2}-\|a_{i,t}^T-\rho\|_M^2\Big\}\Bigg]
\end{align*}
where $\|\cdot\|$ and $\|\cdot\|_M$ are Euclidean and Mahalanobis norms respectively, $x_{i,t}\in\mathbb{R}^p$ is the position of agent $i$ at time $t\in\mathbb{N}$ and $\boldsymbol{x}_t=(x_{1,t},\ldots,x_{N,t})\in\mathbb{R}^p$; $c,\rho, H, \alpha,\beta \{K_i\}_{i\in\mathbb{N}}$ are constants and $a_{i,t}$ is vector representing the action taken by  agent $i$. It can be readily verified that the game is potential with the following potential function: $
\phi^{\boldsymbol{\pi}}(\boldsymbol{x})=-\alpha\|x-x^\star\|^2
-\beta\sum_{j\in\mathcal{N}}\left(\|x-x_{j,t}\|^2+\epsilon\right)^{-1/2}+\beta c^{-2}-\|a^T-\rho\|^2$. Similarly, it can be readily verified that the game satisfies the state transitivity assumption.


In our Ablation experiments (see Sec. \ref{sec:experiments}) we show that our method is able to tackle settings in which the potentiality and state transitivity conditions are mildly violated. 

 C-SPGs also hold in SGs in which the agents have the same reward functions (identical interest games)\cite{monderer1996fictitious} such as anonymous games \cite{daskalakis2007computing}, team games \cite{wang2003reinforcement} and mean field games \cite{mguni2018decentralised}. Such SGs are widely used to study distributive systems and coordination problems in MAS such as Starcraft \cite{samvelyan2019starcraft} and Capture the Flag \cite{jaderberg2019human}.  MARL frameworks such as COMA \cite{foerster2018counterfactual}, QMIX \cite{rashid2018qmix} and QDPP \cite{yang2020multi} are fully cooperative settings and therefore fall within this category.  

A key result we prove is that c-SPGs enjoy a dual representation as MDPs therefore enabling their solution to be computed by tackling MDPs. To construct a solution method for c-SPGs, we resolve a number of challenges: \textbf{i)} The first involves determining the dual MDP whose solution is to be learned through interaction with the environment. 
 \textbf{ii)} The second involves developing a tractable learning procedure that ensures convergence to the game solution. To do this we develop a method that finds the solution of the dual MDP distributively, in doing so we also resolve the problem of combinatorial complexity that afflict MARL methods.  \textbf{iii)} The method of determining the dual MDP (\textbf{i)}) can incur small errors.  Our last challenge is to show that small errors in the construction of the dual MDP induce only small errors in the agents' best response actions. 

\subsection{Link to Potential Games and Discussion}\label{sec:link_pgs}

We briefly continue the discussion on related works with a relevant review of PGs. The first systematic treatment of PGs appeared in \cite{monderer1996potential} in a static setting. PGs constitute a fundamental building block of general-sum games - any general-sum game can be decomposed into two (strategic) parts; PGs and \textit{harmonic games} \cite{candogan2011flows}. PGs model many real-world scenarios including traffic network scenarios, network resource allocation \cite{zazo2015dynamic}) social conflicts \cite{la2016potential} and consensus problems  \cite{marden2009cooperative}. PGs also  encompass all team games and some zero-sum games \cite{balduzzi2018mechanics}. 


C-SPGs extend PGs to settings with dynamics and future uncertainty. This enables PGs to capture real-world scenarios that involve sequential decision-making and dynamics. Example of these settings traffic networks models, routing and packet delivery problems. 


%


The analysis of dynamic PGs is extremely sparse and does not cover (reinforcement) learning settings in which the system is a priori unknown. 
In the direction of incorporating potentiality property within an SG, \cite{gonzalez2013discrete,macua2018learning} consider an SG in which the potentiality property is \textit{imposed} on the value functions which results in the need for highly restrictive assumptions. In \cite{gonzalez2013discrete} the SG is restricted to concave reward functions (in the state variable) and the transition function is required to be invertible (and known). These assumptions are generally incompatible with many MAS settings of interest.\footnote{The result in \cite{gonzalez2013discrete} also requires verifying the policy satisfies sufficiency conditions which is generally difficult given the size of the space of functions.} Similarly, \cite{macua2018learning} study a discrete Markov game in which the value function is \textit{assumed} to satisfy a PG property. Their construction requires that the agents' policies depend only on disjoint subcomponents of the state which prohibits non-local (strategic) interactions.  

Very recently \cite{zhang2021gradient} consider an SG setting in which all agents' value functions are assumed to satisfy a global PG property, that is, the incentive of all agents to change their \textit{policies} can now be expressed using a single global function. 
%
To construct this relationship using conditions on the stage game,  in a later work \cite{Leonardos2021GlobalCO} consider an SG setting and embed either of two properties into the game structure namely, an \textit{agent-independent transition assumption} (C.1) or an \textit{equality of individual dummy term assumption} (C.2). Using either of these conditions \textit{and} the stage game PG condition (Condition (\ref{potential_condition_static})), they show that the PG condition can be extrapolated to a global PG condition on the value functions. 

Conditions C.1. and C.2. in \cite{Leonardos2021GlobalCO} impose heavy restrictions since Condition C.1. reduces the SG to a linear combination of normal form games and removes all planning aspects (hence extrapolating the potentiality of stage games to the agents' value functions is deduced trivially). Condition C.2. restricts the noncooperative (strategic) interaction part of the game to a term that does not depend on the state or the agent's own action. Moreover imposing condition C.2. produces an SG  that is a special case of our SG (this can be seen using the equivalence expression in Lemma \ref{trans_lemma} (see Sec. \ref{sec:appendix_proof_results} in Appendix) by setting $k(s)\equiv 1$ and restricting $h_i$ to depend only on \textit{other agents'} actions in the reward functions of our SG). Therefore, the generalisation of the PG condition to SGs in \cite{Leonardos2021GlobalCO} requires strong limitations on the structure of the SG not present in our analysis.

With our new construction which has a PG at each stage game we show that the PG condition can be naturally extrapolated to the value functions of the SG. This provides verifiable assumptions on the game while imposing relatively weak assumptions on the SG in comparison to \cite{Leonardos2021GlobalCO}. With this we prove that the equilibrium of the SG can be found by merely solving an (unknown) MDP without imposing either state disjointness as in \cite{macua2018learning} or concavity as in \cite{gonzalez2013discrete}. 

\section{Planning in c-Stochastic Potential Games }\label{sec:planning}

We now show that the stable point solution of a c-SPG can be computed tractably by solving a \textit{dual MDP} with reward function $\phi$. This leads to a vast reduction in complexity for finding NEs in our c-SPG subclass of nonzero-sum. In what follows, we assume that the environment is known; in Sec. \ref{sec:learning} we extend the analysis of this section to unknown environments. We defer the proofs of the results of the following sections to the Appendix. 

As SGs are \textit{noncooperative settings}, the solution cannot be described as an optimisation of a single objective. The appropriate solution concept is the following NE variant  \cite{fudenberg1991tirole}: 

\begin{definition}
A strategy profile $\boldsymbol{\hat{\pi}}=(\hat{\pi}_i,\hat{\pi}_{-i})\in\boldsymbol{\Pi}$ is a {Markov perfect equilibrium (MPE)} if $\forall i\in\mathcal{N}$:

$
v_i^{(\hat{\pi}_i,\hat{\pi}_{-i})}(s)\geq v_i^{(\pi'_i,\hat{\pi}_{-i})}(s), \; \forall s\in \mathcal{S}, \;\forall \pi_i'\in\Pi_i$.
\end{definition}

The condition characterises a fixed point in strategies in which no agent can improve their expected payoff by unilaterally deviating from their current policy. We denote the set of NE of $\mathcal{G}$ by $NE\{\mathcal{G}\}$. Finding NE of nonzero-sum SGs in general involves using fixed point methods which are generally intractable \cite{chen2009settling}. Indeed, finding NE in SGs is PPAD complex  
(Polynomial Parity Arguments on Directed graphs) 
\cite{chen2009settling} for which brute force methods are intractable. Finding efficient solution methods for  nonzero-sum SGs is an open challenge \cite{shoham2008multiagent}.

We now show that c-SPGs exhibit special properties that enable their NE to be computed tractably. In particular, we show that computing the NE of c-SPGs can be achieved by solving an MDP. With this, solving c-SPGs can be approached with stochastic approximation tools. We then present a new Q-learning variant  that solves c-SPGs in polynomial time.

To begin, we construct the Bellman operator of $\mathcal{G}$. Let $g:\mathcal{S}\times\boldsymbol{\mathcal{A}}\to \mathbb{R}$ and $F:\mathcal{S}\to\mathbb{R}$, for any $s\in\mathcal{S}$ the Bellman operator of the game $\mathcal{G}$ is given by the following:
\begin{align*}
[T_g F](s):=\underset{\boldsymbol{a}\in\mathcal{A}}{\sup}\;[ g(s,\boldsymbol{a})+\gamma\int_{s'\in\mathcal{S}}ds'P(s';\boldsymbol{a},s)F[s']].
\end{align*}

We now state our first key result which reveals a striking property of the c-SPG class of games:
\begin{theorem}\label{reduction_theorem}
Let $V:\mathcal{S}\times\boldsymbol{\mathcal{A}}\to\mathbb{R}$ be a test function, then $\mathcal{G}$ possesses a fixed point NE in pure (deterministic) Markov strategies characterised by:
\begin{align*}\underset{k\to\infty}{\lim}T_\phi^kV^{\boldsymbol{\pi}}=\underset{{\boldsymbol{\hat{\pi}}}\in\boldsymbol{\Pi}}{\sup}V^{\boldsymbol{\hat{\pi}}},  
\end{align*}
where $\phi$ is the potential of $\mathcal{G}$.
\end{theorem}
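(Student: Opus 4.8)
## Proof Proposal

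The plan is to show that the potential function $\phi$ plays the role of a surrogate reward: maximising the discounted sum of $\phi$ over a single ``controlled'' trajectory produces a joint policy that is simultaneously a best response for every agent, hence an MPE. I would proceed in three stages.

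\textbf{Stage 1: From static potential to value-function potential.} First I would use the stage-game potential condition (\ref{potential_condition_static}) together with the state transitivity assumption to build a global function $\Phi^{\boldsymbol{\pi}}:\mathcal{S}\to\mathbb{R}$ such that, for every agent $i$, every state $s$, and every unilateral deviation $\pi'_i$,
\begin{align*}
v_i^{(\pi_i,\pi_{-i})}(s) - v_i^{(\pi'_i,\pi_{-i})}(s) = \Phi^{(\pi_i,\pi_{-i})}(s) - \Phi^{(\pi'_i,\pi_{-i})}(s).
\end{align*}
The idea is to define $\Phi^{\boldsymbol{\pi}}(s) := \mathbb{E}[\sum_{t\ge 0}\gamma^t \phi(s_t,\boldsymbol{a}_t)\mid \boldsymbol{a}_t\sim\boldsymbol{\pi}, s_0 = s]$ and then verify the difference identity. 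Expanding $v_i^{(\pi_i,\pi_{-i})} - v_i^{(\pi'_i,\pi_{-i})}$ via the one-step Bellman recursion, the instantaneous-reward difference at each stage is controlled by (\ref{potential_condition_static}) since only agent $i$ deviates, and the mismatch between the continuation values (which live at states reached under different transition kernels, because $\pi'_i\ne\pi_i$ shifts the next-state distribution) is exactly what the state transitivity condition is designed to absorb — it equates the reward-gap-across-states across all agents, so telescoping the two trajectories leaves only $\phi$-terms. This telescoping-across-two-coupled-trajectories argument is the technical heart; I'd likely package it as an induction on a finite horizon and pass to the limit using discounting and boundedness of rewards (the compactness/boundedness hypotheses in the Appendix).

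\textbf{Stage 2: Identify the MPE with a pointwise maximiser of $\Phi$.} Given the value-potential identity, for a \emph{fixed} opponent profile $\pi_{-i}$ the map $\pi_i \mapsto v_i^{(\pi_i,\pi_{-i})}(s)$ and the map $\pi_i\mapsto \Phi^{(\pi_i,\pi_{-i})}(s)$ differ by a constant (independent of $\pi_i$), so they share the same argmax set in $\pi_i$. If $\boldsymbol{\hat{\pi}}$ maximises $\Phi^{\boldsymbol{\pi}}(s)$ over the full joint policy space $\boldsymbol{\Pi}$ for every $s$, then in particular no single-agent deviation can increase $\Phi$, hence none can increase $v_i$ — this is precisely the MPE inequality. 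So it suffices to show that $\sup_{\boldsymbol{\pi}\in\boldsymbol{\Pi}}\Phi^{\boldsymbol{\pi}}$ is attained and is attained \emph{simultaneously at every state} by a single (deterministic, Markov) profile. But $\Phi^{\boldsymbol{\pi}}$ is exactly the value function of the single-agent MDP $\langle \mathcal{S}, \boldsymbol{\mathcal{A}}, P, \phi, \gamma\rangle$ whose ``action'' is the joint action $\boldsymbol{a}$; standard dynamic-programming theory for MDPs with compact action sets and bounded rewards gives existence of an optimal stationary deterministic policy $\boldsymbol{\hat{\pi}}$ that is optimal from every initial state, and its value is the fixed point of the Bellman optimality operator — which is $T_\phi$.

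\textbf{Stage 3: Assemble.} Finally I would invoke contraction of $T_\phi$ on bounded functions (modulus $\gamma$, Banach fixed point) so that $\lim_{k\to\infty} T_\phi^k V^{\boldsymbol{\pi}} = V^\star_{\text{MDP}} = \sup_{\boldsymbol{\pi}}\Phi^{\boldsymbol{\pi}}$ for any bounded test function $V^{\boldsymbol{\pi}}$, independently of the starting point; combining with Stage 2 identifies this limit with $\sup_{\boldsymbol{\hat{\pi}}}V^{\boldsymbol{\hat{\pi}}}$ evaluated at the MPE (note the theorem writes $V$ where the potential-MDP value $\Phi$ is meant, via the value-potential correspondence), and the maximiser $\boldsymbol{\hat{\pi}}$ is the claimed pure-strategy NE. One subtlety worth a remark: the statement's ``$\sup_{\boldsymbol{\hat{\pi}}}V^{\boldsymbol{\hat{\pi}}}$'' should be read through the lens of the potential, i.e. the NE is the argmax of the potential-value, not a naive joint maximiser of any one agent's $v_i$; I would make that identification explicit to avoid circularity.

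\textbf{Main obstacle.} The delicate step is Stage 1 — propagating the one-shot potential identity to infinite-horizon value functions. A deviation $\pi'_i$ changes the \emph{distribution over trajectories}, not just the per-stage payoff, so one cannot simply sum (\ref{potential_condition_static}) stage by stage along a common sample path. The state transitivity condition is exactly the extra hypothesis that closes this gap, and getting the coupling/telescoping argument right (pairing up the two trajectory measures, controlling the tail via $\gamma^t$-geometric decay, and interchanging limit and expectation under the Appendix's regularity assumptions) is where the real work lies.
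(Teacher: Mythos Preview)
Your proposal is correct and follows essentially the same architecture as the paper's proof: Stage~1 is the paper's Proposition~\ref{dpg_proposition} (value-potential identity, proved by finite-horizon induction then a Cauchy/contradiction passage to $T\to\infty$), Stage~2 is Proposition~\ref{reduction_prop} (argmax of $B$ is an MPE, by contradiction), and Stage~3 is standard Bellman contraction. The one concrete mechanism you leave implicit is the paper's Lemma~\ref{dummy_coord_prop}: the stage-game potential plus state transitivity give the decomposition $R_i(s,a^i,a^{-i})=\phi(s,a^i,a^{-i})+F_i(a^{-i})$ with $F_i$ independent of both $a^i$ \emph{and} $s$; this is precisely what makes the ``mismatch between continuation values'' vanish in the inductive step (the $F_i$-part integrates to zero against the difference of transition kernels because $\int P(ds'|\cdot)=1$ and $\int\pi_i-\int\pi'_i=0$), and is the exact realisation of the ``telescoping'' you describe.
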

The result states that the MPE of the game exist and in \textit{pure strategies} and correspond to solution of a (dual) MDP  $\mathscr{M}^\dagger:=\left\langle \phi,\times_{i\in\mathcal{N}}\mathcal{A}_i,P,\mathcal{S},\gamma \right\rangle$. In fact, it is shown that any MPE is a local optimum of the value function associated to $\mathscr{M}$. The value function of $\mathscr{M}$ which we call the \textit{dynamic potential function} (DPF), $B$, is constructed by $B^{\boldsymbol{\pi}}(s)=\mathbb{E}[\sum_{t=0}^\infty \gamma^t\phi(s_t,\boldsymbol{a}_t)|\boldsymbol{a}_t\sim\boldsymbol{\pi}]$, $\forall s\in\mathcal{S},\forall\boldsymbol{\pi}\in\boldsymbol{\Pi}$. 
%

The theorem is proven inductively within a dynamic programming argument to extrapolate the potentiality property to the entire SG then showing $\mathcal{G}$ is \textit{continuous at infinity}. 

Theorem \ref{reduction_theorem} enables us to compute the MPE by solving an MDP, a task which can be performed in polynomial time.\footnote{The MDP lies is in a complexity class known as $\texttt{P-SPACE}$ which can be solved tractably \cite{papadimitriou1987complexity}.} 
Moreover, Theorem \ref{reduction_theorem} enables a Q-learning approach \cite{bertsekas2012approximate} for finding the MPE of the game. The following fitted Q-learning method computes the approximate $B$ function and the corresponding optimal policy for each agent. 

First, let us define by
\begin{smalleralign}
Y_{l_k}(s_{l_k},\boldsymbol{a}_{l_k},s'_{l_k}):= \phi_{l_k,\hat{\rho}}(s_{l_k},\boldsymbol{a}_{l_k})+\gamma\underset{\boldsymbol{a'}}{\sup}\;\mathbb{E}_{\mathbb{P}}\left[\hat{B}_{l}\right](s'_{l_k},\boldsymbol{a'})\label{spot_q_target}
\end{smalleralign}
At each iteration $k=0,1,\ldots$ we solve the minimisation:
\begin{align}
\hspace{-5 mm}F_l\in \underset{\mathcal{F}\in\mathcal{H}}{\arg\inf} \sum_{l_k=1}^{n_k}\left(Y_{l_k}(s_{l_k},\boldsymbol{a}_{l_k},s'_{l_k})-\left[\mathcal{F}\right](s_{l_k},\boldsymbol{a}_{l_k})\right)^2
\label{critic_update}
\end{align}

 \begin{algorithm}[ht!]
\begin{algorithmic}[1]
\INPUT discount factor $\gamma$ and PF $\phi$.
    \FOR{$k \in {0,1,2,\cdots}$} 
    	\FOR{$i \in \mathcal{N}$}
\STATE Set the local target $Y_{l_k}$ by \eqref{spot_q_target}
\STATE Update $F$ by minimizing Eq. \eqref{critic_update} 
\ENDFOR
    \ENDFOR
\OUTPUT $F, (\pi_i^k)_{i \in \mathcal{N}}$.
\end{algorithmic}
\caption{\textbf{SP}ot\textbf{Q}: \textbf{S}tochastic \textbf{POT}ential \textbf{Q}-Learning}
\end{algorithm}

The minimisation seeks to find the optimal action-value function $Q^\star$. Using this, we can construct our \textbf{SP}ot\textbf{Q} algorithm that works by mimicking value iteration. By Theorem \ref{reduction_theorem}, the algorithm converges to the MPE of the game.

Theorem \ref{reduction_theorem} does not establish uniqueness of $B$ which could lead to ambiguity in the solution. The following result reduces the set of candidates to a single family of functions:
\begin{lemma}\label{difference_constant_lemma}
If $B_1,B_2$ are value functions of the dual MDP $\mathcal{M}^\dagger$ then $
(B_1^{\boldsymbol{\pi}}-B_2^{\boldsymbol{\pi}})(s)=c,\forall \boldsymbol{\pi}\in \boldsymbol{\Pi}, \forall s \in \mathcal{S}$ where $c\in\mathbb{R}$
\end{lemma}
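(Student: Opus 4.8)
\textbf{Proof plan for Lemma \ref{difference_constant_lemma}.}

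The plan is to exploit the fact that, by Theorem \ref{reduction_theorem}, a value function $B$ of the dual MDP $\mathcal{M}^\dagger$ is a fixed point of the Bellman operator $T_\phi$, i.e.\ $B^{\boldsymbol{\pi}} = T_\phi B^{\boldsymbol{\pi}}$ along the optimal policy (and more generally satisfies the policy-evaluation Bellman equation $B^{\boldsymbol{\pi}}(s) = \phi(s,\boldsymbol{\pi}) + \gamma\int_{s'} P(s';\boldsymbol{\pi},s) B^{\boldsymbol{\pi}}(s')\,ds'$ for a fixed $\boldsymbol{\pi}$). The key observation is that $\phi$ is a \emph{fixed} reward function: two value functions $B_1,B_2$ of $\mathcal{M}^\dagger$ differ only in the (test) function to which the Bellman recursion was applied, not in the reward, so their difference must be annihilated by the recursion.

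First I would fix an arbitrary $\boldsymbol{\pi}\in\boldsymbol{\Pi}$ and write the policy-evaluation equation for both $B_1^{\boldsymbol{\pi}}$ and $B_2^{\boldsymbol{\pi}}$; subtracting them, the $\phi$ terms cancel and I obtain
\begin{align*}
(B_1^{\boldsymbol{\pi}}-B_2^{\boldsymbol{\pi}})(s) = \gamma\int_{s'\in\mathcal{S}} P(s';\boldsymbol{\pi},s)\,(B_1^{\boldsymbol{\pi}}-B_2^{\boldsymbol{\pi}})(s')\,ds'.
\end{align*}
Writing $\Delta := B_1^{\boldsymbol{\pi}}-B_2^{\boldsymbol{\pi}}$, this says $\Delta = \gamma P^{\boldsymbol{\pi}}\Delta$, where $P^{\boldsymbol{\pi}}$ is the (stochastic, hence nonexpansive in sup-norm) transition kernel under $\boldsymbol{\pi}$. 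Since $\gamma<1$ and $\|\gamma P^{\boldsymbol{\pi}}\Delta\|_\infty \le \gamma\|\Delta\|_\infty$, iterating gives $\|\Delta\|_\infty \le \gamma^k\|\Delta\|_\infty \to 0$, so $\Delta \equiv 0$ for \emph{that} $\boldsymbol{\pi}$. To get the constant $c$ appearing in the statement — rather than $0$ — I would instead allow $B_1,B_2$ to be generated from different test functions $V_1,V_2$ via $B_j = \lim_{k\to\infty} T_\phi^k V_j$, or more directly invoke Lemma \ref{difference_constant_lemma}'s phrasing: a "value function of the dual MDP" is any fixed point of the evaluation operator; the contraction argument shows such a fixed point is \emph{unique}, so the additive freedom $c$ enters only if one adopts the convention that $B$ is defined up to the normalisation used when extrapolating $\phi$ from the reward functions via Condition (\ref{potential_condition_static}) (the potential is itself only determined up to an additive constant). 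I would make this precise by noting that replacing $\phi$ by $\phi + c(1-\gamma)$ shifts every $B^{\boldsymbol{\pi}}$ by exactly $c$, and conversely any two admissible potentials differ by a constant; feeding this through the telescoping sum $B^{\boldsymbol{\pi}}(s)=\mathbb{E}[\sum_t \gamma^t\phi(s_t,\boldsymbol{a}_t)]$ yields the uniform shift $c$ independent of both $s$ and $\boldsymbol{\pi}$.

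The main obstacle is pinning down the precise sense in which two "value functions of $\mathcal{M}^\dagger$" can differ at all: once the reward $\phi$ and the policy are fixed, the policy-evaluation equation has a unique bounded solution by the Banach fixed-point / contraction argument above, so the genuine content of the lemma is the statement that the only residual ambiguity is the choice of potential, which Condition (\ref{potential_condition_static}) fixes only up to a state-and-action-independent additive constant. Handling this cleanly requires: (i) confirming boundedness of $B^{\boldsymbol{\pi}}$ (immediate from compactness of $D$ and $\gamma<1$, using the "continuous at infinity" property invoked in Theorem \ref{reduction_theorem}); (ii) verifying the contraction estimate is uniform in $\boldsymbol{\pi}$; and (iii) checking that the additive-constant freedom in $\phi$ propagates to a $\boldsymbol{\pi}$-independent constant in $B$, which follows because $\sum_{t\ge0}\gamma^t = (1-\gamma)^{-1}$ does not depend on the trajectory. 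Everything else is the standard discounted-MDP uniqueness argument.
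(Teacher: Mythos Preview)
Your final argument is the right one and matches the paper's approach: the paper's proof is a one-line appeal to the static result (Lemma~2.7 in \cite{monderer1996potential}), namely that any two potentials $\phi_1,\phi_2$ satisfying Condition~(\ref{potential_condition_static}) differ by a constant $k$, and then the extension is exactly what you write at the end --- summing $\gamma^t k$ over $t$ gives a shift $k/(1-\gamma)$ in $B^{\boldsymbol{\pi}}$ that is independent of both $s$ and $\boldsymbol{\pi}$.

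Your route to that conclusion, however, contains a detour and a false lead worth flagging. The contraction step you open with is correct but shows too much: for a \emph{fixed} $\phi$, the policy-evaluation operator $\gamma P^{\boldsymbol{\pi}}$ is a strict contraction, so $\Delta=0$ identically and there is no room for a nonzero constant. Your first attempted rescue --- that $B_1,B_2$ might arise as $\lim_k T_\phi^k V_j$ from different test functions $V_1,V_2$ --- does not work, for exactly the reason your own contraction argument supplies: Banach's fixed-point theorem makes the limit independent of the initial $V_j$. The genuine source of the constant is only the one you identify afterwards, the additive freedom in $\phi$ itself. So the contraction argument is not needed for the lemma (it merely confirms that no \emph{further} ambiguity creeps in from the dynamic programming), and the substance of the proof is just the static Monderer--Shapley uniqueness lemma plus the geometric-series computation. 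You could compress your write-up to those two sentences and it would coincide with the paper's intended argument.
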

Therefore, the set of candidate functions are limited to a family of functions that differ only by a constant. 

\textbf{Computing the Potential Function $\phi$}

Theorem \ref{reduction_theorem} requires knowledge of  $\phi$. Existing methods to find $\phi$ in PGs e.g. MPD method \cite{candogan2013near} are \textit{combinatorial} in actions and agents. Indeed, directly applying \eqref{potential_condition_static} to compute $\phi$ requires checking all deviations over pure strategies (deterministic policies) which is expensive since it involves sweeping through the joint action space $\boldsymbol{\mathcal{A}}$. 
%
We now demonstrate how to compute $\phi$  while overcoming these issues by transforming \eqref{potential_condition_static} into a differential equation. To employ standard RL methods we require parameterised policies and, in anticipation of tackling an RL setting we extend our coverage to \textit{parameterised} stochastic policies.  
%

\begin{proposition}\label{reparam_pg_ode}
In any c-SPG the following result holds $\forall s\in \mathcal{S}, \forall\boldsymbol{a}\in \mathcal{A}$,  $ \forall(\boldsymbol{\eta}^i,\boldsymbol{\eta}^{-i}) \in \boldsymbol{E}^{ps}$:
\begin{align}
\hspace{-3 mm}\mathbb{E}_{\boldsymbol{\pi}(\boldsymbol{\eta})}\left[\frac{\partial\ln{\pi_{i,\boldsymbol{\eta}^i}}}{\partial \boldsymbol{\eta}^i}\left(\frac{\partial}{\partial a^{i}}+\frac{\partial}{\partial s}\right)(R_i- \phi)(s,\boldsymbol{a})\right]=\boldsymbol{0}. \label{potential_condition_pgt_parameterised}
\end{align}
\end{proposition}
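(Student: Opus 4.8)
The plan is to derive \eqref{potential_condition_pgt_parameterised} directly from the static potential condition \eqref{potential_condition_static} by differentiating in a direction determined by the policy parameters. First I would fix an agent $i$, a state $s$, and work with the parameterised stochastic policy $\pi_{i,\boldsymbol{\eta}^i}$. The key observation is that \eqref{potential_condition_static} says the function $g_i(s,\boldsymbol{a}) := R_i(s,\boldsymbol{a}) - \phi(s,\boldsymbol{a})$ does not change when agent $i$ unilaterally varies its own action $a^i$; i.e. $\partial g_i/\partial a^i = 0$ pointwise (this is the infinitesimal form of \eqref{potential_condition_static}). Likewise, the state transitivity assumption says $g_i$ does not change when the state is varied with the joint action held fixed, so $\partial g_i/\partial s = 0$. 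Hence $(\partial/\partial a^i + \partial/\partial s)(R_i - \phi)(s,\boldsymbol{a}) = 0$ as a function on $\mathcal{S}\times\boldsymbol{\mathcal{A}}$, and therefore its expectation against \emph{any} weighting — in particular against the score-weighted measure $\frac{\partial \ln \pi_{i,\boldsymbol{\eta}^i}}{\partial \boldsymbol{\eta}^i}\,\mathrm{d}\boldsymbol{\pi}(\boldsymbol{\eta})$ — is zero. This gives \eqref{potential_condition_pgt_parameterised}.

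To make this rigorous I would proceed in the following steps. Step one: pass from the finite-difference identity \eqref{potential_condition_static} to the differential identity $\frac{\partial}{\partial a^i}(R_i-\phi)(s,\boldsymbol{a}) = 0$, which is immediate by subtracting, dividing by $a'^i - a^i$ and letting $a'^i \to a^i$ (invoking the smoothness of $R_i$ and $\phi$ from the technical assumptions in the appendix — $\mathcal{A}_i \subset \mathbb{R}^q$ is a continuum, so this is legitimate). Step two: do the analogous argument on the state-transitivity definition to get $\frac{\partial}{\partial s}(R_i-\phi)(s,\boldsymbol{a})=0$. Step three: add the two, so the bracketed quantity in \eqref{potential_condition_pgt_parameterised} vanishes identically on $\mathcal{S}\times\boldsymbol{\mathcal{A}}$. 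Step four: multiply by $\frac{\partial \ln \pi_{i,\boldsymbol{\eta}^i}}{\partial \boldsymbol{\eta}^i}$ and take the expectation $\mathbb{E}_{\boldsymbol{\pi}(\boldsymbol{\eta})}$; since the integrand is $\mathbf{0}$ almost everywhere, the integral is $\mathbf{0}$, and the identity holds for every $(\boldsymbol{\eta}^i,\boldsymbol{\eta}^{-i})\in\boldsymbol{E}^{ps}$. Alternatively, if one prefers not to invoke state transitivity in its raw form, one can use Lemma \ref{trans_lemma}'s equivalent decomposition $R_i(s,\boldsymbol{a}) = k(s)\phi(s,\boldsymbol{a}) + h_i(s,a^{-i})$ (with $k\equiv 1$ under the assumptions in play) to see directly that $R_i - \phi = h_i(s,a^{-i})$ has zero derivative with respect to both $a^i$ and — after accounting for the $k(s)\phi$ piece — the appropriate combined operator.

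I expect the main subtlety — rather than an outright obstacle — to be matching the exact form of the combined operator $\left(\frac{\partial}{\partial a^i}+\frac{\partial}{\partial s}\right)$ to what the two defining conditions actually give. The potential condition \eqref{potential_condition_static} controls $\partial/\partial a^i$ and state transitivity controls $\partial/\partial s$, but these are two separate statements; writing them as a single operator acting on $R_i - \phi$ requires that the \emph{same} potential $\phi$ works in both definitions, which is exactly the content of assuming state transitivity holds \emph{for the potential game's potential} (the paper is careful to use the same symbol $\phi$ in both definitions, so this is consistent). A secondary point to handle cleanly is the interchange of differentiation and expectation, and the fact that the score function $\frac{\partial \ln \pi_{i,\boldsymbol{\eta}^i}}{\partial \boldsymbol{\eta}^i}$ is integrable against $\boldsymbol{\pi}(\boldsymbol{\eta})$ — but since we are multiplying an identically-zero function, no delicate dominated-convergence argument is actually needed; the result is robust to these technicalities. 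The whole proposition is essentially a reformulation: it repackages the two pointwise derivative identities into a single expectation-form equation convenient for plugging into a policy-gradient / actor-critic update.
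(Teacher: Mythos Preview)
Your proposal is correct and follows essentially the same route as the paper: both arguments derive the pointwise identities $\partial_{a^i}(R_i-\phi)=0$ from \eqref{potential_condition_static} and $\partial_s(R_i-\phi)=0$ from state transitivity, then pass to the score-weighted expectation (the paper simply writes out the intermediate integral manipulations at greater length before arriving at the same conclusion). As a minor aside, the decomposition $R_i=\phi+h_i(s,a^{-i})$ you invoke in your alternative route is the content of Lemma~\ref{dummy_coord_prop}, not Lemma~\ref{trans_lemma}.
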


The PDE serves as an analogue to the PG condition \eqref{potential_condition_static} which now exploits the continuity of the action space and the fact that they agents' actions are sampled from stochastic policies. Therefore Prop. \ref{reparam_pg_ode} reduces the problem of finding $\phi$ to solving a PDE. 

So far we have considered a planning solution method that solves the game when the agents reward functions are known upfront. In Sec \ref{sec:learning}, we consider settings in which the reward functions and the transition function are a priori unknown but the agents observe their rewards with noisy feedback.

\section{Learning in c-Stochastic Potential Games} \label{sec:learning}
In RL, an agent learns to maximise its total rewards by repeated interaction with an unknown environment. The underlying problem is typically formalised as an MDP \cite{sutton2018reinforcement}.
%
MARL extends RL to a multi-player setting \cite{yang2020overview}. The underlying problem is modelled as an SG in which the rewards of each agent and transition dynamics are a priori unknown. 

We have shown the MPE of a c-SPG can be computed by solving a \textit{Markov team game} $\mathscr{M}^\dagger$, an SG in which all agents share the same reward function $\phi$. We now discuss how to solve $\mathscr{M}^\dagger$ from observed data in unknown environments (i.e. if $P,\{R_i\}$ are not known). 
%
%
Additionally, we discuss our approach to enable easy scaling in the number of agents (and avoid combinatorial complexity) using distributive methods. 

The scheme can be summarised in the following steps:

\textbf{i)} Compute the potential estimate $\hat{\phi}$ by solving the PDE in Prop. \ref{reparam_pg_ode} using a distributed supervised learning method.

\textbf{ii)} Solve the team game $\hat{\mathscr{M}}^\dagger:=\langle \mathcal{N}, \mathcal{S},\boldsymbol{\mathcal{A}},P,\hat{\phi},\gamma\rangle$ with a distributed actor-critic method. The critic is updated with a distributed variant of the fitted Q-learning method in Sec. \ref{sec:planning}. 


\subsection{Learning the Potential Function $\phi$}\label{sec:compute-phi}

Though Prop. \ref{reparam_pg_ode} reveals that $\phi$ can be found by solving a PDE, it involves evaluations in pure strategies which can be costly. Moreover, the result cannot be applied directly to estimate $\phi$ since the agents sample their rewards but not $\phi$. 

We now show how each agent can construct an approximation of $\phi$ in a way that generalises across actions and states by sampling its rewards.  First, we demonstrate how the potential condition \eqref{potential_condition_static} can be  closely satisfied using \textit{almost pure} strategies. The usefulness of this will become apparent when we solve the PDE in Prop. \ref{reparam_pg_ode} to find $\phi$.  
%
%
%
%
\begin{lemma}\label{bridge_lemma}
Let $F$ be a bounded and continuous function and let $\Delta F(s_t,(a^i_{t},a'^i_t),a^{-i}_{t}):= F(s_t,(a'^i_t,a^{-i}_{t}))- F(s_t,(a^i_{t},a^{-i}_{t}))$ 
then there exists $ c>0$ such that 
\begin{smalleralign}
\hspace{-2 mm}\left|\Delta F(s_t,(a^i_{t},a'^i_t),a^{-i}_{t})-\Delta F(s_t,(\pi_i,\pi'_i),\pi_{-i}))\right|\leq c\|F\|_\infty\hat{\sigma}_{\epsilon}^2\nonumber
\end{smalleralign}
where the policy $\pi_{i,\epsilon}$ is a \textit{nascent delta function}\footnote{
A nascent delta function $g_{\epsilon}$ has the property $\underset{\epsilon \downarrow 0}{\lim}\int_{X}g_{\epsilon} f(x)dx=f(0)$ for any function $f$. They enable pure strategies to be approximated by stochastic policies with small variance. 
We denote a nascent policy by $\pi_{\epsilon,i},\forall\epsilon>0$.} and $\hat{\sigma}^2_{\epsilon}:=\max\{\rm{Var}(\pi_i),\rm{Var}(\pi'_i)\}$.   
\end{lemma}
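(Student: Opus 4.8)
The plan is to read the claimed inequality as a quantitative refinement of the defining property of a nascent delta function, obtained from a second-order Taylor expansion of $F$. First I would unpack the shorthand: by $f^{(\pi^i,\pi^{-i})}(s)=\mathbb{E}_{\pi^i,\pi^{-i}}[f(s,a^i,a^{-i})]$,
\[
\Delta F(s_t,(\pi_i,\pi'_i),\pi_{-i}) = \mathbb{E}_{\pi'_i,\pi_{-i}}\!\big[F(s_t,(a'^i,a^{-i}))\big] - \mathbb{E}_{\pi_i,\pi_{-i}}\!\big[F(s_t,(a^i,a^{-i}))\big],
\]
and I would take $\pi_i,\pi'_i$ (and, where $a^{-i}$ is perturbed, $\pi_{-i}$) to be nascent delta policies centred at the pure actions $a^i_t,a'^i_t$ (resp. $a^{-i}_t$), so that their means are exactly those actions and their variances are $\mathrm{Var}(\pi_i),\mathrm{Var}(\pi'_i)\le\hat\sigma_\epsilon^2$ (all the nascent policies can be taken equally concentrated, governed by the single parameter $\epsilon$). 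I would invoke the smoothness hypothesis on $F$ from the standing technical assumptions — mere continuity only yields the qualitative limit as $\epsilon\downarrow 0$, not a rate — so that $F(s,\cdot)\in C^2$ with Hessian bounded on the compact action set.

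Then the core step: Taylor-expand each expectation about the pure action profile. Writing $\delta^i:=a^i-a^i_t$ for the mean-zero perturbation under $\pi_i$,
\[
\mathbb{E}_{\pi_i}\!\big[F(s_t,(a^i,a^{-i}_t))\big] = F(s_t,(a^i_t,a^{-i}_t)) + \nabla_{a^i}F(s_t,(a^i_t,a^{-i}_t))^\top\mathbb{E}_{\pi_i}[\delta^i] + \tfrac12\mathbb{E}_{\pi_i}\!\big[(\delta^i)^\top\nabla^2_{a^i}F(s_t,\xi)\,\delta^i\big].
\]
After subtracting the primed from the unprimed expression, the zeroth-order terms reproduce $\Delta F(s_t,(a^i_t,a'^i_t),a^{-i}_t)$ exactly; the first-order terms vanish because $\mathbb{E}_{\pi_i}[\delta^i]=0$ (this centring is precisely what upgrades the rate from $O(\hat\sigma_\epsilon)$ to $O(\hat\sigma_\epsilon^2)$); and the remaining second-order term is bounded in absolute value by $\tfrac12\sup\|\nabla^2_{a^i}F\|\cdot\mathrm{Var}(\pi_i)\le\tfrac12\sup\|\nabla^2 F\|\cdot\hat\sigma_\epsilon^2$. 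Carrying out the same expansion in the $a'^i$ argument — and, if $a^{-i}$ is also perturbed, in $a^{-i}$, whose mean-zero perturbation enters both terms and contributes a further $O(\mathrm{Var}(\pi_{-i}))\le O(\hat\sigma_\epsilon^2)$ — I collect the (at most three) remainders, absorb $\sup\|\nabla^2 F\|$ and the number-of-agents factor into one constant $c$, and write the bound as $c\|F\|_\infty\hat\sigma_\epsilon^2$, with $\|F\|_\infty$ read as the appropriate $C^2$-type norm that the compactness and smoothness assumptions guarantee finite. Interchanging expectations with the finite-dimensional integrals is justified by Fubini, and Taylor's theorem with Lagrange remainder supplies the bracketed terms.

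The main obstacle is not the expansion but making the quadratic rate rigorous, which rests on three points: (i) the nascent policies must be genuinely centred at the pure actions so the linear term cancels — given only a generic nascent family one would first re-centre or symmetrise it, which itself costs an $O(\hat\sigma_\epsilon^2)$ term that must be folded in; (ii) $F$ must be $C^2$ with a uniformly bounded Hessian on the relevant compact set, so one leans on the technical assumptions rather than the bare ``bounded and continuous'' hypothesis; and (iii) the cross-agent bookkeeping must ensure that the other agents' perturbation variances are genuinely subsumed by $\hat\sigma_\epsilon^2$, which holds once all nascent policies share the same concentration parameter. Everything else is routine.
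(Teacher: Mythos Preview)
Your Taylor-expansion argument is correct and is a genuinely different route from the paper's. The paper never differentiates $F$: it fixes a threshold $\gamma>0$, splits $\mathbb{E}\bigl|F(s,\pi_{i,\epsilon=0},a^{-i})-F(s,\pi_{i,\epsilon},a^{-i})\bigr|$ by the indicator of $\{|\cdot|>\gamma\}$, bounds the tail piece by $\|F\|_\infty\,\mathbb{P}(|F(s,a^i_t,a^{-i})-F(s,a^i,a^{-i})|>\gamma)$, uses (uniform) continuity to convert this into $\mathbb{P}(|a^i-a^i_t|>\delta)$ for some $\delta=\delta(\gamma)$, and then applies Chebyshev to get $\delta^{-2}\|F\|_\infty\sigma_\epsilon^2+\gamma$, finally declaring $\gamma$ arbitrary.

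The trade-off is clear. The paper's truncation--Chebyshev argument formally uses only the hypotheses actually stated in the lemma (bounded and continuous) and delivers the literal $\|F\|_\infty$ factor, but the constant hides a modulus-of-continuity term $\delta(\gamma)^{-2}$ and the ``$\gamma$ arbitrary'' step is delicate since $\delta$ shrinks with $\gamma$. Your approach requires the extra $C^2$ regularity you flag (justified, as you say, by the standing assumptions on $R_i$ and hence $\phi$, not by the lemma's own hypotheses), and the resulting constant is really $\sup\|\nabla^2 F\|$ rather than $\|F\|_\infty$; in exchange you get an explicit, mechanical bound and a transparent explanation of \emph{why} the rate is $\hat\sigma_\epsilon^2$ rather than $\hat\sigma_\epsilon$ --- namely the centring $\mathbb{E}_{\pi_i}[\delta^i]=0$ kills the linear term --- whereas in the paper's argument Chebyshev produces the square automatically and the role of centring is hidden. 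Your points (i)--(iii) about re-centring, Hessian bounds on compacta, and subsuming $\mathrm{Var}(\pi_{-i})$ into $\hat\sigma_\epsilon^2$ are exactly the right caveats and are all easily handled under the paper's standing assumptions.
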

Since the bound approaches $0$ in the limit as policies become pure strategies, the potential condition \eqref{potential_condition_pgt_parameterised} is closely satisfied in nascent stochastic policies.

We now put Lemma \ref{bridge_lemma} to use with a method to compute $\hat{\phi}$ that inexpensively solves the PDE condition  \eqref{potential_condition_pgt_parameterised} over the policy parameter space $\boldsymbol{E}$. Indeed, thanks to Lemma \ref{bridge_lemma}, we can learn $\phi$ through an optimisation over $\boldsymbol{E}$.
The method uses a PDE solver over a set of randomly sampled points across $\boldsymbol{E}\times \mathcal{S}$ using the observed data  $\{(s_k,\boldsymbol{a}_t,(r_{1,k},\ldots,r_{N,k}))\}_{k\geq 0}$ where $r_{i,k}\sim R_i(s_k,\boldsymbol{a}_k)$.\footnote{As with methods with sharing networks (e.g. COMIX,
FacMADDPG \cite{de2020deep}), agents observe other agents' rewards. The method can be performed using only each agent's data $\{(s_k,\boldsymbol{a}_t,r_{i,k})\}$, however this requires more trajectory data.}  

Therefore, define by: 
\begin{smalleralign}
g^i(s,{\boldsymbol{\eta}},\hat{\phi}):=\nabla_{ \boldsymbol{\eta}^i} \ln{\pi_{\epsilon,i}(a^i|s;\boldsymbol{\eta})}\left(\frac{\partial}{\partial a^{i}}+\frac{\partial}{\partial s}\right) [R_i- \hat{\phi}](s,\boldsymbol{a})\nonumber
\end{smalleralign}
where $\boldsymbol{\pi}_{\epsilon}(\boldsymbol{a}|s;\boldsymbol{\eta}):=\pi_{\epsilon,i}(a^i|s;\boldsymbol{\eta}^i)\pi_{\epsilon,-i}(a^{-i}|s;\boldsymbol{\eta}^{-i})$. 

Following Prop. \ref{reparam_pg_ode} we consider the following problem to compute $\hat{\phi}$:
\begin{align}
\underset{\boldsymbol{\rho}\in\mathbb{R}^k}{\min}\|G(s,{\boldsymbol{\eta}};\hat{\phi}_{\boldsymbol{\rho}})    
\|^2_{\boldsymbol{E}\times \mathcal{S},\nu},\label{potential_estimation_obj}
\end{align}
%
where   $G(s,\boldsymbol{\eta};\hat{\phi}):=[g^1(s,\boldsymbol{\eta};\hat{\phi}),\ldots,
g^N(s,{\boldsymbol{\eta}};\hat{\phi})]^T$ and $\|f(y)\|^2:=\int_Y|f(y)|^2\nu(y)dy$ and $\nu(y)$ is a positive probability density on $Y$ and $\boldsymbol{\rho}\in\mathbb{R}^k$ are parameters.
The optimisation performs evaluations in mixed strategies which is computationally inexpensive. Using a weighted exponential sum method \cite{CHANG2015325}, the objective reduces to a least squares problem on a single objective $G(s,\boldsymbol{\eta},\boldsymbol{\rho}):=N^{-1}\sum_{i\in\mathcal{N}}\left(g^i(s,\boldsymbol{\eta},\hat{\phi}_{\boldsymbol{\rho}^i})\right)^2 $. The optimisation can be solved with a function approximator on $\hat{\phi}_{\boldsymbol{\rho}}$ e.g. a deep neural network (NN). 
Under mild conditions \cite{bertsekas2000gradient} the method converges to a critical point of $G$, that is $
\underset{n\to\infty}{\lim}\nabla_{\boldsymbol{\rho}} G=0$.
We defer the details of the method to the the Appendix.

%



\subsection*{Actor-Critic Method}
We now return to tackling the problem of solving the team game $\hat{\mathscr{M}}^\dagger:=\langle \mathcal{N}, \mathcal{S},\boldsymbol{\mathcal{A}},P,\hat{\phi},\gamma\rangle$. To enable the method to scale and handle continuous actions, we adapt the fitted Q-learning method in Sec. \ref{sec:planning} to an actor-critic method \cite{konda2000actor} for which each agent learns its own policy $\pi_{i,\eta_i}$ using the estimate of $B$. 
The policy parameter $\eta_i$  of the policy at the $k^{th}$ iteration is updated through sampled deterministic policy gradients (DPGs)  \cite{silver2014deterministic}:
\begin{align}\nonumber
&\nabla \eta_i \hat{B}^{(\pi^k_{i,\eta_i},\pi^k_{-i,\eta_{-i}})}\left(s_{l_k}\right)   
\\&\hspace{-4 mm}\approx \frac{1}{L} \sum_{l=1}^L \nabla_{\eta_i} \pi_{i,\eta_i}\left(\cdot|s_{l_k}\right) \nabla_{a^{i}_{l_k}} F_{k}\left(s_{l_k}, \boldsymbol{a}_{l_k}\right)\Big|_{a^{i}_{l_k}\sim\pi^k_{i,\eta_i}}
    \label{actor_update}
\end{align}

Equation \eqref{actor_update} describes the actor update via a DPG. The complete process is described in Algorithm 1. 
It
 involves two optimisations in sequence: the agents individually compute the approximation $\hat{\phi}$ which is then used for computing $\hat{B}$, which approximates the optimal value function $B$ by a Q-learning + decentralised DPG method and outputs each agent's MPE policy. Crucially the method avoids optimisations over the joint space $\times_{i\in\mathcal{N}}\mathcal{A}_i$  enabling easy scaling (in the number of agents) in this component of the algorithm.
 


 \begin{algorithm}[ht!]
\begin{algorithmic}[1]
\INPUT discount factor $\gamma$, DPF and PF approximation maps $F,P_{\boldsymbol{\rho}}\in\mathcal{H}$ (resp.) ($\boldsymbol{\rho}\in\mathbb{R}^k$).
    \FOR{$k \in {0,1,2,\cdots}$} 
    	\STATE Using $(\pi_i^k)_{i \in \mathcal{N}}$ to rollout, collect the trajectory data  and save it in the buffer  $\mathcal{D}$. 
    	\FOR{$i \in \mathcal{N}$}
\STATE{\textbf{// Learn the potential function $\hat{\phi}$}}
    	    	\STATE Sample a random minibatch of $L$ samples $\{(s_{l_t},(a^i_{l_t})_{i \in \mathcal{N}},s_{l_{t+1}},(r^i_{l_t})_{i \in \mathcal{N}}\}$ from $\mathcal{D}$. 
\STATE  Compute $\hat{\phi}$ by solving Expression \eqref{potential_estimation_obj}
\STATE{\textbf{// Compute the value function $\hat{B}$}}
\STATE Set the local target $Y_{l_k}$ by \eqref{spot_q_target}
\STATE Update the shared critic $F$ by minimizing Eq. \eqref{critic_update}
\STATE{\textbf{// Learn the individual policy}}
\STATE Update the policy by minimizing Eq. \eqref{actor_update}
    	\ENDFOR
    \ENDFOR
\OUTPUT $F, (\pi_i^k)_{i \in \mathcal{N}}$.
\end{algorithmic}
\label{algo:main_algo1}
\caption{$\ourmethod$: \textbf{S}tochastic \textbf{POT}ential \textbf{A}ctor-\textbf{C}ritic}
\end{algorithm}

\textbf{Scaling in $N$ using Consensus Optimisation}\\
Although the above method represents progress for solving SGs, a scalability issue remains since estimating $\hat{\phi}$ involves a computation over the joint space $\boldsymbol{E}$. This becomes increasingly expensive with large numbers of agents. We now devise a fully distributed version of the method that scales with the number of agents. In this version, each agent $i$ constructs an independent estimate of $\hat{\phi}$ by sampling across $E_i\times\mathcal{S}$ at each step using only its own observed data $\{(s_k,\boldsymbol{a}_t,r_{i,k})\}_{k\geq 0}$. The method includes a consensus step that enables $\hat{\phi}$ (and hence $\hat{B})$ to be accurately computed efficiently in a fully distributed fashion \cite{tutunov2019distributed}.

To enable efficient scaling with the number of agents, we use distributed optimisation (DO) with consensus  \cite{nedic2009distributed} to find $\hat{\phi}$. Each agent produces its own estimate $\hat{B}$ based on its observed rewards. DO methods efficiently solve large scale optimisation problems \cite{macua2010consensus} and yields two major benefits:\newline\textbf{i) efficiency:} computing $\phi$ uses feedback from \textit{all} agents' reward samples.\\ \textbf{ii) consensus on Q:} agents learn $\phi$ distributively but have identical Q iterates (for computing $\hat{B}$). 

The common objective which each agent solves individually, is expressed with a set of local variables $\{\rho_i\}_{i\in\mathcal{N}}$ and a common global variable $z$:

${\rm minimise }\;\; \bar{G}(s,\boldsymbol{\rho})=N^{-1}\sum_{i\in\mathcal{N}}(g^i(s,\boldsymbol{\rho}^i))^2
$

s.t. $
\boldsymbol{\rho}^i-z=0, \quad i=1,\ldots, N,$\\
where the gradient descent is according to: $
\boldsymbol{\rho}^i_n=\boldsymbol{\rho}^i_{n-1}-\alpha\nabla g^i(s,\boldsymbol{\rho}^i)$ for some step size $\alpha>0$.
%
%
Note that the  constraint prevents convergence to any $R_i$.

The algorithm works by constructing an estimate $\hat{\phi}$ then solving $\mathscr{M}^\dagger$ in a \textit{distributed fashion} allowing the method to scale with the number of agents. 

\textbf{Algorithm Analysis}

 Our $\ourmethod$ algorithm inherits many useful properties of Q-learning \cite{antos2008fitted}.\footnote{By Prop. 5 (see Appendix) any MPE is a \textit{local} optimum of $B$.} 
 Nevertheless, it is necessary to ensure the output of the algorithm still yields good performance when the supervised learning approximation of $\mathcal{M}^\dagger$ has small errors. We now analyse the $\ourmethod$ algorithm and show that provided errors in approximating $\mathcal{M}^\dagger$ are small the error in the algorithm output is also small. 
 
 Our first result bounds the error on the estimate for the DPF from using the approximation method for $\hat{\phi}$.  

\begin{proposition}\label{SL_convergence}
Define by the following $
F_i(s,\boldsymbol{\eta},\rho):=\int_{\boldsymbol{\mathcal{A}}}\boldsymbol{\pi}_\epsilon(\boldsymbol{da},\boldsymbol{\eta},s)\frac{\partial}{\partial \eta_i}\pi_{i,\epsilon}(a^i,\eta_i,s)\nabla F_\rho(s,\boldsymbol{a})$ and $
U(s,\boldsymbol{\eta},\rho):=\int_{\boldsymbol{\mathcal{A}}}\boldsymbol{\pi}_\epsilon(\boldsymbol{da},\boldsymbol{\eta},s)\frac{\partial}{\partial \eta_i}\pi_{i,\epsilon}(a^i,\eta_i,s)\nabla R_i(s,\boldsymbol{a})$ then the following bound holds for some $c>0$:
%
\begin{align*}
\sum_{i\in\mathcal{N}}\left\|F_i(s,\boldsymbol{\eta},\rho)-U(s,\boldsymbol{\eta})\right\|\leq cN^2\epsilon^2,
\end{align*}
where $\epsilon$ is the approximation error from the SL procedure.
\end{proposition}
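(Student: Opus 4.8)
The plan is to recognise that $F_i(s,\boldsymbol\eta,\rho)-U(s,\boldsymbol\eta)$ is, up to a bounded reweighting, exactly the $i$-th component of the PDE residual that the supervised-learning step \eqref{potential_estimation_obj} is trained to annihilate, and then to propagate the training tolerance $\epsilon$ (together with the nascent-policy smoothing bound) through a sum over agents. Concretely, I would first note that $F_i$ and $U$ share the same kernel $\boldsymbol\pi_\epsilon(\boldsymbol{da},\boldsymbol\eta,s)\,\tfrac{\partial}{\partial\eta_i}\pi_{i,\epsilon}(a^i,\eta_i,s)$, so
\[
F_i(s,\boldsymbol\eta,\rho)-U(s,\boldsymbol\eta)=\int_{\boldsymbol{\mathcal A}}\boldsymbol\pi_\epsilon(\boldsymbol{da},\boldsymbol\eta,s)\,\frac{\partial}{\partial\eta_i}\pi_{i,\epsilon}(a^i,\eta_i,s)\,\nabla\big(F_\rho-R_i\big)(s,\boldsymbol a),
\]
and, writing $\tfrac{\partial}{\partial\eta_i}\pi_{i,\epsilon}=\pi_{i,\epsilon}\,\tfrac{\partial}{\partial\eta_i}\ln\pi_{i,\epsilon}$ and reading $\nabla$ as the action-plus-state derivative operator appearing in \eqref{potential_condition_pgt_parameterised}, this equals $-\,\mathbb{E}_{\boldsymbol a\sim\boldsymbol\pi_\epsilon}\!\big[\pi_{i,\epsilon}(a^i\mid s;\boldsymbol\eta)\,g^i(s,\boldsymbol\eta;F_\rho)\big]$ with $g^i$ the integrand of the potential PDE evaluated at the \emph{learned} potential $\hat\phi\equiv F_\rho$. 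Since $\pi_{i,\epsilon}\in[0,1]$, this gives the pointwise estimate $\|F_i-U\|\le\|g^i(\,\cdot\;;F_\rho)\|$ in the relevant norm.

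Second, I would bound $\|g^i(\,\cdot\;;F_\rho)\|$ by two contributions. By Prop.~\ref{reparam_pg_ode}, if $F_\rho$ were the true potential $\phi$ and the policies were pure strategies, this residual vanishes identically; the remaining error therefore decomposes into (a) the smoothing error from using nascent (low-variance stochastic) policies instead of pure strategies, and (b) the gap $F_\rho-\phi$. Contribution (a) is controlled by Lemma~\ref{bridge_lemma}, which gives a bound of order $\hat\sigma_\epsilon^2$, i.e. quadratic in the nascent variance. Contribution (b) is controlled by the defining property of the SL step: the weighted-exponential-sum scalarisation drives $\bar G(s,\boldsymbol\rho)=N^{-1}\sum_{i\in\mathcal N}\big(g^i(s,\boldsymbol\eta;\hat\phi_{\rho^i})\big)^2$ down to the approximation floor $\epsilon^2$, so that $\|g^i\|^2=O(N\epsilon^2)$ for each $i$ after unpacking the $N^{-1}$. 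Compactness of $\boldsymbol E\times\mathcal S$, boundedness below of the sampling density $\nu$, and the smoothness/boundedness assumptions on $\mathcal H$, $\pi_\epsilon$ and $\{R_i\}$ (stated in the Appendix) would be used to upgrade the $L^2(\nu)$ control of the residual to a uniform-in-$(s,\boldsymbol\eta)$ bound and to absorb the $O(\hat\sigma_\epsilon^2)$ term into the same order as $\epsilon^2$.

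Third, I would sum over $i\in\mathcal N$: $\sum_{i\in\mathcal N}\|F_i-U\|\le\sum_{i\in\mathcal N}\|g^i(\,\cdot\;;F_\rho)\|$, and collect the factors of $N$ — one from the scalarisation weight $N^{-1}$, one from the $N$-fold sum over agents, and the residual dependence of the joint-action gradients $\nabla(F_\rho-R_i)$ on all $N$ agents' coordinates — to arrive at the claimed rate $\sum_{i\in\mathcal N}\|F_i-U\|\le cN^2\epsilon^2$.

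I expect the main obstacle to be the careful bookkeeping of the $N$-dependence: one must verify that the scalarisation of the multi-objective \eqref{potential_estimation_obj} together with the joint-action dependence of $\nabla R_i$ and $\nabla F_\rho$ does not compound to a rate worse than $N^2$, and that the per-agent residuals are bounded uniformly rather than merely on average. A secondary technical point is the generalisation step — passing from ``small $L^2(\nu)$ (or empirical) residual'' to ``small residual at the evaluation point $(s,\boldsymbol\eta)$'' — which is precisely where the Appendix regularity assumptions (compact parameter space, Lipschitz rewards, a sufficiently expressive hypothesis class $\mathcal H$, density $\nu$ bounded below) are essential; without them the pointwise estimate in Step~1 would not follow from the least-squares fit.
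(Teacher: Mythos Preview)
Your proposal shares the paper's central insight --- write $F_i-U$ as an integral against $\nabla(F_\rho-R_i)$, insert $\pm\phi$, and kill $\nabla(\phi-R_i)$ by potentiality so that only the learned residual $\nabla(F_\rho-\phi)$ survives --- but the execution differs. The paper works directly in $L^2(\nu_1)$: it writes $\int_\Omega\sum_i|F_i-U|^2\,d\nu_1$, applies H\"older with auxiliary exponents $q_i,r,r_1,r_2$ (using polynomial growth of the integrands), substitutes $\nabla(F_\rho-R_i)\to\nabla(F_\rho-\phi)$ via the \emph{pointwise} identity $\partial_{a^i}R_i=\partial_{a^i}\phi$ (established inside the proof of Prop.~\ref{reparam_pg_ode}), and then invokes the gradient-descent convergence bound (their cited Lemma~6 in \cite{bertsekas2000gradient}) to replace $|\nabla(F_\rho-\phi)|$ by $\epsilon$; the two factors of $N$ come from pulling an $N$ out of the first H\"older factor and the $\sum_{i\in\mathcal N}$ in the second. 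Your route through the PDE residual $g^i$ and the scalarised objective $\bar G$ reaches the same destination but is more circuitous on one point: your ``contribution (a)'' via Lemma~\ref{bridge_lemma} is unnecessary, because the paper never passes through pure strategies here --- the identity $\nabla R_i=\nabla\phi$ holds for every $(s,\boldsymbol a)$, so the nascent-policy smoothing error simply does not appear. Dropping that detour, your argument is a legitimate (and arguably more transparent) alternative to the paper's H\"older machinery, though your $N$-bookkeeping is looser than the paper's explicit two-factor accounting.
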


Our next result ensures that if the estimates of $\phi$ have only small errors, $\ourmethod$  generates policy performances that closely match that of the MPE policies.  


\begin{proposition}\label{Lipschitz_potential_itself}
Define by $B_\epsilon^{\boldsymbol{\tilde{\pi}}}=\underset{k\to\infty}{\lim}T_{\phi_\epsilon}^kB^{\boldsymbol{\pi}}$ and let the policy  $\boldsymbol{\hat{\pi}}$ be an MPE strategy i.e. $\boldsymbol{\hat{\pi}}\in NE\{\mathcal{G}\}$  (so that  $\underset{k\to\infty}{\lim}T_\phi^kB^{\boldsymbol{\pi}}=B^{\boldsymbol{\hat{\pi}}}$) then for any $\epsilon>0$ the following holds:
\begin{align*}
    \| B^{\boldsymbol{\hat{\pi}}}-B_\epsilon^{\boldsymbol{\tilde{\pi}}}\|
\leq (2-\gamma)(1-\gamma)^{-1}\epsilon,
\end{align*}
whenever $\|\phi^\epsilon- \phi\|<\epsilon$.  
\end{proposition}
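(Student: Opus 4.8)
The plan is to view the dual MDP $\mathscr{M}^\dagger=\langle\phi,\times_{i\in\mathcal N}\mathcal A_i,P,\mathcal S,\gamma\rangle$ and its perturbed version $\mathscr{M}^\dagger_\epsilon=\langle\phi^\epsilon,\times_{i\in\mathcal N}\mathcal A_i,P,\mathcal S,\gamma\rangle$ (with $\phi_\epsilon\equiv\phi^\epsilon$) as two MDPs sharing the same kernel $P$ and discount $\gamma$, and to exploit that, by Theorem~\ref{reduction_theorem}, $B^{\boldsymbol{\hat\pi}}$ and $B_\epsilon^{\boldsymbol{\tilde\pi}}$ are the (unique) fixed points of the Bellman optimality operators $T_\phi$ and $T_{\phi^\epsilon}$. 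Two elementary facts drive the argument. First, under the paper's standing assumptions ($D$ compact, rewards bounded) $\phi$ and $\phi^\epsilon$ are bounded, so $T_\phi$ and $T_{\phi^\epsilon}$ carry bounded functions on $\mathcal S$ to bounded functions and are $\gamma$-contractions in $\|\cdot\|_\infty$ (Blackwell), whose fixed points are exactly the limits $B^{\boldsymbol{\hat\pi}}=\lim_kT_\phi^kB^{\boldsymbol\pi}$ and $B_\epsilon^{\boldsymbol{\tilde\pi}}=\lim_kT_{\phi^\epsilon}^kB^{\boldsymbol\pi}$ named in the statement. Second, the two operators differ only through the reward term appearing inside a supremum, and $x\mapsto\sup_{\boldsymbol a}x(\boldsymbol a)$ is $1$-Lipschitz for $\|\cdot\|_\infty$; hence for every bounded $F$ and every $s\in\mathcal S$, $|[T_\phi F](s)-[T_{\phi^\epsilon}F](s)|\le\sup_{\boldsymbol a\in\boldsymbol{\mathcal A}}|\phi(s,\boldsymbol a)-\phi^\epsilon(s,\boldsymbol a)|\le\|\phi^\epsilon-\phi\|_\infty<\epsilon$, i.e. $\|T_\phi F-T_{\phi^\epsilon}F\|_\infty<\epsilon$ uniformly in $F$.

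Combining the two facts by the standard perturbed-fixed-point estimate: inserting $T_\phi B_\epsilon^{\boldsymbol{\tilde\pi}}$ and using $B^{\boldsymbol{\hat\pi}}=T_\phi B^{\boldsymbol{\hat\pi}}$, $B_\epsilon^{\boldsymbol{\tilde\pi}}=T_{\phi^\epsilon}B_\epsilon^{\boldsymbol{\tilde\pi}}$,
\begin{align*}
\|B^{\boldsymbol{\hat\pi}}-B_\epsilon^{\boldsymbol{\tilde\pi}}\|_\infty
&\le\|T_\phi B^{\boldsymbol{\hat\pi}}-T_\phi B_\epsilon^{\boldsymbol{\tilde\pi}}\|_\infty+\|T_\phi B_\epsilon^{\boldsymbol{\tilde\pi}}-T_{\phi^\epsilon}B_\epsilon^{\boldsymbol{\tilde\pi}}\|_\infty\\
&\le\gamma\|B^{\boldsymbol{\hat\pi}}-B_\epsilon^{\boldsymbol{\tilde\pi}}\|_\infty+\epsilon ,
\end{align*}
and rearranging gives $\|B^{\boldsymbol{\hat\pi}}-B_\epsilon^{\boldsymbol{\tilde\pi}}\|_\infty\le\epsilon/(1-\gamma)$, which is already at least as strong as the claimed $(2-\gamma)(1-\gamma)^{-1}\epsilon$ since $2-\gamma\ge1$. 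If one prefers to trace the stated constant while also exposing the policy-performance reading, I would instead triangulate through $B_\phi^{\boldsymbol{\tilde\pi}}$, the return of the perturbed-optimal policy $\boldsymbol{\tilde\pi}$ evaluated under the \emph{true} potential $\phi$: from the series form $B^{\boldsymbol\pi}_g(s)=\mathbb E[\sum_{t\ge0}\gamma^tg(s_t,\boldsymbol a_t)\mid\boldsymbol a_t\sim\boldsymbol\pi]$ applied with $g\in\{\phi,\phi^\epsilon\}$ along the common trajectory law of $\boldsymbol{\tilde\pi}$ one gets $\|B_\phi^{\boldsymbol{\tilde\pi}}-B_\epsilon^{\boldsymbol{\tilde\pi}}\|_\infty\le\epsilon/(1-\gamma)$, and a one-step comparison that uses optimality of $\boldsymbol{\hat\pi}$ under $T_\phi$ together with a single application of the uniform bound $\|T_\phi F-T_{\phi^\epsilon}F\|_\infty<\epsilon$ controls $\|B^{\boldsymbol{\hat\pi}}-B_\phi^{\boldsymbol{\tilde\pi}}\|_\infty$ by $\epsilon$; adding the two pieces produces $\epsilon+\epsilon/(1-\gamma)=(2-\gamma)(1-\gamma)^{-1}\epsilon$.

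The only genuinely delicate point is fact (i) in the continuous-state/continuous-action regime: one must check that the suprema defining $T_\phi,T_{\phi^\epsilon}$ are well-behaved (attained, or handled via $\delta$-maximisers) so that these maps contract on a complete space and their fixed points are precisely the limits appearing in the statement. This is exactly where I would lean on Theorem~\ref{reduction_theorem} — which already supplies existence of a pure-strategy MPE and the identification $\lim_kT_\phi^kB^{\boldsymbol\pi}=\sup_{\boldsymbol{\hat\pi}}V^{\boldsymbol{\hat\pi}}=B^{\boldsymbol{\hat\pi}}$ — together with the standing measurability/compactness assumptions, rather than re-deriving them; Lemma~\ref{difference_constant_lemma} is a harmless aside here, since for fixed $\gamma<1$ the contraction fixed point is unique anyway. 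Everything after fact (i) is the two-line manipulation above, so I expect the write-up to be short.
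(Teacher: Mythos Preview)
Your main argument is correct and is essentially the same idea as the paper's: both rely on (i) the $\gamma$-contraction of the Bellman operator and (ii) the uniform estimate $\|T_\phi F-T_{\phi^\epsilon}F\|_\infty\le\|\phi-\phi^\epsilon\|_\infty<\epsilon$. The paper unrolls the recursion $\|T_\phi^kF-T_{\phi^\epsilon}^kF'\|\le\epsilon+\gamma\|T_\phi^{k-1}F-T_{\phi^\epsilon}^{k-1}F'\|$ explicitly and passes to the limit, whereas you apply the perturbed-fixed-point inequality directly at the fixed point; your route is a touch cleaner and in fact yields the sharper constant $\epsilon/(1-\gamma)$, which implies the stated $(2-\gamma)(1-\gamma)^{-1}\epsilon$.

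One small caveat: your \emph{alternative} triangulation through $B_\phi^{\boldsymbol{\tilde\pi}}$ to recover the exact constant is not quite right as sketched. The step ``a one-step comparison \ldots controls $\|B^{\boldsymbol{\hat\pi}}-B_\phi^{\boldsymbol{\tilde\pi}}\|_\infty$ by $\epsilon$'' does not follow from a single application of $\|T_\phi F-T_{\phi^\epsilon}F\|_\infty<\epsilon$; the standard argument (compare $B^{\boldsymbol{\hat\pi}}\le B_\epsilon^{\boldsymbol{\hat\pi}}+\epsilon/(1-\gamma)\le B_\epsilon^{\boldsymbol{\tilde\pi}}+\epsilon/(1-\gamma)\le B_\phi^{\boldsymbol{\tilde\pi}}+2\epsilon/(1-\gamma)$) only gives $2\epsilon/(1-\gamma)$ for that piece. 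This is harmless since your primary fixed-point argument already delivers a bound at least as strong as the proposition's, so you should simply drop the alternative paragraph or present it only as intuition.
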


The result ensures that given close approximations of $\phi$ $\ourmethod$ in turn yields outputs close to $B^\star$. The result exploits the fact that the dual MDP $\mathcal{M}$ of Theorem \ref{reduction_theorem}  exhibits a continuity property so that small errors in the approximation of $\phi$ and $B$ incur only small changes to the MPE of $\mathcal{G}$.

\section{Experiments}  \label{sec:experiments}
We evaluate $\ourmethod$ in three popular multi-agent environments: the particle world \cite{lowe2017multi}, a network routing game \cite{roughgarden2007routing} and a Cournot duopoly problem \cite{agliari2016nonlinear}. 
These environments have continuous action and state spaces, and the agents seek to maximise their own interest e.g. reaching target without collisions on particle world and minimising the cost for transporting commodity on routing game. 
To solve these problems successfully, the agents must learn Markov perfect equilibrium policies in order to respond optimally to the actions of others.

We consider two groups of state-of-the-art MARL baselines that handle continuous actions. The first group use individual rewards for learning: MADDPG \cite{lowe2017multi} and DDPG \cite{lillicrap2015continuous}. The second group use the collective rewards of all agents: COMIX and COVDN \cite{de2020deep}. Further details are in the Appendix.

We use two evaluation metrics: \\\textbf{Exploitability} \cite{davis2014using} describes how much additional payoff players can achieve by playing a best-response policy $\operatorname{BR}$ (defined in the Appendix). It measures the proximity of the agents' policies to the MPE strategy, defined as $\delta = \frac{1}{N}\sum_i (u_i(\pi^{-i}, \operatorname{BR}(\pi^{-i}))- u_i(\pi))$. \\
\textbf{Social welfare} is the collective reward of all agents: $r=\sum_{i=1}^N r_i$, this is most relevant in tasks such as team games, where agents seek to maximise total reward.
\subsubsection*{Ablation Studies}
To test the robustness of $\ourmethod$ and the baselines, we perform a set of ablation studies within routing games. \newline\textbf{Ablation 1} analyses $\ourmethod$ in SGs that progressively deviate from c-SPGs, showing that $\ourmethod$ can handle SGs that mildly violate the c-SPG conditions (i.e. the potentiality requirement).\newline
\textbf{Ablation 2} analyses $\ourmethod$ in SGs that progressively deviate from team games but retain the potential game property. We demonstrate that, unlike other methods, $\ourmethod$ is able to converge to the Markov Perfect Equilibrium in non-cooperative SGs.
We also report results on the classic Cournot Duopoly and show convergence of $\ourmethod$ to NE. 

\textbf{Non-atomic Routing Games}
involve a set of $N$ selfish agents seeking to transport their commodity from a source node to a goal node in a network. This commodity can be divided arbitrarily and sent between nodes along edges.
 
At each time step, each agent has a distribution of commodity over the nodes of the graph. It assigns a fraction of its commodity in each node to travel along the edges emerging from those nodes. There are multiple agents (given by $N\in\{2,4,6,8\}$), using the same network (number of nodes $K\in\{20,40\}$) and agents pay a cost related to the total congestion of every edge at each time step. We design the game so that the MPE is socially efficient, i.e. playing an MPE strategy leads to high individual returns.  We repeat the experiments for 5 independent runs and report the mean and standard deviation of the rewards. Further details on the settings can be found in Appendix. 

\subsubsection*{Results}
\textbf{Exploitability:}
We test $\ourmethod$ in a simple Braess' paradox game. The exploitability of $\ourmethod$ (Fig. \ref{fig:exploit}) quickly converges to close to $0$, indicating it learns NE policies (negative values are due to the fact that we are approximating best-responses). In contrast, the high exploitability values of existing MARL methods indicate that they fail to converge to NE policies. The algorithms that involve reward sharing (COMIX, COVDN) attempt to maximise social welfare, which is incompatible with this non-cooperative setting, so can be exploited by a best-response strategy.

\begin{figure}[t!]
	\centering
	\subfigure[ 
	$\ourmethod$'s policy on Braess' paradox 
	]{
		 \includegraphics[width=0.7\columnwidth]{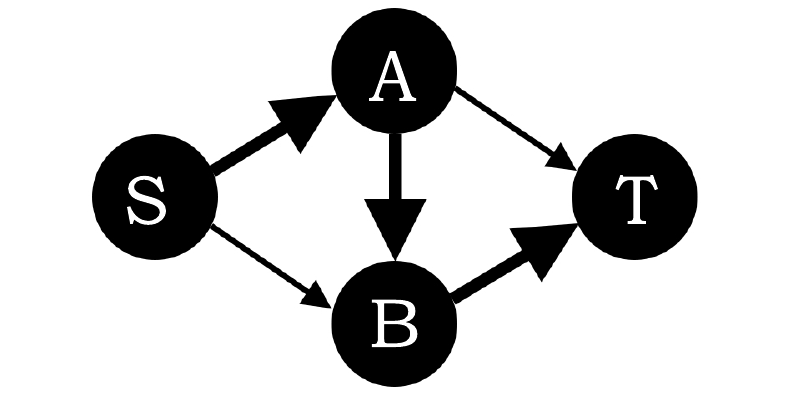}
	}
	\subfigure[Exploitability of different methods.]{
		 \includegraphics[width=0.8\columnwidth]{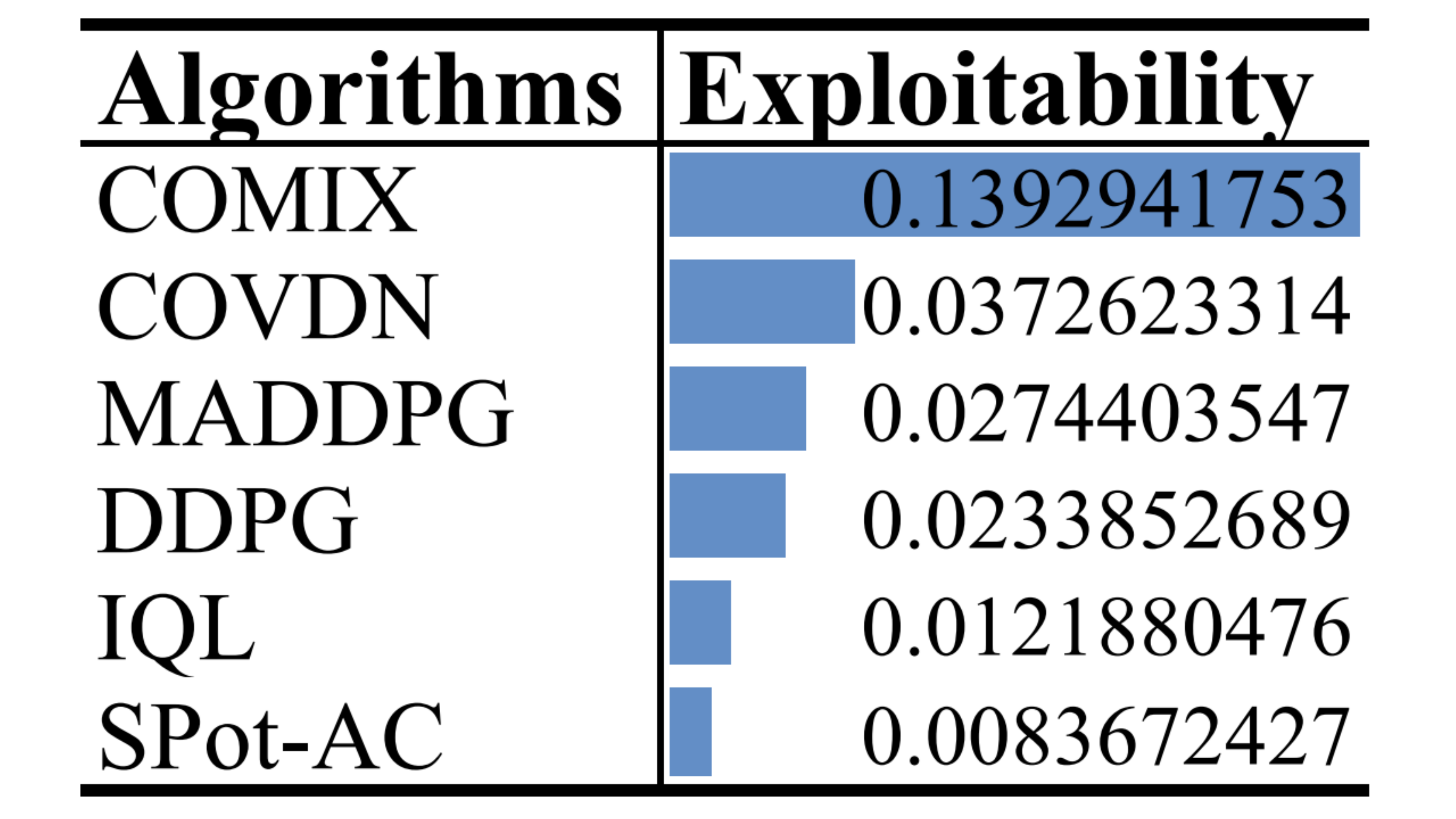}
	}
	\vspace{-5pt}
	\caption{Results of exploitability. (a) a visualization of learned policy flows by $\ourmethod$. (b) Exploitability results of all methods. }
	\label{fig:exploit}\vspace{-5pt}
\end{figure}

\textbf{Social welfare:} In the cooperative, non-atomic routing game environment, we see in Fig. \ref{fig:selfish-routing-res-main}, using $\ourmethod$ (orange), each agent learns how to split their commodity $G$ in a way that maximises rewards (minimises costs) and matches the shared reward baselines. Conversely, MADDPG (orange) and DDPG (blue) yield low rewards with high variance.

\begin{figure}[h]
\begin{center}
    \includegraphics[width=8.25cm, height=8cm]{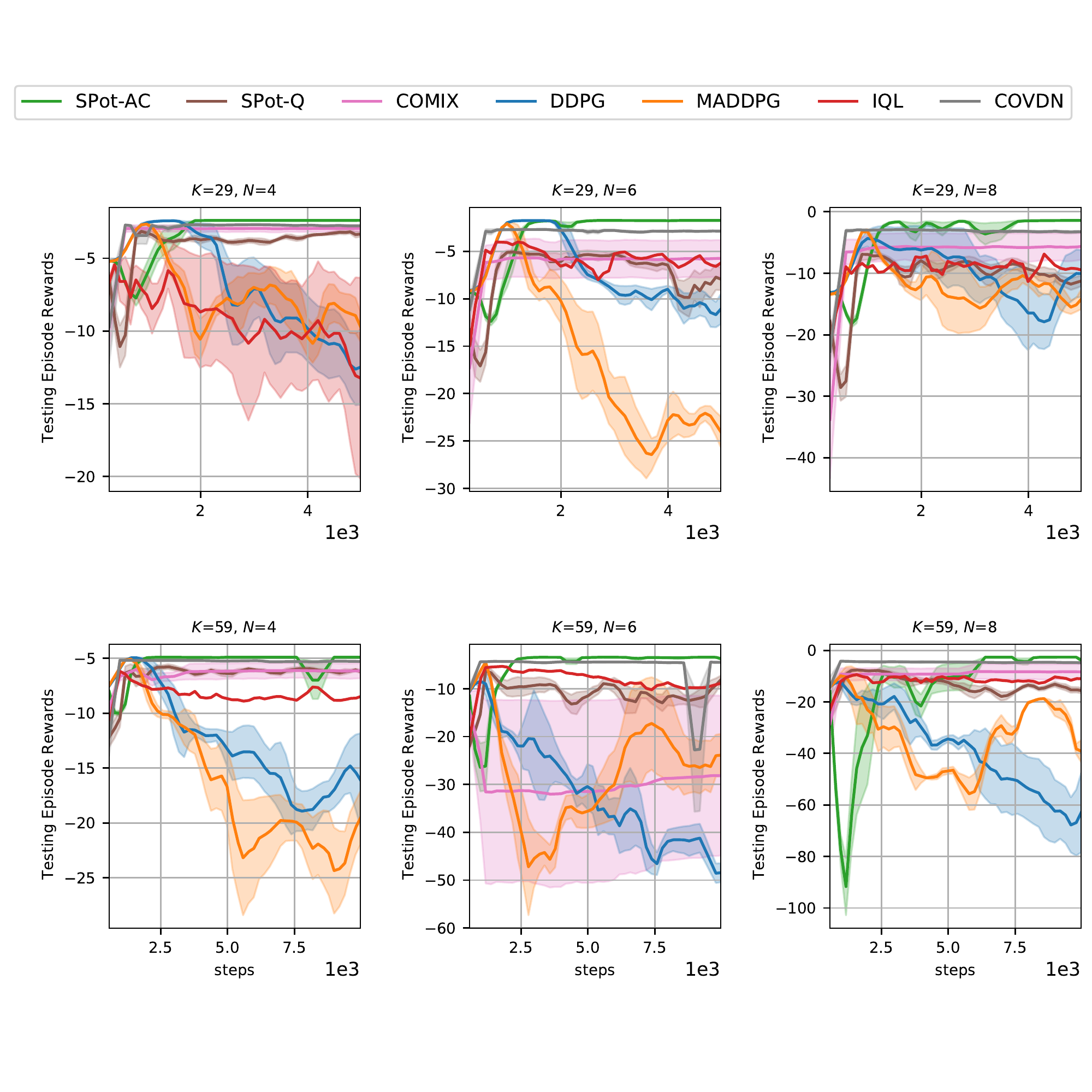}
\end{center}
\vspace{-25pt}
   \caption{\textbf{Top} Average agent returns in Network with $N=4, 6, 8$ agents, $K=20$ nodes. \textbf{Bottom} Average agent returns in Network with $N=4, 6, 8$ agents, $K=59$ nodes. 
   }
\label{fig:selfish-routing-res-main}\vspace{-2 mm}
\end{figure}

\textbf{Coordination Navigation}
An OpenAI Multi Agent Particle Environment task \cite{lowe2017multi} involves $n$ agents and $n$ landmarks.
Each agent must reach the target while avoiding collisions with other agents and fixed landmarks. Agents can observe the relative positions of other agents and landmarks,
and have five actions \{up, down, left, right, stay\}. 
The reward is calculated as the agent's distance to each landmark with penalties for collisions with other agents.

This is a non-cooperative SG, so we compare $\ourmethod$ to DDPG and MADDPG, algorithms that are able to learn policies in which agents can act selfishly. We perform the exploitability analysis as above. Fig. \ref{fig:spread_3agent} shows $\ourmethod$ achieves the best performance in terms of minimum distance to target \textit{and} number of collisions, demonstrating that $\ourmethod$ enables agents to learn to coordinate while pursuing their own goals.

\begin{figure}[h]
\begin{center}\vspace{-2 mm}
   \hspace{-5 mm}\includegraphics[width=0.35\textwidth]{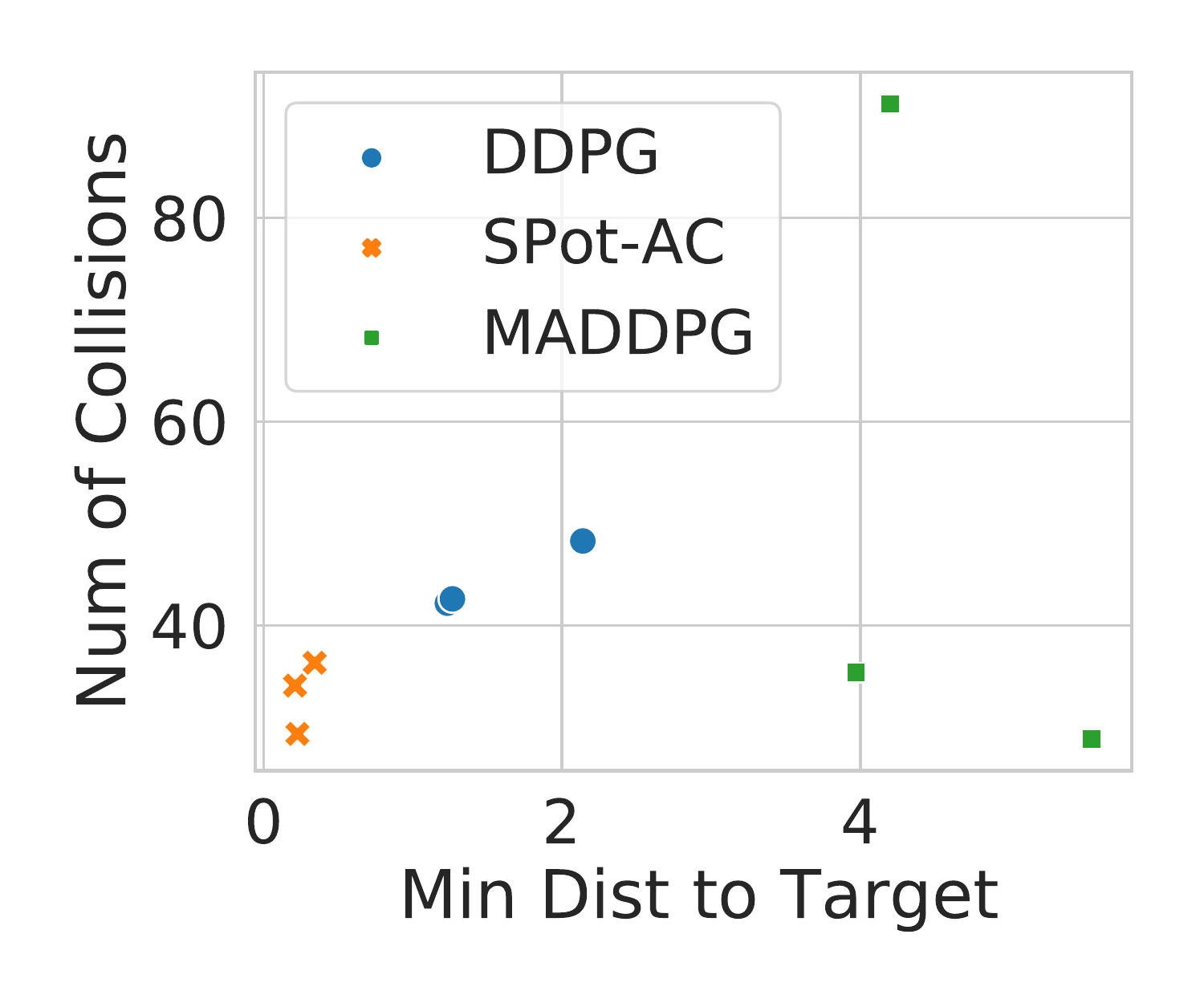}
\end{center}\vspace{-8 mm}
   \caption{Coordination Navigation problem.%
   }
\label{fig:spread_3agent}
\vspace{-9pt}
\end{figure}
 
\section{Conclusion}
\vspace{-.25 mm}
In this paper, we describe the first MARL framework that tackles MAS with payoff structures beyond zero-sum or team games. In doing so, the results we establish pave the way for a new generation of solvers that are able to tackle classes of SGs beyond cases in which the payoff structures lie at extremes. Therefore, the results of this paper open the door for MARL techniques to address a wider range of multi-agent scenarios.  By developing theory that shows a class of SGs, namely c-SPGs have a dual representation as MDPs, we showed that c-SPGs can be solved by MARL agents using a novel distributed method which avoids the combinatorial explosion therefore allowing the solver to scale with the number of agents. We then validated our theory in experiments in previously unsolvable scenarios showing our method successfully learns MPE policies in contrast to existing MARL methods. 


\section*{Acknowledgements}
YW and ZW are partly supported by the Strategic Priority Research Program of Chinese Academy of Sciences, Grant No.XDA27000000. DM is grateful to Mohammed Amin Abdullah for insightful discussions relating to the convergence proofs in this work.

\clearpage
\balance
\bibliography{main}

\begin{thebibliography}{64}
\providecommand{\natexlab}[1]{#1}
\providecommand{\url}[1]{\texttt{#1}}
\expandafter\ifx\csname urlstyle\endcsname\relax
  \providecommand{\doi}[1]{doi: #1}\else
  \providecommand{\doi}{doi: \begingroup \urlstyle{rm}\Url}\fi

\bibitem[Agliari et~al.(2016)Agliari, Naimzada, and
  Pecora]{agliari2016nonlinear}
Agliari, A., Naimzada, A.~K., and Pecora, N.
\newblock Nonlinear dynamics of a cournot duopoly game with differentiated
  products.
\newblock \emph{Applied Mathematics and Computation}, 281:\penalty0 1--15,
  2016.

\bibitem[Antos et~al.(2008)Antos, Szepesv{\'a}ri, and Munos]{antos2008fitted}
Antos, A., Szepesv{\'a}ri, C., and Munos, R.
\newblock Fitted q-iteration in continuous action-space mdps.
\newblock In \emph{Advances in neural information processing systems}, pp.\
  9--16, 2008.

\bibitem[Balduzzi et~al.(2018)Balduzzi, Racaniere, Martens, Foerster, Tuyls,
  and Graepel]{balduzzi2018mechanics}
Balduzzi, D., Racaniere, S., Martens, J., Foerster, J., Tuyls, K., and Graepel,
  T.
\newblock The mechanics of n-player differentiable games.
\newblock \emph{arXiv preprint arXiv:1802.05642}, 2018.

\bibitem[Bertsekas(2012)]{bertsekas2012approximate}
Bertsekas, D.~P.
\newblock Approximate dynamic programming.
\newblock 2012.

\bibitem[Bertsekas \& Tsitsiklis(2000)Bertsekas and
  Tsitsiklis]{bertsekas2000gradient}
Bertsekas, D.~P. and Tsitsiklis, J.~N.
\newblock Gradient convergence in gradient methods with errors.
\newblock \emph{SIAM Journal on Optimization}, 10\penalty0 (3):\penalty0
  627--642, 2000.

\bibitem[Bloembergen et~al.(2015)Bloembergen, Tuyls, Hennes, and
  Kaisers]{bloembergen2015evolutionary}
Bloembergen, D., Tuyls, K., Hennes, D., and Kaisers, M.
\newblock Evolutionary dynamics of multi-agent learning: A survey.
\newblock \emph{Journal of Artificial Intelligence Research}, 53:\penalty0
  659--697, 2015.

\bibitem[Candogan et~al.(2011)Candogan, Menache, Ozdaglar, and
  Parrilo]{candogan2011flows}
Candogan, O., Menache, I., Ozdaglar, A., and Parrilo, P.~A.
\newblock Flows and decompositions of games: Harmonic and potential games.
\newblock \emph{Mathematics of Operations Research}, 36\penalty0 (3):\penalty0
  474--503, 2011.

\bibitem[Candogan et~al.(2013)Candogan, Ozdaglar, and
  Parrilo]{candogan2013near}
Candogan, O., Ozdaglar, A., and Parrilo, P.~A.
\newblock Near-potential games: Geometry and dynamics.
\newblock \emph{ACM Transactions on Economics and Computation (TEAC)},
  1\penalty0 (2):\penalty0 1--32, 2013.

\bibitem[Chang(2015)]{CHANG2015325}
Chang, K.-H.
\newblock Chapter 5 - multiobjective optimization and advanced topics.
\newblock In Chang, K.-H. (ed.), \emph{Design Theory and Methods Using
  CAD/CAE}, pp.\  325 -- 406. Academic Press, Boston, 2015.

\bibitem[Chen et~al.(2009)Chen, Deng, and Teng]{chen2009settling}
Chen, X., Deng, X., and Teng, S.-H.
\newblock Settling the complexity of computing two-player nash equilibria.
\newblock \emph{Journal of the ACM (JACM)}, 56\penalty0 (3):\penalty0 1--57,
  2009.

\bibitem[Cheng et~al.(2017)Cheng, Diakonikolas, and Stewart]{cheng2017playing}
Cheng, Y., Diakonikolas, I., and Stewart, A.
\newblock Playing anonymous games using simple strategies.
\newblock In \emph{Proceedings of the Twenty-Eighth Annual ACM-SIAM Symposium
  on Discrete Algorithms}, pp.\  616--631. SIAM, 2017.

\bibitem[Daskalakis \& Papadimitriou(2007)Daskalakis and
  Papadimitriou]{daskalakis2007computing}
Daskalakis, C. and Papadimitriou, C.
\newblock Computing equilibria in anonymous games.
\newblock In \emph{48th Annual IEEE Symposium on Foundations of Computer
  Science (FOCS'07)}, pp.\  83--93. IEEE, 2007.

\bibitem[Davis et~al.(2014)Davis, Burch, and Bowling]{davis2014using}
Davis, T., Burch, N., and Bowling, M.
\newblock Using response functions to measure strategy strength.
\newblock In \emph{Proceedings of the AAAI Conference on Artificial
  Intelligence}, volume~28, 2014.

\bibitem[de~Witt et~al.(2020)de~Witt, Peng, Kamienny, Torr, B{\"o}hmer, and
  Whiteson]{de2020deep}
de~Witt, C.~S., Peng, B., Kamienny, P.-A., Torr, P., B{\"o}hmer, W., and
  Whiteson, S.
\newblock Deep multi-agent reinforcement learning for decentralized continuous
  cooperative control.
\newblock \emph{arXiv preprint arXiv:2003.06709}, 2020.

\bibitem[Foerster et~al.(2018)Foerster, Farquhar, Afouras, Nardelli, and
  Whiteson]{foerster2018counterfactual}
Foerster, J.~N., Farquhar, G., Afouras, T., Nardelli, N., and Whiteson, S.
\newblock Counterfactual multi-agent policy gradients.
\newblock In \emph{Thirty-second AAAI conference on artificial intelligence},
  2018.

\bibitem[Fudenberg \& Tirole(1991)Fudenberg and Tirole]{fudenberg1991tirole}
Fudenberg, D. and Tirole, J.
\newblock Game theory.
\newblock \emph{MIT Press}, 726:\penalty0 764, 1991.

\bibitem[Gonz{\'a}lez-S{\'a}nchez \&
  Hern{\'a}ndez-Lerma(2013)Gonz{\'a}lez-S{\'a}nchez and
  Hern{\'a}ndez-Lerma]{gonzalez2013discrete}
Gonz{\'a}lez-S{\'a}nchez, D. and Hern{\'a}ndez-Lerma, O.
\newblock \emph{Discrete--time stochastic control and dynamic potential games:
  the Euler--Equation approach}.
\newblock Springer Science \& Business Media, 2013.

\bibitem[Grau-Moya et~al.(2018)Grau-Moya, Leibfried, and
  Bou-Ammar]{grau2018balancing}
Grau-Moya, J., Leibfried, F., and Bou-Ammar, H.
\newblock Balancing two-player stochastic games with soft q-learning.
\newblock \emph{arXiv preprint arXiv:1802.03216}, 2018.

\bibitem[Hernandez-Leal et~al.(2017)Hernandez-Leal, Kaisers, Baarslag, and
  de~Cote]{hernandez2017survey}
Hernandez-Leal, P., Kaisers, M., Baarslag, T., and de~Cote, E.~M.
\newblock A survey of learning in multiagent environments: Dealing with
  non-stationarity.
\newblock \emph{arXiv preprint arXiv:1707.09183}, 2017.

\bibitem[Hu \& Wellman(2003)Hu and Wellman]{hu2003nash}
Hu, J. and Wellman, M.~P.
\newblock Nash q-learning for general-sum stochastic games.
\newblock \emph{Journal of machine learning research}, 4\penalty0
  (Nov):\penalty0 1039--1069, 2003.

\bibitem[Jaderberg et~al.(2019)Jaderberg, Czarnecki, Dunning, Marris, Lever,
  Castaneda, Beattie, Rabinowitz, Morcos, Ruderman, et~al.]{jaderberg2019human}
Jaderberg, M., Czarnecki, W.~M., Dunning, I., Marris, L., Lever, G., Castaneda,
  A.~G., Beattie, C., Rabinowitz, N.~C., Morcos, A.~S., Ruderman, A., et~al.
\newblock Human-level performance in 3d multiplayer games with population-based
  reinforcement learning.
\newblock \emph{Science}, 364\penalty0 (6443):\penalty0 859--865, 2019.

\bibitem[Ja{\'s}kiewicz \& Nowak(2018)Ja{\'s}kiewicz and
  Nowak]{jaskiewicz2018symmetric}
Ja{\'s}kiewicz, A. and Nowak, A.~S.
\newblock On symmetric stochastic games of resource extraction with weakly
  continuous transitions.
\newblock \emph{Top}, 26\penalty0 (2):\penalty0 239--256, 2018.

\bibitem[Konda \& Tsitsiklis(2000)Konda and Tsitsiklis]{konda2000actor}
Konda, V.~R. and Tsitsiklis, J.~N.
\newblock Actor-critic algorithms.
\newblock In \emph{Advances in neural information processing systems}, pp.\
  1008--1014, 2000.

\bibitem[L{\~a} et~al.(2016)L{\~a}, Chew, and Soong]{la2016potential}
L{\~a}, Q.~D., Chew, Y.~H., and Soong, B.-H.
\newblock \emph{Potential Game Theory}.
\newblock Springer, 2016.

\bibitem[Leonardos et~al.(2021)Leonardos, Overman, Panageas, and
  Piliouras]{Leonardos2021GlobalCO}
Leonardos, S., Overman, W., Panageas, I., and Piliouras, G.
\newblock Global convergence of multi-agent policy gradient in markov potential
  games.
\newblock 2021.

\bibitem[Lillicrap et~al.(2015)Lillicrap, Hunt, Pritzel, Heess, Erez, Tassa,
  Silver, and Wierstra]{lillicrap2015continuous}
Lillicrap, T.~P., Hunt, J.~J., Pritzel, A., Heess, N., Erez, T., Tassa, Y.,
  Silver, D., and Wierstra, D.
\newblock Continuous control with deep reinforcement learning.
\newblock \emph{arXiv preprint arXiv:1509.02971}, 2015.

\bibitem[Littman(2001)]{littman2001friend}
Littman, M.~L.
\newblock Friend-or-foe q-learning in general-sum games.
\newblock In \emph{ICML}, volume~1, pp.\  322--328, 2001.

\bibitem[Lowe et~al.(2017)Lowe, Wu, Tamar, Harb, Abbeel, and
  Mordatch]{lowe2017multi}
Lowe, R., Wu, Y.~I., Tamar, A., Harb, J., Abbeel, O.~P., and Mordatch, I.
\newblock Multi-agent actor-critic for mixed cooperative-competitive
  environments.
\newblock In \emph{Advances in neural information processing systems}, pp.\
  6379--6390, 2017.

\bibitem[Macua et~al.(2010)Macua, Belanovic, and Zazo]{macua2010consensus}
Macua, S.~V., Belanovic, P., and Zazo, S.
\newblock Consensus-based distributed principal component analysis in wireless
  sensor networks.
\newblock In \emph{2010 IEEE 11th International Workshop on Signal Processing
  Advances in Wireless Communications (SPAWC)}, pp.\  1--5. IEEE, 2010.

\bibitem[Macua et~al.(2018)Macua, Zazo, and Zazo]{macua2018learning}
Macua, S.~V., Zazo, J., and Zazo, S.
\newblock Learning parametric closed-loop policies for markov potential games.
\newblock \emph{arXiv preprint arXiv:1802.00899}, 2018.

\bibitem[Marden et~al.(2009)Marden, Arslan, and Shamma]{marden2009cooperative}
Marden, J.~R., Arslan, G., and Shamma, J.~S.
\newblock Cooperative control and potential games.
\newblock \emph{IEEE Transactions on Systems, Man, and Cybernetics, Part B
  (Cybernetics)}, 39\penalty0 (6):\penalty0 1393--1407, 2009.

\bibitem[Mariano et~al.(2001)Mariano, Pereira, Correia, Ribeiro, Abramov,
  Szirbik, Goossenaerts, Marwala, and De~Wilde]{mariano2001simulation}
Mariano, P., Pereira, A., Correia, L., Ribeiro, R., Abramov, V., Szirbik, N.,
  Goossenaerts, J., Marwala, T., and De~Wilde, P.
\newblock Simulation of a trading multi-agent system.
\newblock In \emph{2001 IEEE International Conference on Systems, Man and
  Cybernetics. e-Systems and e-Man for Cybernetics in Cyberspace (Cat. No.
  01CH37236)}, volume~5, pp.\  3378--3384. IEEE, 2001.

\bibitem[Mguni et~al.(2018)Mguni, Jennings, and
  de~Cote]{mguni2018decentralised}
Mguni, D., Jennings, J., and de~Cote, E.~M.
\newblock Decentralised learning in systems with many, many strategic agents.
\newblock In \emph{Thirty-Second AAAI Conference on Artificial Intelligence},
  2018.

\bibitem[Monderer \& Shapley(1996{\natexlab{a}})Monderer and
  Shapley]{monderer1996fictitious}
Monderer, D. and Shapley, L.~S.
\newblock Fictitious play property for games with identical interests.
\newblock \emph{Journal of economic theory}, 68\penalty0 (1):\penalty0
  258--265, 1996{\natexlab{a}}.

\bibitem[Monderer \& Shapley(1996{\natexlab{b}})Monderer and
  Shapley]{monderer1996potential}
Monderer, D. and Shapley, L.~S.
\newblock Potential games.
\newblock \emph{Games and economic behavior}, 14\penalty0 (1):\penalty0
  124--143, 1996{\natexlab{b}}.

\bibitem[Nedic \& Ozdaglar(2009)Nedic and Ozdaglar]{nedic2009distributed}
Nedic, A. and Ozdaglar, A.
\newblock Distributed subgradient methods for multi-agent optimization.
\newblock \emph{IEEE Transactions on Automatic Control}, 54\penalty0
  (1):\penalty0 48--61, 2009.

\bibitem[Nicolaescu(2011)]{nicolaescu2011coarea}
Nicolaescu, L.~I.
\newblock The coarea formula.
\newblock In \emph{seminar notes}. Citeseer, 2011.

\bibitem[Papadimitriou \& Tsitsiklis(1987)Papadimitriou and
  Tsitsiklis]{papadimitriou1987complexity}
Papadimitriou, C.~H. and Tsitsiklis, J.~N.
\newblock The complexity of markov decision processes.
\newblock \emph{Mathematics of operations research}, 12\penalty0 (3):\penalty0
  441--450, 1987.

\bibitem[Peng et~al.(2017)Peng, Wen, Yang, Yuan, Tang, Long, and
  Wang]{peng2017multiagent}
Peng, P., Wen, Y., Yang, Y., Yuan, Q., Tang, Z., Long, H., and Wang, J.
\newblock Multiagent bidirectionally-coordinated nets: Emergence of human-level
  coordination in learning to play starcraft combat games.
\newblock \emph{arXiv preprint arXiv:1703.10069}, 2017.

\bibitem[Rashid et~al.(2018)Rashid, Samvelyan, De~Witt, Farquhar, Foerster, and
  Whiteson]{rashid2018qmix}
Rashid, T., Samvelyan, M., De~Witt, C.~S., Farquhar, G., Foerster, J., and
  Whiteson, S.
\newblock Qmix: monotonic value function factorisation for deep multi-agent
  reinforcement learning.
\newblock \emph{arXiv preprint arXiv:1803.11485}, 2018.

\bibitem[Roughgarden(2007)]{roughgarden2007routing}
Roughgarden, T.
\newblock Routing games.
\newblock \emph{Algorithmic game theory}, 18:\penalty0 459--484, 2007.

\bibitem[Samvelyan et~al.(2019)Samvelyan, Rashid, de~Witt, Farquhar, Nardelli,
  Rudner, Hung, Torr, Foerster, and Whiteson]{samvelyan2019starcraft}
Samvelyan, M., Rashid, T., de~Witt, C.~S., Farquhar, G., Nardelli, N., Rudner,
  T.~G., Hung, C.-M., Torr, P.~H., Foerster, J., and Whiteson, S.
\newblock The starcraft multi-agent challenge.
\newblock \emph{arXiv preprint arXiv:1902.04043}, 2019.

\bibitem[Shapley(1953)]{shapley1953stochastic}
Shapley, L.~S.
\newblock Stochastic games.
\newblock \emph{Proceedings of the national academy of sciences}, 39\penalty0
  (10):\penalty0 1095--1100, 1953.

\bibitem[Shoham \& Leyton-Brown(2008)Shoham and
  Leyton-Brown]{shoham2008multiagent}
Shoham, Y. and Leyton-Brown, K.
\newblock \emph{Multiagent systems: Algorithmic, game-theoretic, and logical
  foundations}.
\newblock Cambridge University Press, 2008.

\bibitem[Silver et~al.(2014)Silver, Lever, Heess, Degris, Wierstra, and
  Riedmiller]{silver2014deterministic}
Silver, D., Lever, G., Heess, N., Degris, T., Wierstra, D., and Riedmiller, M.
\newblock Deterministic policy gradient algorithms.
\newblock 2014.

\bibitem[Simon et~al.(1983)]{simon1983lectures}
Simon, L. et~al.
\newblock \emph{Lectures on geometric measure theory}.
\newblock The Australian National University, Mathematical Sciences Institute,
  Centre~…, 1983.

\bibitem[Slade(1994)]{slade1994does}
Slade, M.~E.
\newblock What does an oligopoly maximize?
\newblock \emph{The Journal of Industrial Economics}, pp.\  45--61, 1994.

\bibitem[Sutton \& Barto(2018)Sutton and Barto]{sutton2018reinforcement}
Sutton, R.~S. and Barto, A.~G.
\newblock \emph{Reinforcement learning: An introduction}.
\newblock MIT press, 2018.

\bibitem[Sutton et~al.(2000)Sutton, McAllester, Singh, and
  Mansour]{sutton2000policy}
Sutton, R.~S., McAllester, D.~A., Singh, S.~P., and Mansour, Y.
\newblock Policy gradient methods for reinforcement learning with function
  approximation.
\newblock In \emph{Advances in neural information processing systems}, pp.\
  1057--1063, 2000.

\bibitem[Szepesv{\'a}ri \& Munos(2005)Szepesv{\'a}ri and
  Munos]{szepesvari2005finite}
Szepesv{\'a}ri, C. and Munos, R.
\newblock Finite time bounds for sampling based fitted value iteration.
\newblock In \emph{Proceedings of the 22nd international conference on Machine
  learning}, pp.\  880--887, 2005.

\bibitem[Tan(1993)]{tan1993multi}
Tan, M.
\newblock Multi-agent reinforcement learning: Independent vs. cooperative
  agents.
\newblock In \emph{Proceedings of the tenth international conference on machine
  learning}, pp.\  330--337, 1993.

\bibitem[Tutunov et~al.(2019)Tutunov, Bou-Ammar, and
  Jadbabaie]{tutunov2019distributed}
Tutunov, R., Bou-Ammar, H., and Jadbabaie, A.
\newblock Distributed newton method for large-scale consensus optimization.
\newblock \emph{IEEE Transactions on Automatic Control}, 64\penalty0
  (10):\penalty0 3983--3994, 2019.

\bibitem[Ui(2000)]{ui2000shapley}
Ui, T.
\newblock A shapley value representation of potential games.
\newblock \emph{Games and Economic Behavior}, 31\penalty0 (1):\penalty0
  121--135, 2000.

\bibitem[Wang \& Sandholm(2003)Wang and Sandholm]{wang2003reinforcement}
Wang, X. and Sandholm, T.
\newblock Reinforcement learning to play an optimal nash equilibrium in team
  markov games.
\newblock In \emph{Advances in neural information processing systems}, pp.\
  1603--1610, 2003.

\bibitem[Wiering(2000)]{wiering2000multi}
Wiering, M.
\newblock Multi-agent reinforcement learning for traffic light control.
\newblock In \emph{Machine Learning: Proceedings of the Seventeenth
  International Conference (ICML'2000)}, pp.\  1151--1158, 2000.

\bibitem[Yang \& Wang(2020)Yang and Wang]{yang2020overview}
Yang, Y. and Wang, J.
\newblock An overview of multi-agent reinforcement learning from game
  theoretical perspective.
\newblock \emph{arXiv preprint arXiv:2011.00583}, 2020.

\bibitem[Yang et~al.(2018)Yang, Luo, Li, Zhou, Zhang, and Wang]{yang2018mean}
Yang, Y., Luo, R., Li, M., Zhou, M., Zhang, W., and Wang, J.
\newblock Mean field multi-agent reinforcement learning.
\newblock In \emph{International Conference on Machine Learning}, pp.\
  5571--5580. PMLR, 2018.

\bibitem[Yang et~al.(2019)Yang, Tutunov, Sakulwongtana, Ammar, and
  Wang]{yang2019alpha}
Yang, Y., Tutunov, R., Sakulwongtana, P., Ammar, H.~B., and Wang, J.
\newblock $\alpha$-rank: Scalable multi-agent evaluation through evolution.
\newblock \emph{arXiv preprint arXiv:1909.11628}, 2019.

\bibitem[Yang et~al.(2020)Yang, Wen, Wang, Chen, Shao, Mguni, and
  Zhang]{yang2020multi}
Yang, Y., Wen, Y., Wang, J., Chen, L., Shao, K., Mguni, D., and Zhang, W.
\newblock Multi-agent determinantal q-learning.
\newblock In \emph{International Conference on Machine Learning}, pp.\
  10757--10766. PMLR, 2020.

\bibitem[Ye et~al.(2015)Ye, Zhang, and Yang]{ye2015multi}
Ye, D., Zhang, M., and Yang, Y.
\newblock A multi-agent framework for packet routing in wireless sensor
  networks.
\newblock \emph{sensors}, 15\penalty0 (5):\penalty0 10026--10047, 2015.

\bibitem[Zazo et~al.(2015)Zazo, Macua, S{\'a}nchez-Fern{\'a}ndez, and
  Zazo]{zazo2015dynamic}
Zazo, S., Macua, S.~V., S{\'a}nchez-Fern{\'a}ndez, M., and Zazo, J.
\newblock Dynamic potential games in communications: Fundamentals and
  applications.
\newblock \emph{arXiv preprint arXiv:1509.01313}, 2015.

\bibitem[Zhang et~al.(2020)Zhang, Chen, Huang, Li, Yang, Zhang, and
  Wang]{zhang2020bi}
Zhang, H., Chen, W., Huang, Z., Li, M., Yang, Y., Zhang, W., and Wang, J.
\newblock Bi-level actor-critic for multi-agent coordination.
\newblock In \emph{Proceedings of the AAAI Conference on Artificial
  Intelligence}, volume~34, pp.\  7325--7332, 2020.

\bibitem[Zhang et~al.(2021)Zhang, Ren, and Li]{zhang2021gradient}
Zhang, R., Ren, Z., and Li, N.
\newblock Gradient play in multi-agent markov stochastic games: Stationary
  points and convergence.
\newblock \emph{arXiv preprint arXiv:2106.00198}, 2021.

\bibitem[Zhou et~al.(2020)Zhou, Luo, Villela, Yang, Rusu, Miao, Zhang, Alban,
  Fadakar, Chen, et~al.]{zhou2020smarts}
Zhou, M., Luo, J., Villela, J., Yang, Y., Rusu, D., Miao, J., Zhang, W., Alban,
  M., Fadakar, I., Chen, Z., et~al.
\newblock Smarts: Scalable multi-agent reinforcement learning training school
  for autonomous driving.
\newblock \emph{arXiv preprint arXiv:2010.09776}, 2020.

\end{thebibliography}
\bibliographystyle{icml2021}









\appendix 

\onecolumn 
\icmltitle{ Supplementary Material for \\ 
Learning in Nonzero-Sum Stochastic Games with Potentials }

The Supplementary material is arranged as follows: first, in Sec. \ref{sec:appendix_Experiment_ Details} we give a description of the experimental details and report the hyperparameter values used in our experiments. In Sec. \ref{sec:ablation_experiments}, we give a detailed discussion of our ablation studies. In Sec. \ref{sec:cournot_duopoly}, we give results of our analysis on a static noncooperative game namely Cournot duopoly problem and in Sec. \ref{sec:appendix:cournot_analytic} we perform an study of the problem and verify our solution analytically. In Sec. \ref{sec:algorithm_2}, we give additional details on our supervised learning method to compute the potential function. In Sec. \ref{sec:Consensus Optimisation}, we give additional details on our distributed learning method using consensus optimisation required to compute the potential function distributively.  In Sec. \ref{sec:append_assumptions}, we outline some of the additional notation and detail the technical assumptions used in the proofs of our results which are contained in Sec. \ref{sec:appendix_proof_results}  which concludes the supplementary material.

\section{Experiment Details \& Hyperparameter Settings}
\label{sec:appendix_Experiment_ Details}

The settings for all methods are the same, except for the stated cases that use a shared critic. The optimiser is set to Adam for all methods reported. The learning rates for actors and critics are 1e-4 and 1e-3 respectively.
Both actors and critics consist of four fully connected layers with dimensions of [64,64,64,$n_{act}$].

In the table below we report all hyperparameters used in our experiments. Hyperparameter values in square brackets indicate range of values that were used for performance tuning.

\begin{center}
    \begin{tabular}{c|c} 
        \toprule
        \textbf{Setting} & \textbf{Value} \\\hline
        Clip Gradient Norm & 1\\
        Discount factor $\gamma_{E}$ & 0.99\\
        $\lambda$ & 0.95\\
        Learning rate & $1$x$10^{-4}$ for actor and $1$x$10^{-3}$  for critic \\
        Batch size & 256\\
        Buffer size & 4096\\
        Policy architecture & MLP \\
        Number of parallel actors & 1\\
        Optimization algorithm & Adam\\
        Rollout length & 1000*[10, 20]\\
        \bottomrule
    \end{tabular}
\end{center}

\section{Ablation Studies}\label{sec:ablation_experiments}
Our method allows MARL agents to solve noncooperative SGs within the SPG subclass. In this section, we analyse the behaviour of our method compared against existing MARL baselines in scenarios that range from team (cooperative) SG settings to noncooperative games outside of SPGs. In doing so, we examine their performance when the SPG assumptions are violated and show that $\ourmethod$ is still able to perform well when the PG condition (Equation \eqref{potential_condition_static}) is mildly violated.  Additionally, in these settings we also compare the performance of $\ourmethod$ in cooperative settings which are the degenerate case of SPGs.

As in Section 6 (within the main body), we consider a stochastic network routing game which has both continuous action and state spaces and is a nonzero sum game (neither team-based nor zero-sum) which represents
a challenge for current MARL methods. As in the Network routing games considered in Section 6, we restrict our attention to networks that have efficient NE. In such network structures, playing an NE (best-response) strategy leads to a higher total return for the (self-interested) agent. For these networks, the average return for an agent serves as a measure the performance of the different algorithms.


\tikzstyle{vertex}=[circle,fill=black,minimum size=16pt,inner sep=0pt,text=white]
\tikzstyle{ellipsis}=[circle,fill=white,minimum size=16pt,inner sep=0pt,text=black]
\tikzstyle{arrow} = [draw,thick,line width=1.5pt,->,>=stealth,black!75]
\tikzstyle{dots} = [draw,dotted,line width=1.5pt,->,>=stealth,black!40]
\tikzstyle{weight} = [font=\footnotesize]

\begin{figure}
	\begin{center}
		\begin{tikzpicture}[scale=1.5, auto,swap]
		\foreach \pos/\name in {
		{(0,0)/0}, {(1.5,1)/1}, {(1.5,-1)/3}, {(1.5,0)/2},
		{(3,1)/4}, {(3,-1)/6}, {(3,0)/5},
		{(5,1)/193}, {(5,-1)/195}, {(5,0)/194},
		{(6.5,1)/196}, {(6.5,-1)/198}, {(6.5,0)/197}, {(8,0)/199}}
		\node[vertex] (\name) at \pos {$\name$};
		\node[ellipsis] (u) at ($(4)!.5!(193)$) {$\ldots$};
		\node[ellipsis] (m) at ($(0)!.5!(199)$) {$\ddots$};
		\node[ellipsis] (d) at ($(6)!.5!(195)$) {$\ldots$};
		\path[arrow] (0) -- node[below] {} (1);
		\path[arrow] (4) -- node[above] {} (u);
		\path[arrow] (u) -- node[above] {} (193);
		\path[arrow] (4) -- node[right] {} (m);
		\path[arrow] (m) -- node[right] {} (195);
		\foreach \source/ \dest /\weight in { 0/3/{}, 1/4/{}, 3/6/{}, 193/196/{}, 195/198/, 196/199/{}, 198/199/{}, 0/2/, 2/5/,5/m/, m/194/, 194/197/, 197/199/, 1/5/,193/197/,2/6/, 195/197/, 2/4/,194/196/,3/5/}
		\path[arrow] (\source) -- node[above] {$\weight$} (\dest);
		\foreach \source/ \dest /\weight in { 6/d/, d/195/}
		\path[dots] (\source) -- node[above] {$\weight$} (\dest);
		\end{tikzpicture}
	\end{center}
	\caption{Selfish routing network with 200 nodes.}
		\label{fig:large-routing-network}
\end{figure}
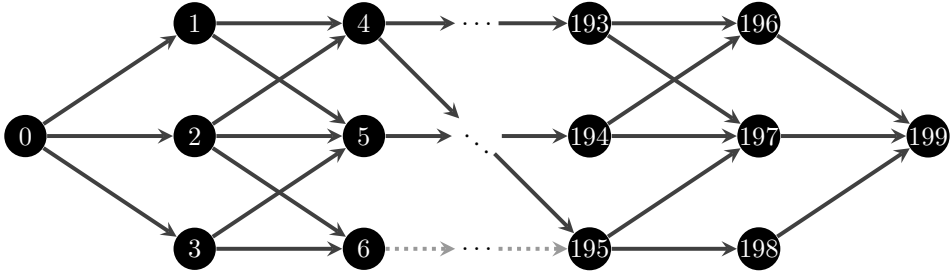

\subsection{Non-Cooperative, Stochastic Potential Game Ablation Study}
Fig. \ref{fig:large-routing-network} demonstrates the large network used in experiments.
The class of potential games includes all team-games as a subclass, i.e. every team game is a potential game, and some but not all non-cooperative games are potential. In this ablation study, we examine the performance of $\ourmethod$ against other baselines in stochastic potential but non-cooperative games. 

To do this, using Lemma \ref{dummy_coord_prop},  we know that, in any potential game, the reward function $R_i:\mathcal{S}\times\left(\times\mathcal{A}_{i\in\mathcal{N}}\right)\to\mathbb{R}$ for any agent $i\in\mathcal{N}$ can be decomposed into two components: the team game component $J:\mathcal{S}\times\left(\times\mathcal{A}_{i\in\mathcal{N}}\right)\to\mathbb{R}$ (i.e.  function that all agents seek to maximise) and a strategic (non-cooperative) component which is specific to each agent $L_i:\mathcal{S}\times\left(\times\mathcal{A}_{i\in\mathcal{N}/{\{i\}}}\right)\to\mathbb{R}$. We now study a set of games in which each agent's reward function has the following form:
\begin{align}
R_i(s,(a_i,a_{-i})) = \underbrace{J(s,(a_i,a_{-i}))}_{\text{Team game reward}} +  c \underbrace{L_i(s,a_1,\ldots \textcolor{red}{a_{i-1},a_{i+1}},\ldots,a_N)}_{\text{Non-cooperative part}} \label{dummy_coord_sep_o}
\end{align}

The value of the constant $c\in\mathbb{R}$ determines the contribution of the non-cooperative, strategic component. For $c=0$, the game is a team game and as $c\to\infty$ the non-cooperative component of the game dominates.

As can be seen in the plots, $\ourmethod$ has better average return compared to the DDPG-based algorithms, whose performance degrades most due to their team-game requirement. COMIX and COVDN achieve similar levels of performance.

\begin{figure}[h]
	\begin{center}
		\includegraphics[width=0.6\linewidth]{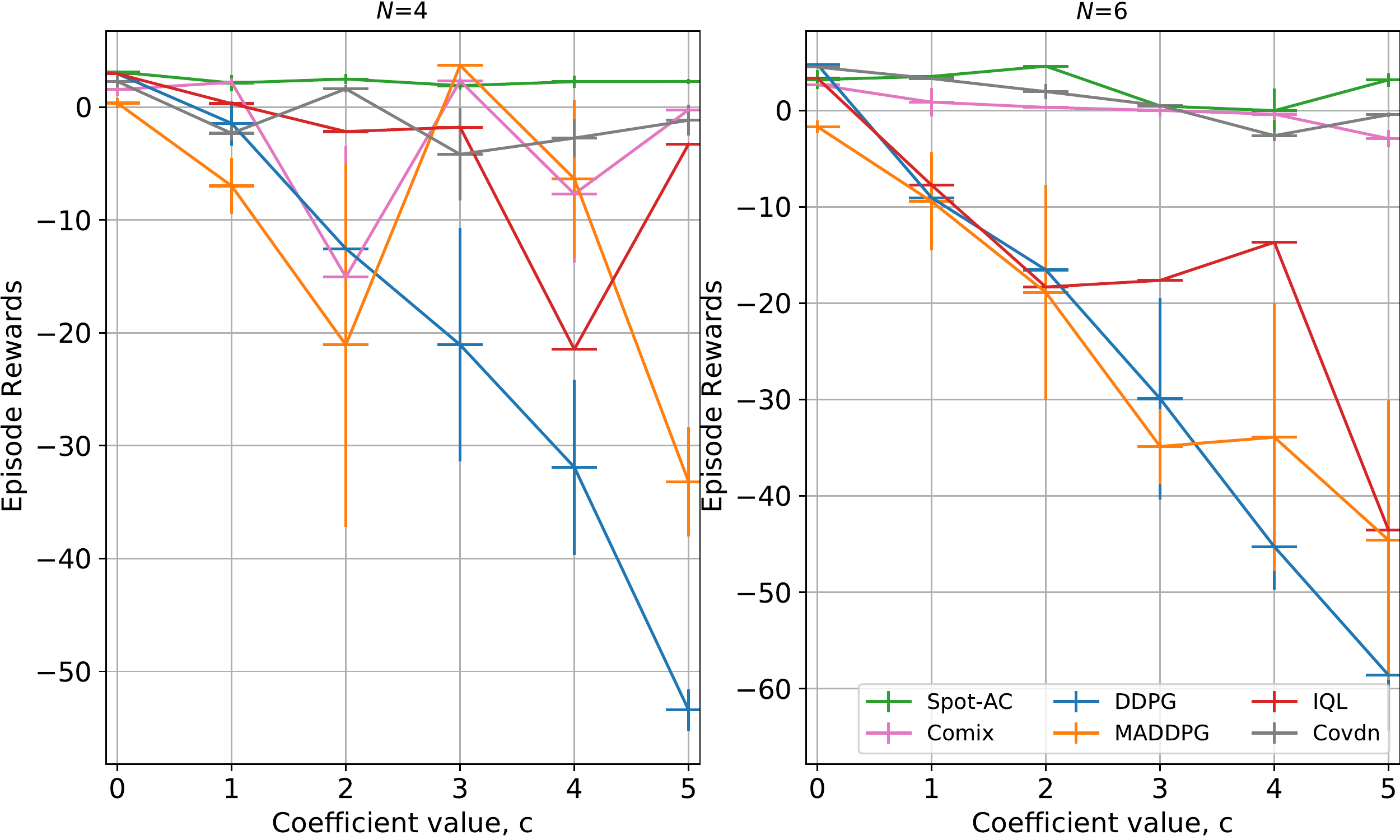}
	\end{center}
	\caption{Results of the training curves for the non-cooperative, potential non-atomic routing game when the number of agents $N= 4,6$, with $c=0,1,2,3,4,5$. 
}
	
	\label{fig:routing-ablation2-curve}
\end{figure}

\subsection{Non-cooperative, Non-Potential Stochastic Game Ablation Study}

Extending the ablation studies of the previous section, we examine the performance of $\ourmethod$ in games that are both non-cooperative and not potential. We parameterise the agents' reward function for the congestion game as follows: 
\[
R_i(s,(a^i,a^{-i})) = \underbrace{r_i(s,(a^i,a^{-i}))}_{\text{potential reward function}} + \hspace{2mm}  c \underbrace{J(s,(a^i,a^{-i}))}_{\text{non-potential contribution}}, \qquad \forall s\in\mathcal{S},\forall (a^i,a^{-i})\in\boldsymbol{\mathcal{A}}. \label{ablation_2_reward}
\]

The functions $r_i$ are those from the original (potential) congestion game \eqref{potential_condition_static}. $J$ is a generic non-potential reward function.
$c=0$ corresponds to a potential game, and as $c\to\infty$ the non-potential component of the game dominates.

Fig. \ref{fig:routing-ablation1} shows the results of this ablation study in a network routing game with 4 agents. We see that \ourmethod is able to handle small deviations (small $c$) from the potential requirements, but performance degrades for larger values. It again outperforms the DDPG baselines, whose performance degrades rapidly with increasing values of the ablation parameter.

\begin{figure*}[h]
	\begin{center}
		\includegraphics[width=0.99\linewidth]{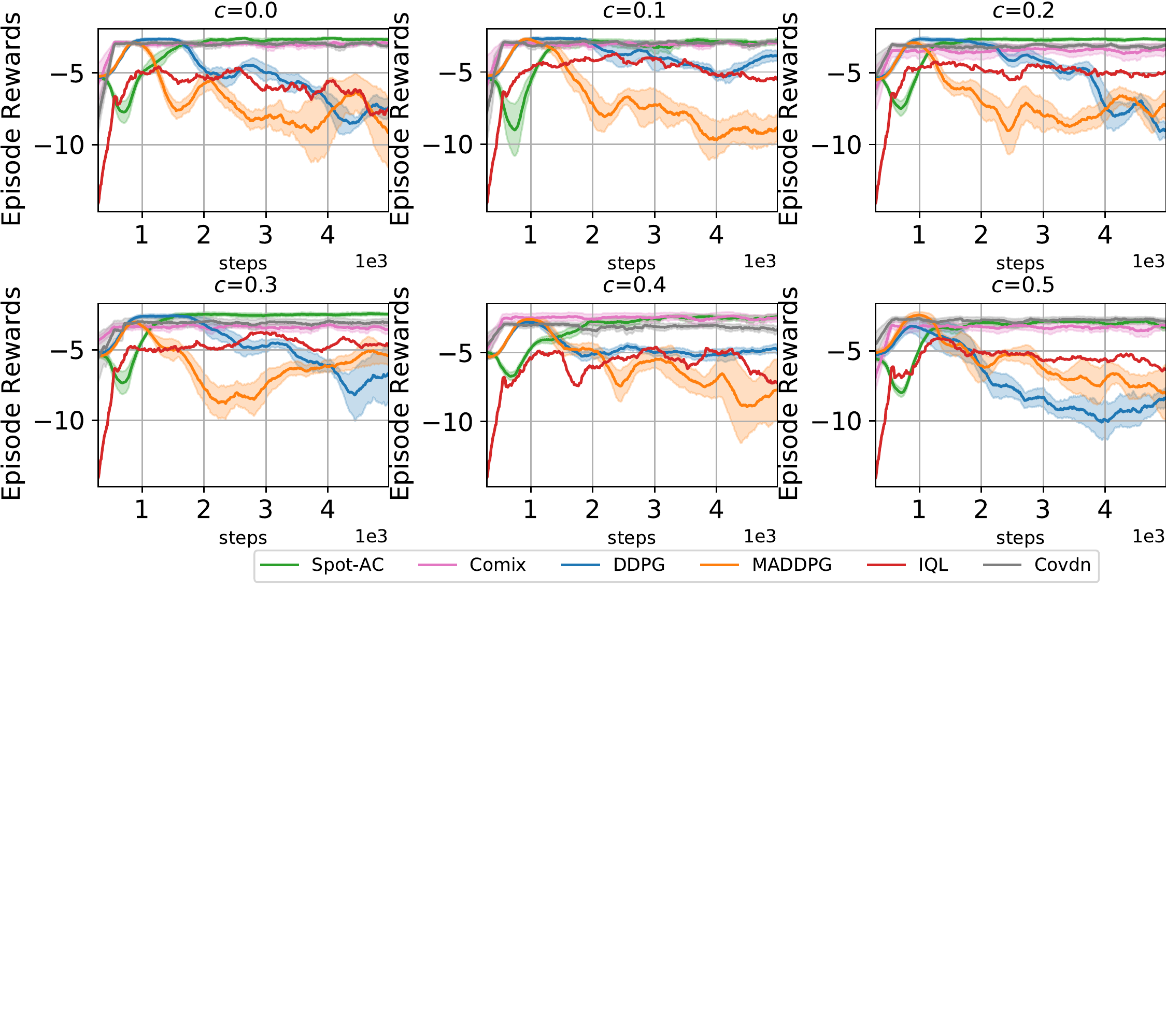}
		
	\end{center}
	\caption{Results of the training curves for non-cooperative, non-potenial, non-atomic routing game when number of agents $N= 4$, coefficients $c=0.0 - 0.5$. }
	\label{fig:routing-ablation1}
\end{figure*}

\clearpage

\section{Cournot Duopoly Problem}\label{sec:cournot_duopoly}

\textbf{Cournot Duopoly} is a classic static game \cite{monderer1996potential} that models the  imperfect competition in which multiple firms compete in price and production to capture market share. Since the firms' actions are continuous variables, the game is a continuous action setting. It is a nonzero sum game (neither team-based nor zero-sum) which represents a challenge for current MARL methods.   
Let $a_i\in [-A_i,A_i]$ where $A_i\in \mathbb{R}_{>0}$ which represent the set of actions for Firm $i\in\{1,2\ldots, N\}:=\mathcal{N}$.
Let $\alpha,\beta,\gamma\in\mathbb{R}_{>0}$ be given constants, each firm i's reward (profit) is 
$R_i(a_i,a_{-i})=a_i(\alpha -\beta\sum_{i\in\mathcal{N}}a_i)-\gamma a_i$. We set $A_i=1, \forall i\in\mathcal{N}$ and $\alpha=\gamma=1, \beta=-1$.  

\section{Analytic Example: Cournot Duopoly}\label{sec:appendix:cournot_analytic}
Reward functions

Let $a_1\in [-A_1,A_1]$ and $a_2\in [-A_2,A_2]$ where $A_1,A_2\in \mathbb{R}_{>0}$ which represent the actions for Firm 1 and Firm 2 respectively.

Also let $\alpha,\beta,\gamma\in\mathbb{R}_{>0}$ be given constants.
\begin{align}
R_i(a_1,a_2)=a_i(\alpha -\beta(a_1+a_2))-\gamma a_i    \label{cournot_reward_function}
\end{align}

\textbf{Cournot Potential Function  ($N= 2$ Agents)}
\begin{align}
\phi(a_1,a_2)=\alpha(a_1+a_2)-\beta(a_1^2+a_2^2) -\beta a_1 a_2 -\gamma (a_1+a_2)+k   \label{cournot_potential}
\end{align}
where $k\in\mathbb{R}$ is an arbitrary constant.

\subsection{Cournot Duopoly with $N>2$ Agents}
Reward functions

Let $a_i\in [-A_i,A_i]$ where $A_i\in \mathbb{R}_{>0}$ which represent the actions for Firm $i$, $i\in\{1,2\ldots, N\}$.

Also let $\alpha,\beta,\gamma\in\mathbb{R}_{>0}$ be given constants.
\begin{align}
R_i(a_i,a_{-i})=a_i(\alpha -\beta\sum_{i\in\mathcal{N}}a_i)-\gamma a_i    \label{cournot_reward_function_many}
\end{align}

\textbf{Cournot Potential Function ($N\geq 2$ Agents)}
\begin{align}
\phi(a_i,a_{-i})=\alpha\sum_{i\in\mathcal{N}}a_i-\beta\sum_{i\in\mathcal{N}}a_i^2 -\beta a_i\sum_{j\in\mathcal{N}/\{i\}}a_j -\gamma\sum_{i\in\mathcal{N}}a_i+k   \label{cournot_potential_many}
\end{align}
where $k\in\mathbb{R}$ is an arbitrary constant.

\textbf{Derivatives}
\begin{align}
\frac{\partial R_i(a_i,a_{-i})}{\partial a_i}=\alpha -\beta\sum_{i\in\mathcal{N}}a_i-\beta a_i-\gamma     \label{cournot_reward_function_many_derivatives}
\end{align}
\begin{align}
\frac{\partial \phi(a_i,a_{-i})}{\partial a_i}=\alpha-2\beta a_i -\beta\sum_{j\in\mathcal{N}/\{i\}}a_j-\gamma   \label{cournot_potential_many_derivative}
\end{align}

\subsection{Analytic Verification of our Method}
In this section, we validate that the optimisation in Sec. \ref{sec:compute-phi} yields the correct results. To do this, we derive analytic expressions for $\phi$ and show that the solution of the optimisation in Sec. \ref{sec:compute-phi} coincides with this solution.
Recall that our proposition says that:
\begin{align}
\mathbb{E}_{(a^i,a^{-i})\sim(\pi_i(\eta^i),\pi_{-i}(\eta^{-i}))}\left[\frac{\partial}{\partial \eta^i}\ln{[\pi_i(a^i|s;\eta^i)]}\left(\frac{\partial}{\partial a^i} R_i(s,a^i,a^{-i})-\frac{\partial}{\partial a^i}\phi(s,a^i,a^{-i})\right)\right]=\boldsymbol{0}, \label{proposition_within_cournot_example}
\end{align}

We first want to check that any PF in \eqref{cournot_potential} solves \eqref{proposition_within_cournot_example}, indeed:

For $\implies$ we find that
\begin{align*}
    \frac{\partial}{\partial a^i}R_i(\cdot,a^i,a^{-i})&=\alpha -\beta(a_1+a_2)-\beta a_i -\gamma
\end{align*}
and
\begin{align*}
    \frac{\partial}{\partial a^i}\phi(\cdot,a^i,a^{-i})&=\alpha -2\beta a_i-\beta a_j-\gamma 
    \\&=\alpha - \beta(a_1+a_2)-\beta a_i-\gamma
\end{align*}
and hence verify:
\begin{align*}
    \frac{\partial}{\partial a^i}\phi(\cdot,a^i,a^{-i})-    \frac{\partial}{\partial a^i}R_i(\cdot,a^i,a^{-i})=0
\end{align*}
so that any $\phi$ in \eqref{cournot_potential} is a candidate solution to \eqref{proposition_within_cournot_example}. Indeed, we observe that
\begin{align*}
      & \frac{\partial}{\partial a^i}\phi(\cdot,a^i,a^{-i})-    \frac{\partial}{\partial a^i}R_i(\cdot,a^i,a^{-i})=0
    \\&\implies
\mathbb{E}_{(a^i,a^{-i})\sim(\pi_i(\eta^i),\pi_{-i}(\eta^{-i}))}\left[\frac{\partial}{\partial \eta^i}\ln{[\pi_i(a^i|s;\eta^i)]}\left(\frac{\partial}{\partial a^i} R_i(s,a^i,a^{-i})-\frac{\partial}{\partial a^i}\phi(s,a^i,a^{-i})\right)\right]=\boldsymbol{0},
\end{align*}
and hence the forward implication is verified.
$\impliedby$

To check the reverse we perform the following optimisation:

\begin{align*}
{\rm minimise} \;\;\; \int \left(\frac{\partial}{\partial \eta^i}\ln{[\pi_i(a^i|\cdot;\eta^i)]}\left(\frac{\partial}{\partial a^i} R_i(\cdot,a^i,a^{-i})-\frac{\partial}{\partial a^i}P_\rho(\cdot,a^i,a^{-i})\right)\right), 
\end{align*}

Consider candidate functions of the following form 
\begin{align*}
P_\rho(\cdot,a^i,a^{-i})=\rho_0+\rho_{a_1,1}a_1+\rho_{a_2,1}a_2+\rho_{a_1,2}a^2_1+\rho_{a_2,2}a^2_2+\rho_{a}a_1a_2+c
\end{align*}
and Gaussian policies: $\pi_i(a^i|\cdot;\eta^i)=\frac{1}{\sqrt{2\pi\sigma^2}}e^{\frac{1}{2}(\frac{a^i-\eta}{\sigma})^2}$

Then
\begin{align*}
\frac{\partial}{\partial a^i}P_\rho(\cdot,a^i,a^{-i})  = \rho_{a_i,1}+2\rho_{a_i,2}a_i +\rho_a a_j
\end{align*}

\begin{align*}
{\rm minimise} \;\;\; -\frac{1}{\sigma^2}\int (a^i-\eta)\left(\alpha -\beta(a_i+a_j)-\beta a_i -\gamma-(\rho_{a_i,1}+2\rho_{a_i,2}a_i +\rho_a a_j)\right), 
\end{align*}
After matching like terms we find that:
\begin{align*}
\alpha-\gamma &=\rho_{a_i,1}
\\-2\beta&= 2\rho_{a_i,2}
\\-\beta&=\rho_a
\end{align*}
Hence, we find that
\begin{align*}
P_\rho(\cdot,a^i,a^{-i})&=-(\gamma-\alpha)a_1-(\gamma-\alpha)a_2-\beta a^2_1-\beta a^2_2-\beta a_1a_2+c
\\&=\alpha(a_1+a_2)-\beta(a_1^2+a_2^2)-\beta a_1a_2-\gamma(a_1+a_2)=\phi+c
\end{align*}
which verifies the reverse.

\section{Estimating the Potential Function: Algorithm 2}\label{sec:algorithm_2}
The following algorithm computes the potential function of the SPG using the supervised learning method described in Sec. 5.1. We illustrate the convergence of the method in Sec. \ref{sec:appendix_convergence}.   
\begin{algorithm}
\textbf{Estimating the Potential Function}
\begin{algorithmic}[1]
    \STATE 
Generate set of random points $((\boldsymbol{\eta}^k_i,\boldsymbol{\eta}^k_{-i}),s^k)\in \boldsymbol{E}\times \mathcal{S}$ for $k=1,2,\ldots$ according to the probability density $\nu$.

\STATE  For each $(\boldsymbol{\eta}^k,s^k)\equiv ((\eta^k_i,\eta^k_{-i}),s^k)$, 
evaluate $g^i(s^k,{\boldsymbol{\eta}}^k;\boldsymbol{\rho}^k)$ (or for stochastic policies approximate expectation $\mathbb{E}_{(a_i,a_{-i})\sim(\pi_i,\pi_{-i})}\left[g^i(s^k,{\boldsymbol{\eta}}^k;\boldsymbol{\rho}^k)\right]$ by MC sampling of actions). 

\STATE Calculate the squared error $g^2(s^k,\boldsymbol{\eta}^k;\boldsymbol{\rho}^k)= \sum_{i\in\mathcal{N}}\left[g^i(s,\boldsymbol{\eta}^k;\boldsymbol{\rho}^k)\right]^2$ using step 2.

\STATE Take a descent step at $((\boldsymbol{\eta}^k_i,\boldsymbol{\eta}^k_{-i}),s^k)$, compute $
\boldsymbol{\rho}^k=\boldsymbol{\rho}^{k-1}-\alpha\nabla_{\boldsymbol{\rho}} g^2(s^k,\boldsymbol{\eta}^k,\boldsymbol{\rho})|^{\boldsymbol{\rho}=\boldsymbol{\rho}^{k-1}}$.
\STATE Repeat until convergence criterion is satisfied.
\end{algorithmic}
\end{algorithm}
\subsection{Convergence of The Potential Function}\label{sec:appendix_convergence}

Fig. \ref{fig:pot_approx} gives the learning curves for computing the potential function for the selfish routing games using Algorithm 2, corresponding to the method in Section  \ref{sec:compute-phi}.
The potential function defines the team game which agents jointly seek to maximise.
For the training of potential function $\phi$, we use a batch size of 2. The learning converges after around 200 iterations, and can easily handle settings with large numbers of agents.



\begin{figure}[h]
	\begin{center}
		\includegraphics[width=0.99\linewidth]{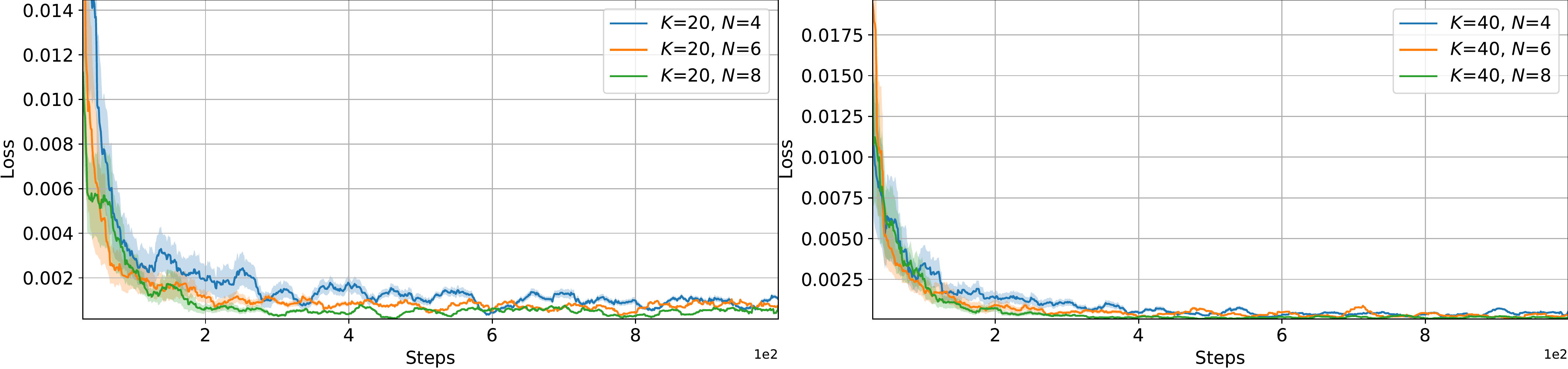}
	\end{center}
	\caption{Results of the training curves for potential approximation. Non atomic routing when number of agents $N= 4,6,8$, number of nodes $K=20,40$. The y-axis is $\|\Delta \phi - \Delta R_i\|^2$ (minimising this quantity gives a candidate  function for the potential function c.f. Prop. \ref{reparam_pg_ode}). 
	}
	\label{fig:pot_approx}
\end{figure}

\section{Consensus Optimisation}
\label{sec:Consensus Optimisation}
In what follows, we denote by $\mathcal{C}$ the set of continuously differentiable functions and by $\mathcal{H}$ the set of measurable functions.  

To perform the consensus optimisation step, we use the following update processes:
\begin{align}
&\boldsymbol{\rho}^i_{l+1}=\sum_{j\in\mathcal{N}}c_l(i,j)\rho^j_l-\alpha\cdot\kappa^i_l \label{consensus_update_rho}
\\&\kappa^i_{l+1}=\sum_{j\in\mathcal{N}}c_l(i,j)\cdot\kappa^j_l+\nabla g^i(s,\boldsymbol{\rho}^i_{l+1})-\nabla g^i(s,\boldsymbol{\rho}^i_{l}) \label{consensus_update_kappa}
\end{align}
where $\alpha>0$ is the stepsize and $\boldsymbol{C}_l=[c_l(i,j)]_{N\times N}$ is the consensus matrix at iteration $l\in\{0,1,\ldots\}$.

\begin{algorithm}
\textbf{Consensus update}
\label{algo:consensus}
\begin{algorithmic}[1]
    \REQUIRE Parametric function class $\mathcal{H}$, stepsize $\alpha>0$, initial consensus matrix $\boldsymbol{C}_0=[c_0(i,j)]_{N\times N}$ , initial parameter $\boldsymbol{\rho}^i_0\in \mathrm{T}, \kappa^i_0=\nabla g^i(s,\boldsymbol{\rho}^i_0)$ for all $i\in\mathcal{N}$. 
\FOR{$ l\in\{0,\ldots, L-1\}$ }
\FOR{agent $i\in\mathcal{N}$}     
        \STATE $\boldsymbol{\rho}^i_{l+1}=\sum_{j\in\mathcal{N}}c_l(i,j)\rho^j_l-\alpha\cdot\kappa^i_l$
 \STATE $\kappa^i_{l+1}=\sum_{j\in\mathcal{N}}c_l(i,j)\cdot\kappa^j_l+\nabla g^i(s,\boldsymbol{\rho}^i_{l+1})-\nabla g^i(s,\boldsymbol{\rho}^i_{l})$
\ENDFOR
\ENDFOR
\STATE \textbf{Output:} The vector of functions $[P_{\rho_i}]_{i\in\mathcal{N}}$ for all agent $i\in\mathcal{N}$.
\end{algorithmic}
\end{algorithm}

\section{Assumptions}
\label{sec:append_assumptions}
Given a pair of metric spaces $\left(X_1,d_1\right)$ and $\left(X_1,d_2\right)$, we say that a function $f:X_1\to X_2$ is Lipschitz if the constant defined by $
L_f:=\underset{x_1\in X_1,x_2\in X_2}{\sup}\frac{d_2(f(x_1),f(x_2))}{d_1(x_1,x_2)} $
is finite.
The constant $L_f$ is called the Lipschitz constant. We denote by $L_R{_\infty}:=\max\{L_{R_1},\ldots L_{R_N}\}$ and similarly for the L-Lipschitz gradients $L_{\frac{\partial R{_\infty}}{\partial \pi}}:=\max\left\{L_{\frac{\partial R{_1}}{\partial \pi}},\ldots L_{\frac{\partial R{_N}}{\partial \pi}}\right\}$.

Given a pair of metric spaces $\left(X_1,d_1\right)$ and $\left(X_1,d_2\right)$, we say that a function $f:X_1\to X_2$ is Lipschitz if the constant defined by $
L_f:=\underset{x_1\in X_1,x_2\in X_2}{\sup}\frac{d_2(f(x_1),f(x_2))}{d_1(x_1,x_2)} $
is finite.
The constant $L_f$ is called the Lipschitz constant. We denote by $L_R{_\infty}:=\max\{L_{R_1},\ldots L_{R_N}\}$ and similarly for the L-Lipschitz gradients $L_{\frac{\partial R{_\infty}}{\partial \pi}}:=\max\left\{L_{\frac{\partial R{_1}}{\partial \pi}},\ldots L_{\frac{\partial R{_N}}{\partial \pi}}\right\}$.

The results of the paper are built under the following assumptions:

\noindent\textbf{Assumption A.2.}
For any $\theta\in\Theta$, the functions $(R_{i,\theta})_{i\in\mathcal{N}}$ are bounded, measurable functions in the action inputs. 
\\
\noindent \textbf{Assumption A.3.}
The functions $\{R_i\}_{i\in\mathcal{N}}$ are Lipschitz and have L-Lipschitz continuous gradients in $\theta\in\Theta$ that is, for any $i\in\mathcal{N}$, there exist constants $L_{R_i}>0$ and $L_{\nabla_{\boldsymbol{\theta}}  R_i}>0$ s.th. for any $s,s'\in \mathcal{S}$ and $\forall\boldsymbol{a}\in \boldsymbol{\mathcal{A}}, \theta_a,\theta_b,\theta_c,\theta_d\in\Theta$ we have that:
\begin{align*}
\nonumber
&\left\|\nabla_{\boldsymbol{\theta}}  R_{i_{\theta_a}}(s';s,\boldsymbol{a})-\nabla_{\boldsymbol{\theta}}  R_{0_{\theta_b}}(s';s,\boldsymbol{a})\right\|
+\left\|R_{i_{\theta_c}}(s';s,\boldsymbol{a})- R_{i_{\theta_d}}(s';s,\boldsymbol{a})\right\|
\\&\leq L_{\nabla_{\boldsymbol{\theta}}  R_i}\left\|{\theta_a}-{\theta_b}\right\|+L_{R_i}\left\|{\theta_c}-{\theta_d}\right\|.
\end{align*}
\\
\noindent \textbf{Assumption A.4.}
The functions $\{R_i\}_{i\in\mathcal{N}}$ is continuously differentiable in the state and action inputs.
\\
\noindent \textbf{Assumption A.5.}
The set of policies $\{\pi\}_{i\in\mathcal{N},\boldsymbol{\eta}\in\boldsymbol{E}}$ is differentiable w.r.t. the policy parameter $\boldsymbol{\eta}$.

Assumption A.2. is rudimental and in general required in optimisation and stochastic approximation theory. Assumptions A.3. and A.4.  is typical in Q-learning proofs see pg 27 in \cite{szepesvari2005finite} (there in fact the transition function is also assumed to be Lipschitz),  assumption A7 in \cite{antos2008fitted}, in \cite{szepesvari2005finite} it is assumed that the transition function and reward function are smooth (see pg 21). In this setting, the assumptions are required to construct approximations of $\phi$ in terms of a differential equation. Assumption A.1. is fundamental to the structure of a state-based PG. In particular, it extends the notion of potentiality to the state input. The assumption is used in the proof of Theorem \ref{reduction_theorem}. Assumption A.5. is standard and required within the framework of policy gradient and actor-critic methods \cite{sutton2000policy,silver2014deterministic}.

\section{Proof of Theoretical Results}\label{sec:appendix_proof_results}
\subsection{Auxiliary Results}
\setcounter{lemma}{0}
\renewcommand\thelemma{\Alph{lemma}}
Let us denote by $\left(\mathcal{V},\|\|\right)$ any normed vector space.

\begin{lemma} \label{max_lemma}
For any
$f:\mathcal{V}\times\mathcal{V}\to\mathbb{R},g:\mathcal{V}\times\mathcal{V}\to\mathbb{R}$, we have that:
\begin{align}
\left\|\underset{a\in\mathcal{V}}{\sup}\:f(a)-\underset{a\in\mathcal{V}}{\sup}\: g(a)\right\| \leq \underset{a\in\mathcal{V}}{\sup}\: \left\|f(a)-g(a)\right\|.    \label{lemma_1_basic_max_ineq}
\end{align}
\end{lemma}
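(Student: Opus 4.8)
The plan is to reduce the claim to a one-sided estimate and then symmetrise. Since $f$ and $g$ are real-valued, the outer norm $\|\cdot\|$ is just the absolute value on $\mathbb{R}$, so it suffices to show both $\sup_a f(a) - \sup_a g(a) \le \sup_a |f(a)-g(a)|$ and the same inequality with the roles of $f$ and $g$ swapped; taking the maximum of the two then yields the absolute-value bound. I would also first dispose of the degenerate cases where one of the suprema is $+\infty$: if $\sup_a f(a) = +\infty$ while $\sup_a g(a)$ is finite, then $\sup_a |f(a)-g(a)|$ must also be $+\infty$ (otherwise $f = (f-g)+g$ would be bounded above), so the inequality holds trivially; the same applies with $f,g$ interchanged, and if both are $+\infty$ the left-hand side is not well-defined so we implicitly assume at least one supremum is finite, which is the case in all applications of the lemma.

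For the main (finite) case, fix $\varepsilon > 0$ and choose $a_\varepsilon \in \mathcal{V}$ with $f(a_\varepsilon) \ge \sup_{a} f(a) - \varepsilon$, which is possible by definition of the supremum. Then
\begin{align*}
\sup_{a} f(a) - \sup_{a} g(a) &\le f(a_\varepsilon) + \varepsilon - \sup_{a} g(a) \\
&\le f(a_\varepsilon) + \varepsilon - g(a_\varepsilon) \\
&\le |f(a_\varepsilon) - g(a_\varepsilon)| + \varepsilon \\
&\le \sup_{a} |f(a) - g(a)| + \varepsilon,
\end{align*}
where the second line uses $\sup_a g(a) \ge g(a_\varepsilon)$. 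Since $\varepsilon > 0$ was arbitrary, $\sup_a f(a) - \sup_a g(a) \le \sup_a |f(a)-g(a)|$. Exchanging $f$ and $g$ gives $\sup_a g(a) - \sup_a f(a) \le \sup_a |f(a)-g(a)|$, and combining the two inequalities yields $|\sup_a f(a) - \sup_a g(a)| \le \sup_a |f(a)-g(a)|$, which is exactly \eqref{lemma_1_basic_max_ineq}.

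There is no real obstacle here; the only point requiring a little care is that the supremum is in general not attained, which is why the $\varepsilon$-approximation argument is needed rather than simply plugging in a maximiser. If one is willing to additionally assume $\mathcal{V}$ compact and $f,g$ continuous (as is effectively the case when this lemma is applied to the Bellman operator with $\mathcal{V}$ a compact action set), the proof shortens slightly: pick $a^\star \in \arg\max f$, and the chain of inequalities above goes through verbatim with $\varepsilon = 0$. I would present the $\varepsilon$-argument for generality, since it costs nothing and keeps the lemma free of topological hypotheses.
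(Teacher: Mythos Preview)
Your proof is correct and follows essentially the same strategy as the paper: establish the one-sided bound $\sup_a f(a) - \sup_a g(a) \le \sup_a |f(a)-g(a)|$ and then symmetrise. The only difference is cosmetic: the paper avoids the $\varepsilon$-approximation entirely by starting from the pointwise inequality $f(a) \le |f(a)-g(a)| + g(a)$ and taking $\sup_a$ on both sides directly, which yields $\sup_a f(a) \le \sup_a |f(a)-g(a)| + \sup_a g(a)$ without ever needing a near-maximiser---so the care you mention about the supremum not being attained is in fact unnecessary here.
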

\begin{proof}
\begin{align}
f(a)&\leq \left\|f(a)-g(a)\right\|+g(a)\label{max_inequality_proof_start}
\\\implies\nonumber
\underset{a\in\mathcal{V}}{\sup}f(a)&\leq \underset{a\in\mathcal{V}}{\sup}\{\left\|f(a)-g(a)\right\|+g(a)\}
\\&\leq \underset{a\in\mathcal{V}}{\sup}\left\|f(a)-g(a)\right\|+\underset{a\in\mathcal{V}}{\sup}\;g(a). \label{max_inequality}
\end{align}
Deducting $\underset{a\in\mathcal{V}}{\sup}\;g(a)$ from both sides of (\ref{max_inequality}) yields:
\begin{align}
    \underset{a\in\mathcal{V}}{\sup}f(a)-\underset{a\in\mathcal{V}}{\sup}g(a)\leq \underset{a\in\mathcal{V}}{\sup}\left\|f(a)-g(a)\right\|.\label{max_inequality_result_last}
\end{align}
After reversing the roles of $f$ and $g$ and redoing steps (\ref{max_inequality_proof_start}) - (\ref{max_inequality}), we deduce the desired result since the RHS of (\ref{max_inequality_result_last}) is unchanged.
\end{proof}

The proof of the Theorem \ref{reduction_theorem} is established through the following results:
\begin{lemma}\label{trans_lemma}
For any c-SPG, the state transitivity conditions holds whenever the reward functions takes the following form:\begin{align*}
&R_i(s,(a^i,a^{-i}))=g(s,(a^i,a^{-i}))+k(s)h_i(a^i,a^{-i}),
\end{align*}
for any functions $g,k,h$ for which $k^{-1}$ exists. 
\end{lemma}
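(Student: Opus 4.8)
The plan is to verify that the proposed functional form $R_i(s,(a^i,a^{-i})) = g(s,(a^i,a^{-i})) + k(s)h_i(a^i,a^{-i})$ implies the state transitivity condition by direct substitution, and to identify the potential function $\phi$ that witnesses it. First I would recall that since the SG is already assumed to be a c-SPG, there exists a potential $\phi$ satisfying the static potential condition \eqref{potential_condition_static}. The goal is to show that \emph{this same} $\phi$ (or a closely related one built from $g$) also satisfies the state transitivity identity $R_i(s,\boldsymbol{a}) - R_i(s',\boldsymbol{a}) = \phi(s,\boldsymbol{a}) - \phi(s',\boldsymbol{a})$ for all $i$.

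The key computation is as follows. Fix $\boldsymbol{a}=(a^i,a^{-i})$ and two states $s,s'$. From the assumed form,
\begin{align*}
R_i(s,\boldsymbol{a}) - R_i(s',\boldsymbol{a}) = \bigl(g(s,\boldsymbol{a}) - g(s',\boldsymbol{a})\bigr) + \bigl(k(s) - k(s')\bigr)h_i(\boldsymbol{a}).
\end{align*}
The troublesome term is $\bigl(k(s)-k(s')\bigr)h_i(\boldsymbol{a})$, which a priori depends on $i$ through $h_i$. The plan is to show this term is actually independent of $i$ under the standing c-SPG hypothesis: applying the static potential condition \eqref{potential_condition_static} to the reward decomposition at a fixed state $s$, the deviation differences $R_i(s,(a^i,a^{-i})) - R_i(s,(a'^i,a^{-i}))$ must coincide with $\phi(s,(a^i,a^{-i})) - \phi(s,(a'^i,a^{-i}))$, and this forces $k(s)\bigl(h_i(a^i,a^{-i}) - h_i(a'^i,a^{-i})\bigr)$ to be $i$-independent. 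Dividing through by $k(s)$ (which is legitimate since $k^{-1}$ exists, so $k(s)\neq 0$), the increments of $h_i$ under unilateral deviations are $i$-independent, so $h_i = h + \psi_i$ where $h$ is a common function and $\psi_i(a^{-i})$ does not depend on $a^i$. I would then argue that the $\psi_i$ terms can be absorbed: since $\psi_i$ depends only on $a^{-i}$, one may redefine the decomposition so that the state-coupled part is $k(s)h(\boldsymbol{a})$ with $h$ common to all agents, and set $\phi(s,\boldsymbol{a}) := g(s,\boldsymbol{a}) + k(s)h(\boldsymbol{a})$ (up to the usual harmless additive structure). Then $R_i(s,\boldsymbol{a}) - R_i(s',\boldsymbol{a}) = \phi(s,\boldsymbol{a}) - \phi(s',\boldsymbol{a})$ holds for every $i$, which is exactly state transitivity.

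The main obstacle I anticipate is handling the agent-dependent residual $\psi_i$ cleanly: one has to check that the $\psi_i(a^{-i})$ pieces, which are ``dummy'' terms from agent $i$'s perspective, genuinely drop out of \emph{both} the static potential identity and the state transitivity identity, rather than just assuming it. The static condition only probes unilateral deviations $a^i \to a'^i$ with $a^{-i}$ fixed, so it cannot see $\psi_i$ at all; the state transitivity condition probes $s\to s'$ with $\boldsymbol{a}$ fixed, and there the term $\bigl(k(s)-k(s')\bigr)\psi_i(a^{-i})$ is problematic unless we can show it vanishes or is absorbed. I expect the resolution is that the statement is really about \emph{existence} of some $\phi$, and the correct reading is that the decomposition itself already encodes $h_i = h$ common (the $\psi_i$ case being excluded, or folded into $g$ via a symmetrization); so I would either (a) strengthen the reading of the hypothesis so $h_i \equiv h$, or (b) show the $\psi_i$ terms are forced to be constant in $a^{-i}$ by combining the two conditions, hence absorbable into $g$. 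Either way the construction of $\phi$ as $g(s,\boldsymbol{a}) + k(s)h(\boldsymbol{a})$ is the crux, and the verification of both defining identities is then routine algebra.
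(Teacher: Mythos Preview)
Your direct-verification approach differs from the paper's and runs into exactly the obstruction you yourself flagged. The paper does \emph{not} try to show that the original rewards $R_i$ satisfy state transitivity; instead it \emph{rescales} by $k^{-1}(s)$ and works with $V_i(s,\boldsymbol{a}) := k^{-1}(s)R_i(s,\boldsymbol{a}) = k^{-1}(s)g(s,\boldsymbol{a}) + h_i(\boldsymbol{a})$. After this rescaling the agent-specific piece $h_i(\boldsymbol{a})$ is state-independent, so it cancels in any state difference:
\[
V_i(s,\boldsymbol{a}) - V_i(s',\boldsymbol{a}) = L(s,\boldsymbol{a}) - L(s',\boldsymbol{a}),\qquad L:=k^{-1}g,
\]
and state transitivity holds for the rescaled stage game with potential $L$. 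The invertibility of $k$ is used precisely here, to make the rescaling well-defined. That is the entire argument.

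The gap in your plan is real: for the \emph{unrescaled} rewards, the term $(k(s)-k(s'))\psi_i(a^{-i})$ is genuinely $i$-dependent and neither of your proposed escapes (a) or (b) eliminates it. Option (a) simply strengthens the hypothesis beyond what the lemma assumes. Option (b) fails because the static potential condition \eqref{potential_condition_static} only probes deviations in $a^i$ with $a^{-i}$ fixed, so it constrains $h_i$ only up to an arbitrary $\psi_i(a^{-i})$; there is no further identity to force $\psi_i$ to be constant. The rescaling trick is precisely what sidesteps this: once you divide by $k(s)$, the whole $h_i$ term (including any $\psi_i$) loses its state dependence and drops out of the state difference, regardless of whether it is $i$-dependent.
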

\begin{proof}
To prove the result, we show that the class of games can be rescaled accordingly. Indeed, we first note that $
k^{-1}(s)R_i(s,(a^i,a^{-i}))-k^{-1}(s')R_i(s',(a^i,a^{-i}))
=k^{-1}(s)g(s,(a^i,a^{-i}))-k^{-1}(s')g(s',(a^i,a^{-i}))$. Using the invertibility of $k$, we now consider a rescaled game \\$\mathcal{M}(s)=\langle \left(\mathcal{A}_i\right)_{i\in\mathcal{N}},\left(V_{i}(s)\right)_{i\in\mathcal{N}},\mathcal{N}\rangle$ where $V_i:=k^{-1}R_i$, then it is easy to see that for these games we have: $
V_i(s,(a^i,a^{-i}))-V_i(s',(a^i,a^{-i}))=L(s,(a^i,a^{-i}))-L(s',(a^i,a^{-i}))$
where $L:=k^{-1}g$ and hence the state transitivity assumption is satisfied.
\end{proof}

\begin{lemma}\label{dummy_coord_prop}
For any PG, there exists a function $B:\boldsymbol{\Pi}\times \mathcal{S}\to\mathbb{R}$ ($B\in\mathcal{H}$) such that the following holds for any $i\in\mathcal{N}, \forall (a^i_t,a^{-i}_t)\in\boldsymbol{\mathcal{A}}, \forall s \in\mathcal{S}$ $
R_i(s,a^i_t,a^{-i}_t)=\phi(s,a^i_t,a^{-i}_t)+F_{i}(s,a^{-i}_t)$ 
where $F_i$ satisfies $
F_{i}(s,a^{-i}_t)=F_{i}(s',a^{-i}_t)$.
\end{lemma}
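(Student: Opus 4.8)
\textbf{Proof plan for Lemma \ref{dummy_coord_prop}.}

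The plan is to exploit the defining difference condition of a potential game, Equation \eqref{potential_condition_static}, to show that $R_i - \phi$ depends only on the other agents' actions $a^{-i}$ (and on $s$), and then to invoke the state transitivity assumption to show that this residual is in fact constant in $s$. Concretely, I would first fix $s \in \mathcal{S}$ and $i \in \mathcal{N}$ and define the candidate residual $F_i(s, a^{-i}) := R_i(s, (a^i, a^{-i})) - \phi(s, (a^i, a^{-i}))$. I need to show this is well-defined, i.e. independent of the choice of $a^i$. Take any $a^i, a'^i \in \mathcal{A}_i$ with the same $a^{-i}$; by \eqref{potential_condition_static},
\begin{align*}
R_i(s,(a^i,a^{-i})) - R_i(s,(a'^i,a^{-i})) = \phi(s,(a^i,a^{-i})) - \phi(s,(a'^i,a^{-i})),
\end{align*}
and rearranging gives $R_i(s,(a^i,a^{-i})) - \phi(s,(a^i,a^{-i})) = R_i(s,(a'^i,a^{-i})) - \phi(s,(a'^i,a^{-i}))$, so the value does not depend on $a^i$. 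Hence $F_i$ is a legitimate function of $(s, a^{-i})$ alone, and by construction $R_i(s, a^i_t, a^{-i}_t) = \phi(s, a^i_t, a^{-i}_t) + F_i(s, a^{-i}_t)$.

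Next I would establish the claim that $F_i(s, a^{-i}_t) = F_i(s', a^{-i}_t)$ for all $s, s'$. The state transitivity assumption (Definition 3) gives, with the same potential $\phi$ (note the construction in the paper uses a single $\phi$ witnessing both conditions),
\begin{align*}
R_i(s,(a^i,a^{-i})) - R_i(s',(a^i,a^{-i})) = \phi(s,(a^i,a^{-i})) - \phi(s',(a^i,a^{-i})).
\end{align*}
Rearranging this yields $R_i(s,(a^i,a^{-i})) - \phi(s,(a^i,a^{-i})) = R_i(s',(a^i,a^{-i})) - \phi(s',(a^i,a^{-i}))$, i.e. $F_i(s, a^{-i}) = F_i(s', a^{-i})$ for every choice of $a^i$ (and the value is the same for all $a^i$ by the previous paragraph, so there is no ambiguity). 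This is exactly the stated invariance of $F_i$ in its state argument.

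Finally, to produce the function $B: \boldsymbol{\Pi} \times \mathcal{S} \to \mathbb{R}$ promised in the statement, I would simply set $B^{\boldsymbol{\pi}}(s) = \mathbb{E}[\sum_{t=0}^\infty \gamma^t \phi(s_t, \boldsymbol{a}_t) \mid \boldsymbol{a}_t \sim \boldsymbol{\pi}]$, which is well-defined and measurable under Assumption A.2 (boundedness of rewards, hence of $\phi$ up to the decomposition) — this is the DPF already introduced after Theorem \ref{reduction_theorem}. Measurability of $F_i$ as an element of $\mathcal{H}$ follows since it is a difference of measurable functions. The main obstacle I anticipate is purely bookkeeping: making sure the \emph{same} $\phi$ can be taken to witness both the potential-game condition \eqref{potential_condition_static} and the state-transitivity condition simultaneously (the construction in Section \ref{sec_cspgs} is set up so that it does), and being careful that the "for all $a^i$" quantifier is handled correctly so that $F_i$ genuinely drops its dependence on agent $i$'s own action. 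There is no hard analytic content here — it is a direct algebraic consequence of the two difference identities.
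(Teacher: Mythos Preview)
Your proposal is correct and follows essentially the same approach as the paper: define the residual $T_i := R_i - \phi$, use the potential condition \eqref{potential_condition_static} to show $T_i$ does not depend on $a^i$, and then use the state-transitivity condition to show $T_i$ does not depend on $s$. The paper additionally sketches the (trivial) converse direction but does not bother to construct $B$ inside this lemma's proof, so your extra paragraph on $B$ is harmless but unnecessary here.
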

The result generalises dummy-coordination separability known in PGs to a state-based setting \cite{slade1994does,ui2000shapley}.

\begin{proof}[Proof of Lemma \ref{dummy_coord_prop}]
To establish the forward implication, we make the following  observation which is straightforward:
\begin{align*}
&R_i(s,a^i_t,a^{-i}_t)-R_i(s,a'^i_t,a^{-i}_t)
\\&=\phi(s,a^i_t,a^{-i}_t)+F_{i}(s,a^{-i}_t)-\left(\phi(s,a^i_t,a^{-i}_t)+F_{i}(s,a^{-i}_t)\right)
\\&=\phi(s,a^i_t,a^{-i}_t)-\phi(s,a^i_t,a^{-i}_t)
\end{align*}
To prove the reverse assume that the game is a state-based potential game. Let us now define the function $T_i(s,a^i_t,a^{-i}_t):=R_i(s,a^i_t,a^{-i}_t)-\phi(s,a^i_t,a^{-i}_t)$, then we observe that: $
    R_i(s,a^i_t,a^{-i}_t)-R_i(s,a'^i_t,a^{-i}_t)=\phi(s,a^i_t,a^{-i}_t)-\phi(s,a'^i_t,a^{-i}_t) \iff R_i(s,a^i_t,a^{-i}_t)-\phi(s,a^i_t,a^{-i}_t)=R_i(s,a'^i_t,a^{-i}_t)-\phi(s,a'^i_t,a^{-i}_t)$
and hence $
    T_i(s,a^i_t,a^{-i}_t)=T_i(s,a'^i_t,a^{-i}_t)$
which implies that $T_i(s,a^i_t,a^{-i}_t)\equiv K_i(s,a^{-i}_t)$. In a similar way, writing $T_i(s,a^i_t,a^{-i}_t):=R_i(s,a^i_t,a^{-i}_t)-\phi(s,a^i_t,a^{-i}_t)$ and using the state transitive property, we deduce that $T_i(s',a^i_t,a^{-i}_t)=T_i(s,a^i_t,a^{-i}_t)$ which settles the proof.
\end{proof}

\subsection{Proof of Main Results}
\subsection*{Proof of Theorem \ref{reduction_theorem}}

\begin{proposition}\label{dpg_proposition}
There exists a function $B:\boldsymbol{\Pi}\times \mathcal{S}\to\mathbb{R}$ ($B\in\mathcal{H}$) and the following holds for any $i\in\mathcal{N}$
\begin{align}
\mathbb{E}_{s\sim P(\cdot|)}\left[ v^{\boldsymbol{\pi}}_i(s)-v^{\boldsymbol{\pi'}}_i(s)
\right]=\mathbb{E}_{s\sim P(\cdot|)}\left[B^{\boldsymbol{\pi}}(s)-B^{\boldsymbol{\pi'}}(s)\right].\label{potential_relation_proof}
\end{align}
\end{proposition}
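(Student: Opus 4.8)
# Proof Proposal for Proposition \ref{dpg_proposition}

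The plan is to exhibit the dynamic potential function
\[
B^{\boldsymbol{\pi}}(s)=\mathbb{E}\Big[\textstyle\sum_{t\ge 0}\gamma^t\phi(s_t,\boldsymbol{a}_t)\ \Big|\ \boldsymbol{a}_t\sim\boldsymbol{\pi}\Big]
\]
(the DPF of $\mathscr M^\dagger$) as the function $B$ asserted in the statement, and to verify \eqref{potential_relation_proof} for the relevant case $\boldsymbol{\pi}'=(\pi'_i,\pi_{-i})$ (a unilateral deviation of agent $i$) by a dynamic-programming argument that lifts the one-shot potential property to the infinite-horizon return. Since any two candidate DPFs differ only by an additive constant, the right-hand side of \eqref{potential_relation_proof} does not depend on which $B$ we pick, so it suffices to work with this one.

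First I would record the reward decomposition. By the stage-game potential condition \eqref{potential_condition_static} the map $\boldsymbol{a}\mapsto R_i(s,\boldsymbol{a})-\phi(s,\boldsymbol{a})$ is independent of $a^i$, and combining this with the state-transitivity assumption (equivalently, invoking Lemma~\ref{dummy_coord_prop}, with Lemma~\ref{trans_lemma} describing the admissible reward shapes) yields
\[
R_i(s,\boldsymbol{a})=\phi(s,\boldsymbol{a})+F_i(a^{-i}),
\]
where $F_i$ depends neither on the state nor on agent $i$'s own action. By linearity of expectation this gives $v_i^{\boldsymbol{\pi}}(s)=B^{\boldsymbol{\pi}}(s)+W_i^{\boldsymbol{\pi}}(s)$ with $W_i^{\boldsymbol{\pi}}(s):=\mathbb{E}[\sum_{t\ge0}\gamma^t F_i(a^{-i}_t)\mid\boldsymbol{a}_t\sim\boldsymbol{\pi}]$, so that \eqref{potential_relation_proof} is equivalent to $\mathbb{E}_{s\sim P(\cdot|)}[W_i^{(\pi_i,\pi_{-i})}(s)-W_i^{(\pi'_i,\pi_{-i})}(s)]=0$. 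To establish this I would argue on the finite-horizon truncations $v_{i,T}^{\boldsymbol{\pi}},B_T^{\boldsymbol{\pi}}$ by induction on $T$: the base case $T=0$ is precisely the static potential condition after integrating out $a^{-i}$ (the $F_i$ term cancels), and the inductive step expands $v_{i,T}^{\boldsymbol{\pi}}(s)=\mathbb{E}_{\boldsymbol{a}\sim\boldsymbol{\pi}(s)}[R_i(s,\boldsymbol{a})+\gamma\!\int P(\mathrm d s'|s,\boldsymbol{a})\,v_{i,T-1}^{\boldsymbol{\pi}}(s')]$, substitutes the decomposition, applies the induction hypothesis to the $v_{i,T-1}$ difference, and uses state transitivity to strip the residual state dependence of the correction term. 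Passing to the limit $T\to\infty$ is justified by boundedness of the rewards (Assumption~A.2) together with the ``continuous at infinity'' property of $\mathcal G$ (uniformly small tails $\sum_{t\ge T}\gamma^t$), which is the same limiting device used for Theorem~\ref{reduction_theorem}; adding $B^{\boldsymbol{\pi}}-B^{\boldsymbol{\pi}'}$ back to both sides then yields \eqref{potential_relation_proof}.

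The main obstacle is the inductive step. Because the transition kernel $P$ depends on agent $i$'s action, a deviation $\pi_i\to\pi'_i$ perturbs the state occupancy and hence, a priori, the correction term $W_i^{\boldsymbol{\pi}}$ — even though $F_i$ itself is insensitive to $a^i$. Controlling this coupling is exactly where state transitivity is essential: it forces $R_i-\phi$ to be state-independent, so that after integrating over $a^{-i}$ and averaging against the state distribution implicit in $\mathbb{E}_{s\sim P(\cdot|)}$ the policy-dependent discrepancy telescopes. I expect the careful bookkeeping of which state distribution the outer expectation is taken over — and checking that it is the one making this cancellation exact — to be the most delicate part of the write-up; everything else is routine manipulation of Bellman recursions.
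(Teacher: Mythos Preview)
Your proposal is correct and follows essentially the same route as the paper. Both define $B$ as the DPF, invoke Lemma~\ref{dummy_coord_prop} (together with state transitivity) to obtain the decomposition $R_i=\phi+F_i$ with $F_i$ depending on neither $s$ nor $a^i$, prove the finite-horizon identity by induction on the horizon, and pass to the limit via the geometric tail bound; the cancellation you flag as the ``main obstacle'' --- that the $F_i$-contribution survives the change of next-state distribution induced by $\pi_i\to\pi'_i$ precisely because $F_i$ is state-independent, so $\int P(\mathrm d s'|s,a^i,a^{-i})F_i(a^{-i})$ is the same for all $a^i$ --- is exactly the step the paper isolates in its inductive argument (there the correction term is written $G^{\pi_{-i}}_{i,k}$, your $W_i$). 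One small point of precision: in the inductive step you cannot literally ``apply the induction hypothesis to the $v_{i,T-1}$ difference'' because the two terms live under different next-state measures; the paper (and implicitly your own additive decomposition $v_i=B+W_i$) resolves this by carrying the stronger inductive hypothesis $v_{i,T-1}=B_{T-1}+G_{i,T-1}$ with $G$ state-independent, which is what actually makes the cancellation go through.
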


\begin{proof}

We prove the proposition in two parts beginning with the finite case then extending to the infinite horizon case.

Hence, we first seek to show that for any joint strategy $(\pi_i,\pi_{-i})\in\boldsymbol{\Pi}$, define by $v_{i,k}$ the value function for the finite horizon game of length $k\in\mathbb{N}$ (i.e. \\$
v^{\boldsymbol{\pi}}_{i,k}(s):=\mathbb{E}_{s_t\sim P,\pi_i,\pi_{-i}}\left[\sum_{t=0}^k\gamma^tR_i(s_t,\boldsymbol{a}_t)|s\equiv s_0\right]$ for any $i\in\mathcal{N}$ and $k<\infty$). Then there exists a function $B_k:\boldsymbol{\Pi}\times \mathcal{S}\to\mathbb{R}$ such that the following holds for any $i\in\mathcal{N}$ and $\forall \pi_i,\pi'_i\in\Pi_i, \forall \pi_{-i}\in\Pi_{-i}$:
\begin{align}
& \mathbb{E}_{s\sim P(\cdot|)}\left[v^{\boldsymbol{\pi}}_{i,k}(s)-v^{\boldsymbol{\pi'}}_{i,k}(s)\right]
=\mathbb{E}_{s\sim P(\cdot|)}\left[B_k^{\boldsymbol{\pi}}(s)-B_k^{\boldsymbol{\pi'}}(s)\right].
\label{dpg_equality_finite}
\end{align}

For the finite horizon case, the result is proven by induction on the number of time steps until the end of the game. Unlike the infinite horizon case, for the finite horizon case the value function and policy have an explicit time dependence. 

We consider the case of the proposition at time $ T-1$ that is we evaluate the value functions at the penultimate time step. In the following, we employ the shorthands  $\boldsymbol{a}_k\equiv (a^i_k,a^{-i}_k)$ and by $\boldsymbol{a'}_k\equiv (a^{'i}_k,a^{-i}_k)$ for any $k\in\mathbb{N}$ and similarly $\boldsymbol{\pi}(\cdot)\equiv \prod_{j\in\mathcal{N}}\pi_j$ and $\boldsymbol{\pi'}(\cdot)\equiv \prod_{j\in\mathcal{N}/\{i\}}\pi_j(\cdot)\cdot\pi'_i(\cdot)$. We will also use the shorthands $F^{\boldsymbol{\pi}}\equiv F^{(\pi_i,\pi_{-i})}$ and $F^{\boldsymbol{\pi'}}\equiv F^{(\pi'_i,\pi_{-i})}$ given some function $F$.

In what follows and for the remainder of the script, we employ the following shorthands:
\begin{align*}
&\mathcal{P}^{\boldsymbol{a}}_{ss'}=:P(s';\boldsymbol{a},s), \quad\mathcal{P}^{\boldsymbol{\pi}}_{ss'}=:\int_{\boldsymbol{\mathcal{A}}}d\boldsymbol{a}_t\boldsymbol{\pi}(\boldsymbol{a}_t|s)\mathcal{P}^{\boldsymbol{a}_t}_{ss'}, \quad \mathcal{R}_i^{\boldsymbol{\pi}}(s_{t}):=\int_{\boldsymbol{\mathcal{A}}}d\boldsymbol{a}_t\boldsymbol{\pi}(\boldsymbol{a}_t|s_t)R_i(s_t,\boldsymbol{a}_t)
\\&\boldsymbol{\phi}^{\boldsymbol{\pi}}(s_t):=\int_{\boldsymbol{\mathcal{A}}}d\boldsymbol{a}_t\boldsymbol{\pi}(\boldsymbol{a}_t|s_t)\phi(s_t,\boldsymbol{a}_t), \quad \boldsymbol{F}_i^{\boldsymbol{\pi}}(s_t):=\int_{\boldsymbol{\mathcal{A}}}d\boldsymbol{a}_t\boldsymbol{\pi}(\boldsymbol{a}_t|s_t)F_i(s_t,\boldsymbol{a}_t), \quad \boldsymbol{v}_{i,k}^{\boldsymbol{\pi}}(s_t):=\int_{\boldsymbol{\mathcal{A}}}d\boldsymbol{a}_t\boldsymbol{\pi}(\boldsymbol{a}_t|s_t)v^{\boldsymbol{\pi}}_{i,k}(s_t,\boldsymbol{a}_t)
\end{align*}

In this case, we have that:

\begin{align*}\nonumber &\mathbb{E}_{s_{T-1}\sim P(\cdot|)}\left[v^{\boldsymbol{\pi}}_{i,T-1}(s_{T-1})-v^{\boldsymbol{\pi'}}_{i,T-1}(s_{T-1})\right]
\\&=\mathbb{E}_{s_{T-1}\sim P(\cdot|)}\left[\mathcal{R}_i^{\boldsymbol{\pi}}(s_{T})+\gamma\int_{\mathcal{S}}ds_T\int_{\boldsymbol{\mathcal{A}}} \mathcal{P}^{\boldsymbol{\pi}}_{s_Ts_{T-1}}v_i^{\boldsymbol{\pi}}(s_T)-\left( \mathcal{R}_i^{\boldsymbol{\pi'}}(s_{T})
+\gamma\int_{\mathcal{S}}ds_T\mathcal{P}^{\boldsymbol{\pi'}}_{s_Ts_{T-1}}v_i^{\boldsymbol{\pi'}}(s_T)\right)\right]
\\&\nonumber=\mathbb{E}_{s_{T-1}\sim P(\cdot|)}\Bigg[\boldsymbol{\phi}^{\boldsymbol{\pi}}(s_T)-\boldsymbol{\phi}^{\boldsymbol{\pi'}}(s_T)+\gamma\int_{\mathcal{S}}ds_T \mathcal{P}^{\boldsymbol{\pi}}_{s_Ts_{T-1}}v_i^{\boldsymbol{\pi}}(s_T)
-\gamma\int_{\mathcal{S}}ds\mathcal{P}^{\boldsymbol{\pi'}}_{s_Ts_{T-1}}v_i^{\boldsymbol{\pi'}}(s_T)\Bigg]
\\&\begin{aligned}=\mathbb{E}_{s_{T-1}\sim P(\cdot|)}\Bigg[&\boldsymbol{\phi}^{\boldsymbol{\pi}}(s_{T-1})-\boldsymbol{\phi}^{\boldsymbol{\pi'}}(s_{T-1})+\gamma\mathbb{E}_{s_{T}\sim P(\cdot|)}\Bigg[\boldsymbol{v}_i^{\boldsymbol{\pi}}(s_T)
- \boldsymbol{v}_i^{\boldsymbol{\pi'}}(s_T)\Bigg]\Bigg].
\end{aligned}
\end{align*}
We now observe that for any $\pi_i\in\Pi_i$ and for any $\pi_{-i}\in\Pi_{-i}$ we have that $\forall i\in\mathcal{N}, \;v_i^{\pi^{i},\pi^{-i}}(s_T)=\mathbb{E}_{s_T\sim P(\cdot|)}\left[\mathcal{R}_i^{\boldsymbol{\pi}}(s_{T})\right]$, moreover we have that for any $\pi_i,\pi_i'\in\Pi_i$ and for any $i\in\mathcal{N}$, we have
\begin{align*}
&\mathbb{E}_{s_{T}\sim P(\cdot|)}\Bigg[\mathcal{R}_i^{\boldsymbol{\pi}}(s_{T})-\mathcal{R}_i^{\boldsymbol{\pi'}}(s_{T})\Bigg]
\\&=\int_{\mathcal{S}} ds_T\mathcal{P}^{\boldsymbol{\pi}}_{s_Ts_{T-1}} \mathcal{R}_i^{\boldsymbol{\pi}}(s_{T})
-\int_{\mathcal{S}} ds_T\mathcal{P}^{\boldsymbol{\pi'}}_{s_Ts_{T-1}} \mathcal{R}_i^{\boldsymbol{\pi'}}(s_{T})
\\&=\int_{\mathcal{S}} ds_T\mathcal{P}^{\boldsymbol{\pi}}_{s_Ts_{T-1}}\left[ \boldsymbol{\phi}^{\boldsymbol{\pi}}(s_{T})+\boldsymbol{F}_i(a^{-i}_T)\right]
-\int_{\mathcal{S}} ds_T\mathcal{P}^{\boldsymbol{\pi'}}_{s_Ts_{T-1}}\left[ \boldsymbol{\phi}^{\boldsymbol{\pi'}}(s_{T})+\boldsymbol{F}_i(a^{-i}_T)\right]
\\&=\int_{\mathcal{S}} ds_T\mathcal{P}^{\boldsymbol{\pi}}_{s_Ts_{T-1}} \boldsymbol{\phi}^{\boldsymbol{\pi}}(s_{T})-\int_{\mathcal{S}} ds_T\mathcal{P}^{\boldsymbol{\pi'}}_{s_Ts_{T-1}} \boldsymbol{\phi}^{\boldsymbol{\pi'}}(s_{T})+\int_{\mathcal{S}} ds_T\mathcal{P}^{\boldsymbol{\pi}}_{s_Ts_{T-1}}\boldsymbol{F}_i(a^{-i}_T)-\int_{\mathcal{S}} ds_T\mathcal{P}^{\boldsymbol{\pi'}}_{s_Ts_{T-1}}\boldsymbol{F}_i(a^{-i}_T)
\\&=\int_{\mathcal{S}} ds_T\mathcal{P}^{\boldsymbol{\pi}}_{s_Ts_{T-1}} \boldsymbol{\phi}^{\boldsymbol{\pi}}(s_{T})-\int_{\mathcal{S}} ds_T\mathcal{P}^{\boldsymbol{\pi'}}_{s_Ts_{T-1}} \boldsymbol{\phi}^{\boldsymbol{\pi'}}(s_{T})
\\&=\mathbb{E}_{s_{T}\sim P(\cdot|)}\left[\boldsymbol{\phi}^{\boldsymbol{\pi}}(s_{T})-\boldsymbol{\phi}^{\boldsymbol{\pi'}}(s_{T})\right]
\end{align*}
Since 
\begin{align*}
    &\int_{\mathcal{S}} ds_T\mathcal{P}^{\boldsymbol{\pi}}_{s_Ts_{T-1}}\boldsymbol{F}_i(a^{-i}_T)-\int_{\mathcal{S}} ds_T\mathcal{P}^{\boldsymbol{\pi'}}_{s_Ts_{T-1}}\boldsymbol{F}_i(a^{-i}_T)
\\&= \int_{\mathcal{A}_{-i}}\pi_i(da^i,s_{T-1})F_i(a^{-i})\int_{\mathcal{A}_i}\pi_{-i}(da^{-i},s_{T-1})\int_{\mathcal{S}} ds_T P(s_{T};s_{T-1},\boldsymbol{a}_T)
\\&\qquad\qquad\qquad- \int_{\mathcal{A}_{-i}}\pi'_i(da'^i,s_{T-1})F_i(a^{-i})\int_{\mathcal{A}_i}\pi_{-i}(da^{-i},s_{T-1}) \int_{\mathcal{S}} ds_T P(s_{T};s_{T-1},\boldsymbol{a'}_T)
\\&= \int_{\mathcal{A}_{-i}}\pi_{-i}(da^{-i},s_{T-1}) F_i(a^{-i})\left\{\int_{\mathcal{A}_i}\pi_i(da^i,s_{T-1})-\int_{\mathcal{A}_i}\pi'_i(da'^i,s_{T-1}) \right\}
\\&=0
\end{align*}
Hence, we find that
\begin{align*}\nonumber
&\mathbb{E}_{s_{T-1}\sim P(\cdot|)}\left[v^{\boldsymbol{\pi}}_{i,T-1}(s_{T-1})-v^{\boldsymbol{\pi'}}_{i,T-1}(s_{T-1})\right]
\\&=\mathbb{E}_{s_{T-1}\sim P(\cdot|)}\left[\boldsymbol{\phi}^{\boldsymbol{\pi}}(s_{T-1})-\boldsymbol{\phi}^{\boldsymbol{\pi'}}(s_{T-1})+\gamma\int_{\mathcal{S}} ds_T\mathcal{P}^{\boldsymbol{\pi}}_{s_Ts_{T-1}} \boldsymbol{\phi}^{\boldsymbol{\pi}}(s_{T})-\gamma\int_{\mathcal{S}} ds_T\mathcal{P}^{\boldsymbol{\pi'}}_{s_Ts_{T-1}} \boldsymbol{\phi}^{\boldsymbol{\pi'}}(s_{T})\right]
\\&\nonumber=\mathbb{E}_{s_{T-1}\sim P(\cdot|)}\left[\boldsymbol{\phi}^{\boldsymbol{\pi}}(s_{T-1})-\boldsymbol{\phi}^{\boldsymbol{\pi'}}(s_{T})+\gamma\mathbb{E}_{s_{T}\sim P(s_{T}|\cdot)}\left[\boldsymbol{\phi}^{\boldsymbol{\pi}}(s_{T-1})-\boldsymbol{\phi}^{\boldsymbol{\pi'}}(s_{T})\right]\right]
\\&=\mathbb{E}_{s_{T-1}\sim P(\cdot|)}\Bigg[
B_{T-1}^{\boldsymbol{\pi}}(s_{T-1})-B_{T-1}^{\boldsymbol{\pi'}}(s_{T-1})\Bigg],
\end{align*}
using the iterated law of expectations in the last line and where
\begin{align}
B^{\boldsymbol{\pi}}_T(s):=\mathbb{E}_{s_t\sim P,\pi_i,\pi_{-i}}\left[\sum_{t=0}^T\gamma^t\phi(s_t,\boldsymbol{a}_t)|s\equiv s_0]\right] \label{expression_for_B}.
\end{align} 

Hence, we have succeeded in proving that the expression (\ref{potential_relation_proof}) holds for $T-k$ when $k=1$.

Our next goal is to prove that the expression holds for any $0<k\leq T$.

\noindent Note that for any $T \geq k> 0$, we can write (\ref{expression_for_B}) as 
\begin{align*}
B^{\boldsymbol{\pi}}_{T-k}(s)=\mathbb{E}_{\pi_i,\pi_{-i}}\left[
\phi(s,\boldsymbol{a}_{k})+\gamma\int_{\mathcal{S}}ds' P(s';s,\boldsymbol{a}_{k})B_{T-(k+1)}^{\pi^i,\pi^{-i}}(s')\cdot 1_{k\leq T}\right].
\end{align*}

Now we consider the case when we evaluate the expression (\ref{potential_relation_proof}) for any $0<k\leq T$. Our inductive hypothesis is the expression holds for some $0<k\leq T$, that is for a $0<k\leq T$ we have that:
\begin{align}
 &\mathbb{E}_{s_{T-k}\sim P(\cdot|)}\left[v^{\boldsymbol{\pi}}_{i,k}(s_{T-k})-v^{\boldsymbol{\pi'}}_{i,k}(s_{T-k})\right]
=\mathbb{E}_{s_{T-k}\sim P(\cdot|)}\left[B^{\boldsymbol{\pi}}_{k}(s_{T-k})-B^{\boldsymbol{\pi'}}_k(s_{T-k})\right].\label{inductive_hyp}
\end{align}

It is easy to see that given \eqref{inductive_hyp} and Lemma \ref{dummy_coord_prop}, it must be the case that:
\begin{align}
 &\mathbb{E}_{s_{T-k}\sim P(\cdot|)}\left[v^{\boldsymbol{\pi}}_{i,k}(s_{T-k})\right]
=\mathbb{E}_{s_{T-k}\sim P(\cdot|)}\left[B^{\boldsymbol{\pi}}_{k}(s_{T-k})+G^{\pi_{-i}}_{i,k}(s_{T-k})\right].\label{inductive_hyp_dummy}
\end{align}
where $G^{\pi_{-i}}_{i,k}(s):=\mathbb{E}_{P,\pi_{-i}}\left[\sum_{t=0}^k\gamma^tF_{-i}(s,a_t^{-i})\right]$. 

Moreover, we recall that $F_{-i}$ satisfies the condition $F_{-i}(s,a^{-i}_t)=F_{-i}(s',a^{-i}_t)$, hence $G^{\pi_{-i}}_{i,k}(s)=G^{\pi_{-i}}_{i,k}(s')$ so from now on we drop the dependence on $s$ and write $G^{\pi_{-i}}_{i,k}$.

It remains to show that the expression holds for $k+1$ time steps prior to the end of the horizon. The result can be obtained using the dynamic programming principle and the base case ($k=1$) result, indeed we have that
\begin{align*}\nonumber
&\mathbb{E}_{s_{T-(k+1)}\sim P(\cdot|)}\left[v^{\boldsymbol{\pi}}_{i,k+1}(s_{T-(k+1)})-v^{\boldsymbol{\pi'}}_{i,k+1}(s_{T-(k+1)})\right]
\\&\begin{aligned}=\mathbb{E}_{s_{T-(k+1)}\sim P(\cdot|)}\Bigg[\mathcal{R}_i^{\boldsymbol{\pi}}(s_{T-(k+1)})&+\gamma\int_{\mathcal{S}}ds_{T-k}\mathcal{P}^{\boldsymbol{\pi}}_{s_{T-(k+1)}s_{T-k}}\boldsymbol{v}_{i,k}^{\boldsymbol{\pi}}(s_{T-k})
\\&-
\mathcal{R}_i^{\boldsymbol{\pi'}}(s_{T-(k+1)})-\gamma\int_{\mathcal{S}}ds_{T-k}\mathcal{P}^{\boldsymbol{\pi'}}_{s_{T-(k+1)}s_{T-k}}\boldsymbol{v}_{i,k}^{\boldsymbol{\pi'}}(s_{T-k})\Bigg]
\end{aligned}
\\&=\mathbb{E}_{s_{T-(k+1)}\sim P(\cdot|)}\left[\mathcal{R}_i^{\boldsymbol{\pi}}(s_{T-(k+1)})-\mathcal{R}_i^{\boldsymbol{\pi'}}(s_{T-(k+1)})\right]
\\&+\gamma\mathbb{E}_{s_{T-(k+1)}\sim P(\cdot|)}\Bigg[\int_{\mathcal{S}}ds\mathcal{P}^{\boldsymbol{\pi}}_{s_{T-(k+1)}s_{T-k}}\boldsymbol{v}_{i,k}^{\boldsymbol{\pi}}(s_{T-k})-\mathcal{P}^{\boldsymbol{\pi'}}_{s_{T-(k+1)}s_{T-k}}\boldsymbol{v}_{i,k}^{\boldsymbol{\pi'}}(s_{T-k})\Bigg].
\end{align*}
Studying the terms under the first expression, we observe that by construction, we have that:
\begin{align}
&\mathbb{E}_{s_{T-(k+1)}\sim P(\cdot|)}\Bigg[\mathcal{R}_i^{\boldsymbol{\pi}}(s_{T-(k+1)})-
\mathcal{R}_i^{\boldsymbol{\pi'}}(s_{T-(k+1)})\Bigg]
=\mathbb{E}_{s_{T-(k+1)}\sim P(\cdot|)}\left[\boldsymbol{\phi}^{\boldsymbol{\pi}}(s_{T-(k+1)})-\boldsymbol{\phi}^{\boldsymbol{\pi}'}(s_{T-(k+1)})\right].    \label{potential_condition_proof_result_cpt1}
\end{align}
We now study the terms within the second expectation. 

Using (\ref{inductive_hyp}) (i.e. the inductive hypothesis), we find that:
\begin{align*}
&\int_{\mathcal{S}} ds_{T-k}\mathcal{P}^{\boldsymbol{\pi}}_{s_{T-k}s_{T-k-1}}\boldsymbol{v}_{i,k}^{\boldsymbol{\pi}}(s_{T-k})
-\int_{\mathcal{S}} ds_{T-k}\mathcal{P}^{\boldsymbol{\pi'}}_{s_{T-k}s_{T-k-1}}\boldsymbol{v}_{i,k}^{\boldsymbol{\pi'}}(s_{T-k})
\\&=\int_{\mathcal{S}} ds_{T-k}\mathcal{P}^{\boldsymbol{\pi}}_{s_{T-k}s_{T-k-1}}\left[ \boldsymbol{B}_k^{\boldsymbol{\pi}}(s_{T-k})+\boldsymbol{G}_{i,k}^{\pi_{-i}}\right]
-\int_{\mathcal{S}} ds_{T-k}\mathcal{P}^{\boldsymbol{\pi'}}_{s_{T-k}s_{T-k-1}}\left[ \boldsymbol{B}_k^{\boldsymbol{\pi'}}(s_{T-k})+\boldsymbol{G}_{i,k}^{\pi_{-i}}\right]
\\&\begin{aligned}=\int_{\mathcal{S}} ds_{T-k}\mathcal{P}^{\boldsymbol{\pi}}_{s_{T-k}s_{T-k-1}} \boldsymbol{B}_k^{\boldsymbol{\pi}}(s_{T-k})&-\int_{\mathcal{S}} ds_{T-k}\mathcal{P}^{\boldsymbol{\pi'}}_{s_{T-k}s_{T-k-1}} \boldsymbol{B}_k^{\boldsymbol{\pi'}}(s_{T-k})\\&+\int_{\mathcal{S}} ds_{T-k}\mathcal{P}^{\boldsymbol{\pi}}_{s_{T-k}s_{T-k-1}}\boldsymbol{G}_{i,k}^{\pi_{-i}}-\int_{\mathcal{S}} ds_{T-k}\mathcal{P}^{\boldsymbol{\pi'}}_{s_{T-k}s_{T-k-1}}\boldsymbol{G}_{i,k}^{\pi_{-i}}
\end{aligned}
\end{align*}
We now observe that
\begin{align*}
&\int_{\mathcal{S}} ds_{T-k}\mathcal{P}^{\boldsymbol{\pi}}_{s_{T-k}s_{T-k-1}}\boldsymbol{G}_{i,k}^{\pi_{-i}}-\int_{\mathcal{S}} ds_{T-k}\mathcal{P}^{\boldsymbol{\pi'}}_{s_{T-k}s_{T-k-1}}\boldsymbol{G}_{i,k}^{\pi_{-i}}
\\&=\int_{\mathcal{S}}ds_{T-k}\int_{\boldsymbol{\mathcal{A}}} \Bigg[\pi_i(da^i_{T-(k+1)},s_{T-(k+1)})-\pi'_i(da^{i}_{T-(k+1)},s_{T-(k+1)})\Bigg]\\
&\qquad\qquad\qquad\cdot P(s_{T-k};s_{T-(k+1)},\boldsymbol{a}_{T-(k+1)})\pi_{-i}(da^{-i}_{T-(k+1)},s_{T-(k+1)})G^{\pi_{-i}}_{i,k}
\\&=\int_{\mathcal{S}}ds_{T-k}\int_{\mathcal{A}_{-i}}\pi_{-i}(da^{-i}_{T-(k+1)},s_{T-(k+1)})
\\&\qquad\qquad\qquad\qquad\cdot\left( P(s_{T-k};s_{T-(k+1)},\pi_i,a^{-i}_{T-(k+1)})-P(s_{T-k};s_{T-(k+1)},\pi'_i,a^{-i}_{T-(k+1)})\right)G^{\pi_{-i}}_{i,k}
\\&=0
\end{align*}
We now find that:
\begin{align}\nonumber
&\int_{\mathcal{S}} ds_{T-k}\mathcal{P}^{\boldsymbol{\pi}}_{s_{T-k}s_{T-k-1}}\boldsymbol{v}_{i,k}^{\boldsymbol{\pi}}(s_{T-k})
-\int_{\mathcal{S}} ds_{T-k}\mathcal{P}^{\boldsymbol{\pi'}}_{s_{T-k}s_{T-k-1}}\boldsymbol{v}_{i,k}^{\boldsymbol{\pi'}}(s_{T-k})
\\&=\int_{\mathcal{S}} ds_{T-k}\mathcal{P}^{\boldsymbol{\pi}}_{s_{T-k}s_{T-k-1}} \boldsymbol{B}_k^{\boldsymbol{\pi}}(s_{T-k})-\int_{\mathcal{S}} ds_{T-k}\mathcal{P}^{\boldsymbol{\pi'}}_{s_{T-k}s_{T-k-1}} \boldsymbol{B}_k^{\boldsymbol{\pi'}}(s_{T-k})
\label{potential_condition_proof_result_cpt2}
\end{align}

Now combining (\ref{potential_condition_proof_result_cpt1}) and (\ref{potential_condition_proof_result_cpt2}) leads to the fact that:
\begin{align*}\nonumber
&\mathbb{E}_{s_{T-(k+1)}\sim P(\cdot|)}\left[v^{\boldsymbol{\pi}}_{i,k+1}(s_{T-(k+1)})-v^{\boldsymbol{\pi'}}_{i,k+1}(s_{T-(k+1)})\right]
\\&\begin{aligned}=\mathbb{E}_{s_{T-(k+1)}\sim P(\cdot|)}\Bigg[\int_{\mathcal{S}} ds_{T-k}\mathcal{P}^{\boldsymbol{\pi}}_{s_{T-k}s_{T-k-1}} \boldsymbol{B}_k^{\boldsymbol{\pi}}(s_{T-k})&-\int_{\mathcal{S}} ds_{T-k}\mathcal{P}^{\boldsymbol{\pi'}}_{s_{T-k}s_{T-k-1}} \boldsymbol{B}_k^{\boldsymbol{\pi'}}(s_{T-k})\Bigg]
\\&+\mathbb{E}_{s_{T-(k+1)}\sim P(\cdot|)}\left[\boldsymbol{\phi}^{\boldsymbol{\pi}}(s_{T-(k+1)})-\boldsymbol{\phi}^{\boldsymbol{\pi}'}(s_{T-(k+1)})\right]
\end{aligned},
\end{align*}
from which we immediately deduce that
\begin{align*}\nonumber
\mathbb{E}_{s_{T-(k+1)}\sim P(\cdot|)}\left[v^{\boldsymbol{\pi}}_{i,k+1}(s_{T-(k+1)})-v^{\boldsymbol{\pi'}}_{i,k+1}(s_{T-(k+1)})\right]=\mathbb{E}_{s_{T-(k+1)}\sim P(\cdot|)}\left[B^{\boldsymbol{\pi}}_{k+1}(s_{T-(k+1)})-B^{\boldsymbol{\pi'}}_{k+1}(s_{T-(k+1)})\right],
\end{align*}
where
\begin{align*}
B^{\boldsymbol{\pi}}_{k}(s)=\mathbb{E}_{\pi_i,\pi_{-i}}\left[
\phi(s_{k},\boldsymbol{a}_{k})+\gamma\int_{\mathcal{S}}ds' P(s';s,\boldsymbol{a}_{k})B_{k-1}^{\pi^i,\pi^{-i}}(s')\right],
\end{align*}
from which we deduce the result.
\end{proof}
Thus far we have established the relation (\ref{dpg_equality_finite}) holds only for the finite horizon case. We now extend the coverage to the infinite horizon case in which we can recover the use of stationary strategies. Before doing so, we require the following results:

\begin{lemma}\label{convergence_of_B}
For any $t' <\infty$, define by $B^{\boldsymbol{\pi}}_{t'}:\mathbb{N}\times\Pi_i\times\Pi_{-i}\times \mathcal{S}\to \mathbb{R}$ the following function:
\begin{align*}
B^{\boldsymbol{\pi}}_{t'}(s):=\mathbb{E}_{s_t\sim P,\pi_i,\pi_{-i}}\left[\sum_{t=0}^T\gamma^t\phi(s_t,\boldsymbol{a}_t)|s\equiv s_0\right].
\end{align*} 
then $\exists B^{\boldsymbol{\pi}}:\Pi_i\times\Pi_{-i}\times \mathcal{S}\to \mathbb{R}$ s.t. $\forall s\in \mathcal{S}$ and for any $\pi_i\in\Pi_i,\pi_{-i}\in\Pi_{-i}$,
\begin{align*}
    \underset{t\to\infty}{\lim}B^{\boldsymbol{\pi}}_{t}(s)=B^{\boldsymbol{\pi}}(s),
\end{align*}
where for any $\pi_i\in\Pi_i,\pi_{-i}\in\Pi_{-i}$, the function $B^{\boldsymbol{\pi}}$ is given by:\\
$
    B^{\boldsymbol{\pi}}(s):=\mathbb{E}_{s_t\sim P,\pi_i,\pi_{-i}}\left[\sum_{t=0}^\infty\gamma^t\phi(s_t,\boldsymbol{a}_t)|s\equiv s_0\right]
$.
\end{lemma}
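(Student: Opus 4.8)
The plan is to prove that $\{B^{\boldsymbol{\pi}}_{t'}(s)\}_{t'\in\mathbb{N}}$ is a Cauchy sequence in $\mathbb{R}$ for each fixed $s\in\mathcal{S}$ and each joint policy, and then to identify its limit with $B^{\boldsymbol{\pi}}(s)$ by dominated convergence. The only genuine preliminary is to verify that the potential $\phi$ is uniformly bounded; once that is in hand, everything else is the standard estimate for discounted infinite-horizon returns.

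\textbf{Step 1 (boundedness of $\phi$).} Fix a reference joint action $\bar{\boldsymbol{a}}=(\bar a^1,\dots,\bar a^N)\in\boldsymbol{\mathcal{A}}$ and a reference state $\bar s\in\mathcal{S}$. Applying the potential condition \eqref{potential_condition_static} one coordinate at a time (first replacing $a^1$ by $\bar a^1$, then $a^2$ by $\bar a^2$, and so on), the quantity $\phi(s,\boldsymbol{a})-\phi(s,\bar{\boldsymbol{a}})$ is a finite telescoping sum of reward increments of the form $R_i(s,\cdot)-R_i(s,\cdot)$, each bounded by $2\max_i\|R_i\|_\infty$, which is finite by Assumption~A.2. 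Similarly, the state transitivity assumption gives $\phi(s,\bar{\boldsymbol{a}})-\phi(\bar s,\bar{\boldsymbol{a}})=R_i(s,\bar{\boldsymbol{a}})-R_i(\bar s,\bar{\boldsymbol{a}})$, again bounded. Hence $\phi-\phi(\bar s,\bar{\boldsymbol{a}})$ is a bounded function; since the potential is determined only up to an additive constant we may normalise so that $\phi(\bar s,\bar{\boldsymbol{a}})=0$, and conclude $\|\phi\|_\infty=:M<\infty$. (Equivalently, one can invoke Lemma~\ref{dummy_coord_prop}, which writes $\phi=R_i-F_i$ with $F_i$ depending only on $a^{-i}$, and bound $F_i$ by the same reference-point argument.)

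\textbf{Steps 2--3 (Cauchy property and identification of the limit).} For $t_2>t_1$,
\begin{align*}
\left|B^{\boldsymbol{\pi}}_{t_2}(s)-B^{\boldsymbol{\pi}}_{t_1}(s)\right|
=\left|\mathbb{E}_{s_t\sim P,\boldsymbol{\pi}}\left[\sum_{t=t_1+1}^{t_2}\gamma^t\phi(s_t,\boldsymbol{a}_t)\,\Big|\,s_0=s\right]\right|
\leq M\sum_{t=t_1+1}^{t_2}\gamma^t
\leq \frac{M\,\gamma^{t_1+1}}{1-\gamma},
\end{align*}
which tends to $0$ as $t_1\to\infty$ because $\gamma\in[0,1)$. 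Thus $\{B^{\boldsymbol{\pi}}_{t'}(s)\}_{t'}$ is Cauchy, and by completeness of $\mathbb{R}$ it converges to some limit $\tilde B^{\boldsymbol{\pi}}(s)$. To see that $\tilde B^{\boldsymbol{\pi}}(s)=B^{\boldsymbol{\pi}}(s)$, note that along every trajectory the partial sum $\sum_{t=0}^{t'}\gamma^t\phi(s_t,\boldsymbol{a}_t)$ converges pointwise to $\sum_{t=0}^{\infty}\gamma^t\phi(s_t,\boldsymbol{a}_t)$ and is dominated, uniformly in $t'$, by the integrable constant $M/(1-\gamma)$; the dominated convergence theorem then gives
\begin{align*}
\lim_{t'\to\infty}B^{\boldsymbol{\pi}}_{t'}(s)
=\mathbb{E}_{s_t\sim P,\boldsymbol{\pi}}\left[\sum_{t=0}^{\infty}\gamma^t\phi(s_t,\boldsymbol{a}_t)\,\Big|\,s_0=s\right]
=B^{\boldsymbol{\pi}}(s),
\end{align*}
and by uniqueness of limits in $\mathbb{R}$ this forces $\tilde B^{\boldsymbol{\pi}}(s)=B^{\boldsymbol{\pi}}(s)$, as claimed. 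The argument holds for every fixed $s$ and $\boldsymbol{\pi}$, so the stated pointwise convergence follows.

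\textbf{Main obstacle.} The only substantive point is Step~1: the lemma (and Theorem~\ref{reduction_theorem}) tacitly require $\phi$ to be bounded, whereas the standing assumptions bound only the rewards $R_i$. Deriving $\|\phi\|_\infty<\infty$ cleanly needs either the coordinate-wise telescoping above or an appeal to Lemma~\ref{dummy_coord_prop}, together with the normalisation of the potential's free additive constant. The measurability of the trajectory integrands (needed to invoke dominated convergence) is a secondary matter, covered by Assumptions~A.2 and~A.4, and the Cauchy/DCT steps themselves are entirely routine.
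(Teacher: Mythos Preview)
Your proof is correct and follows the same approach as the paper: both establish that $\phi$ is bounded (the paper appeals to the line-integral representation of $\phi$ in terms of the $R_i$, whereas you use the equivalent elementary telescoping via \eqref{potential_condition_static} and state transitivity), and then both bound $|B^{\boldsymbol{\pi}}_{t'}(s)-B^{\boldsymbol{\pi}}_{t''}(s)|$ by the geometric tail $\|\phi\|_\infty\,\gamma^{\min(t',t'')}/(1-\gamma)$ to obtain the Cauchy property. Your treatment is slightly more complete in that you explicitly identify the limit with the infinite-horizon expectation via dominated convergence, a step the paper leaves implicit.
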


\begin{proof}
We prove the result by showing that the sequence $B^{\boldsymbol{\pi}}_{n},B^{\boldsymbol{\pi}}_{n+1},\ldots$ converges uniformly, that is the sequence is a Cauchy sequence. In particular, we show that $\forall \epsilon>0$, $\exists T'>0$ s.th. $\forall t',t''>T'$ and for any $\pi_i\in\Pi_i,\pi_{-i}\in\Pi_{-i}$
\begin{align*}
    \left\|B^{\boldsymbol{\pi}}_{t'}-B^{\boldsymbol{\pi}}_{t''}\right\|<\epsilon.
\end{align*}
Firstly, we deduce that the function $\phi$ is bounded since each $R_i$ is bounded also (c.f. (\ref{line_integral_potential_equation})). Now w.log., consider the case when $t'\geq t''$. We begin by observing the fact that
\begin{align*}
&B^{\boldsymbol{\pi}}_{t'}(s)-B^{\boldsymbol{\pi}}_{t''}(s)
\\&=\mathbb{E}_{s_t\sim P(\cdot;s_{t-1},a_{t-1}),\pi_i,\pi_{-i}}\left[\sum_{t=0}^{t'}\gamma^t\phi_t(s_t,\boldsymbol{a}_t)-\sum_{t=0}^{t''}\gamma^t\phi_t(s_t,\boldsymbol{a}_t)\right]
\\&=\mathbb{E}_{s_t\sim P(\cdot;s_{t-1},a_{t-1}),\pi_i,\pi_{-i}}\left[\sum_{t=t''}^{t'}\gamma^t\phi_t(s_t,\boldsymbol{a}_t)\right].
\end{align*}
Hence, we find that

\begin{align*}
&\; \left| B^{\boldsymbol{\pi}}_{t'}(s)-B^{\boldsymbol{\pi}}_{t''}(s)\right|
\\&=\left|\mathbb{E}_{s_t\sim P(\cdot;s_{t-1},a_{t-1}),\pi_i,\pi_{-i}}\left[\sum_{t=t''}^{t'}\gamma^t\phi_t(s_t,\boldsymbol{a}_t)\right]\right|
\\&
\leq \sum_{t=t''}^{t'}\gamma^t\|\phi\|_{\infty}\leq |\gamma|\frac{\left|\gamma^{t''}-\gamma^{t'}\right|}{1-\gamma}\|\phi\|_{\infty}
\\&\leq |\gamma^{t''}|\frac{\left|1-\gamma^{t'-t''}\right|}{1-\gamma}\|\phi\|_{\infty}
\\&\leq \frac{|\gamma^{t''}|}{1-\gamma}\|\phi\|_{\infty}
=e^{t''\ln\gamma}\frac{\|\phi\|_{\infty}}{1-\gamma}
\\&= e^{-t''|\ln\gamma|}\left(\frac{\|\phi\|_{\infty}}{1-\gamma}\right)\leq e^{-T'|\ln\gamma|}\left(\frac{\|\phi\|_{\infty}}{1-\gamma}\right),
\end{align*}
using Cauchy-Schwarz and since $t'\geq t''> T'$ and $\gamma \in [0,1[$.  
The inequality of the proposition is true whenever $T'$ is chosen to satisfy
\begin{align*}
T'\geq \left|\ln{(\epsilon)}(\ln{(\gamma)}\left(\frac{\|\phi\|_{\infty}}{1-\gamma}\right)^{-1}\right|,    
\end{align*}
hence the result is proven.
\end{proof}
We are now in a position to extend the dynamic potential property (\ref{dpg_equality_finite}) to the infinite horizon case:
\begin{proof}
The result is proven by contradiction.

\noindent To this end, let us firstly assume there exists a constant $c\neq 0$ s.th.
\begin{align*}\nonumber
&\mathbb{E}_{s\sim P(\cdot|)}\left[v^{\boldsymbol{\pi}}_i(s)-v^{\boldsymbol{\pi'}}_i(s)\right]
-\mathbb{E}_{s\sim P(\cdot|)}\left[B^{\boldsymbol{\pi}}_i(s)-B^{\boldsymbol{\pi'}}_i(s)\right] =c.
\end{align*}

Let us now define the following quantities for any $s\in \mathcal{S}$ and for each $\pi_i\in\Pi_i$ and $\pi_{-i}\in\Pi_{-i}$ and $\forall\theta\in \Theta, \forall i\in\mathcal{N}$:
\begin{align*}
v_{i,T'}^{\boldsymbol{\pi}}(s)
:=\sum_{t=1}^{T'}\int_{\mathcal{S}}ds_{j+1}\mu(s_0)\pi_i(a^i_0,s_0)\pi_{-i}(a^{-i}_0,s_0)\prod_{j=1}^{t-1}\gamma^jP(s_{j+1}|s_j,a^i_j,a^{-i}_j)&
\\\cdot\pi_i(a^i_j|s_j)\pi_{-i}(a^{-i}_j|s_j)R_i(s_j,a^i_j,a^{-i}_j)&,     \end{align*}
and 
\begin{align*}
&B_{T'}^{\boldsymbol{\pi}}(s)
\\&\begin{aligned}:=\sum_{t=1}^{T'}\int_{\mathcal{S}}ds_{j+1}\mu(s_0)\pi_i(a^i_0,s_0)\pi_{-i}(a^{-i}_0,s_0)\prod_{j=1}^{t-1}\gamma^jP(s_{j+1}|s_j,a^i_j,a^{-i}_j)\pi_i(a^i_j|s_j)\pi_{-i}(a^{-i}_j|s_j)&
\\\cdot\phi(s_j,a^i_j,a^{-i}_j)&,
\end{aligned}\end{align*}
so that the quantity $v_{i,T'}^{\boldsymbol{\pi}}(s)$ measures the expected cumulative return until the point $T'<\infty$.

Hence, we straightforwardly deduce that
\begin{align*}\nonumber
&v_{i}^{\boldsymbol{\pi}}(s)\equiv v_{i,\infty}^{\boldsymbol{\pi}}(s)
\\&\begin{aligned}=
v_{i,T'}^{\boldsymbol{\pi}}(s)
+\gamma^{T'}\int_{\mathcal{S}}ds_{j+1}\mu(s_0)\pi_i(a^i_0,s_0)\pi_{-i}(a^{-i}_0,s_0)\prod_{j=1}^{T'}P(s_{j+1}|s_j,a^i_j,a^{-i}_j)\pi_i(a^i_j|s_j)&
\\\cdot\pi_{-i}(a^{-i}_j|s_j)v_{i}^{\boldsymbol{\pi}}(s_{T'}).&     
\end{aligned}
\end{align*}
Our first task is to establish that the quantity $\left|\underset{t\to\infty}{\lim}\mathbb{E}_{s\sim P(\cdot|)}\left[B^{\boldsymbol{\pi}}_{i,t}(s)-B^{\boldsymbol{\pi'}}_{i,t}(s)\right]\right|$ is in fact, well-defined for any $s\in \mathcal{S}$ and $\forall i \in\mathcal{N}$.

This is true since by (\ref{inductive_hyp}) for any $t>0$ we have that
\begin{align}
&\left|\mathbb{E}_{s\sim P(\cdot|)}\left[B^{\boldsymbol{\pi}}_{i,t}(s)-B^{\boldsymbol{\pi'}}_{i,t}(s)\right]\right|=
\left|\mathbb{E}_{s\sim P(\cdot|)}\left[v^{\boldsymbol{\pi}}_{i,t}(s)-v^{\boldsymbol{\pi'}}_{i,t}(s)\right]\right|, \label{potential_condition_proof}
\end{align}
and hence we have that
\begin{align*}\nonumber
&\left|\mathbb{E}_{s\sim P(\cdot|)}\left[B^{\boldsymbol{\pi}}(s)-B^{\boldsymbol{\pi'}}(s)\right]\right|
<\infty.
\end{align*}

To see this, we firstly observe that by the boundedness of $R_i$, $\exists c>0$ s.th. $\forall t\in \mathbb{N}, \forall i\in\mathcal{N}$ and for any $\pi_i\in\Pi_i,\pi_{-i}\in\Pi_{-i}$
\begin{align*}
    \left|v^{\boldsymbol{\pi}}_{i,t}(s)-v^{\boldsymbol{\pi'}}_{i,t}(s)\right|<c.
\end{align*}
This is true since for any $k<\infty$ we have
\begin{align*}
&\qquad\qquad\qquad \qquad v_{i,k}^{\boldsymbol{\pi}}(s)-v_{i,k}^{\boldsymbol{\pi'}}(s)
\\&=\mathbb{E}_{s_t\sim P,\pi_i,\pi_{-i}}\left[\sum_{t=0}^{k}\gamma^tR_i(s_t,\boldsymbol{a}_t)\right]-\mathbb{E}_{s_t\sim P,\pi'_i,\pi_{-i}}\left[\sum_{t=0}^{k}\gamma^tR_i(s_t,\boldsymbol{a}_t)\right]
\\&\leq\left|\mathbb{E}_{s_t\sim P,\pi_i,\pi_{-i}}\left[\sum_{t=0}^{k}\gamma^tR_i(s_t,\boldsymbol{a}_t)\right]-\mathbb{E}_{s_t\sim P,\pi'_i,\pi_{-i}}\left[\sum_{t=0}^{k}\gamma^tR_i(s_t,\boldsymbol{a}_t)\right]\right|
\\&\leq2\sum_{t=0}^{k}\gamma^t\|R_i\|_{\infty} 
=2\frac{1-\gamma^k}{1-\gamma}\|R_i\|_{\infty}.
\end{align*}

Therefore, by the bounded convergence theorem we have that
\begin{align}
    \underset{t\to\infty}{\lim}\left|\mathbb{E}_{s\sim P(\cdot|)}\left[v^{\boldsymbol{\pi}}_{i,t}(s)-v^{\boldsymbol{\pi'}}_{i,t}(s)\right]\right|<\infty.\label{boundedness_v_limit}
\end{align}
Now, using (\ref{potential_condition_proof}), we deduce that for any $\epsilon>0$, the following statement holds:
\begin{align*}\nonumber
&\left|\mathbb{E}_{s\sim P(\cdot|)}\left[B^{\boldsymbol{\pi}}_{i,t}(s)-B^{\boldsymbol{\pi'}}_{i,t}(s)\right]\right|<
\left|\mathbb{E}_{s\sim P(\cdot|)}\left[v^{\boldsymbol{\pi}}_{i,t}(s)-v^{\boldsymbol{\pi'}}_{i,t}(s)\right]\right|+\epsilon,
\end{align*}
which after taking the limit as $t\to\infty$ and using (\ref{boundedness_v_limit}), Lemma \ref{convergence_of_B} and the dominated convergence theorem, we find that
\begin{align*}\nonumber
&\underset{t\to\infty}{\lim}\left|\mathbb{E}_{s\sim P(\cdot|)}\left[B^{\boldsymbol{\pi}}_{i,t}(s)-B^{\boldsymbol{\pi'}}_{i,t}(s)\right]\right|<\infty.
\end{align*}
Next we observe that:

\begin{align*}\nonumber
c&=\mathbb{E}_{s\sim P(\cdot|)}\left[\left(v_{i}^{\boldsymbol{\pi}}-v_{i}^{\boldsymbol{\pi'}}\right)(s)\right]-\mathbb{E}_{s\sim P(\cdot|)}\left[\left(B^{\boldsymbol{\pi}}-B^{\boldsymbol{\pi'}}\right)(s)\right]
\\&\nonumber
=\mathbb{E}_{s\sim P(\cdot|)}\left[\left(v_{i,T'}^{\boldsymbol{\pi}}-v_{i,T'}^{\boldsymbol{\pi'}}\right)(s)\right]-\mathbb{E}_{s\sim P(\cdot|)}\left[\left(B_{T'}^{\boldsymbol{\pi}}-B_{T'}^{\boldsymbol{\pi'}}\right)(s)\right]
\\&\begin{aligned}
+\gamma^{T'}\mathbb{E}_{s_{T'}\sim P(\cdot|)}\Bigg[\int_{\mathcal{S}}ds_{j+1}\mu(s_0)\pi_i(a^i_0,s_0)\pi_{-i}(a^{-i}_0,s_0)\prod_{j=1}^{T'}P(s_{j+1}|s_j,a^i_j,a^{-i}_j)\pi_i(a^i_j|s_j)\pi_{-i}(a^{-i}_j|s_j)
\\\cdot
\left(v_{i}^{\boldsymbol{\pi}}(s_{T'})-B^{\boldsymbol{\pi}}(s_{T'})\right)&\nonumber
\end{aligned}
\\&\begin{aligned}
+\int_{\mathcal{S}}ds_{j+1}\mu(s_0)\pi'_i(a'^i_0,s_0)\pi_{-i}(a^{-i}_0,s_0)\prod_{j=1}^{T'}P(s_{j+1}|s_j,a'^i_j,a^{-i}_j)\pi_i(a'^i_j|s_j)\pi_{-i}(a^{-i}_j|s_j)
\\\cdot
\left(v_{i}^{\boldsymbol{\pi'}}(s_{T'})-B^{\boldsymbol{\pi'}}(s_{T'})\right)\Bigg].&
\end{aligned}
\end{align*}
Considering the last expectation and its coefficient and denoting it by $\kappa$, we observe the following bound:
\begin{align*}
|\kappa|\leq 2\gamma^{T'}\left(\|v_{i}\|+\|B\|\right).
\end{align*}
Since we can choose $T'$ freely and $\gamma \in ]0,1[$, we can choose $T'$ to be sufficiently large so that
\begin{align*}
\gamma^{T'}\left(\|v_{i}\|+\|B\|\right)<\frac{1}{4}|c|.
\end{align*}
This then implies that
\begin{align*}
\left|\mathbb{E}_{s\sim P(\cdot|)}\left[\left(v_{i,T'}^{\boldsymbol{\pi}}-v_{i,T'}^{\boldsymbol{\pi'}}\right)(s)-\left(B_{T'}^{\boldsymbol{\pi}}-B_{T'}^{\boldsymbol{\pi'}}\right)(s)\right]\right|>\frac{1}{2}c,    
\end{align*}
which is a contradiction since we have proven that for any finite $T'$ it is the case that
\begin{align*}
\mathbb{E}_{s\sim P(\cdot|)}\left[\left(v_{i,T'}^{\boldsymbol{\pi}}-v_{i,T'}^{\boldsymbol{\pi'}}\right)(s)-\left(B_{T'}^{\boldsymbol{\pi}}-B_{T'}^{\boldsymbol{\pi'}}\right)(s)\right]=0,
\end{align*}
and hence we deduce the thesis.
\end{proof}
\begin{proof}[Proof of Lemma \ref{difference_constant_lemma}]
The result is proven after a straightforward extension of the static case (Lemma 2.7. in \cite{monderer1996potential}).
\end{proof}
\begin{proposition}\label{reduction_prop}
There exists a function $B:\mathcal{S}\times \boldsymbol{\Pi}\to\mathbb{R}$ such that  $\forall s\in S$ we have that 
\begin{align*}
\boldsymbol{\pi}\in \underset{{\boldsymbol{\pi'}}\in\boldsymbol{\Pi}}{\arg\sup}\; B^{{\boldsymbol{\pi'}}}(s)\implies \boldsymbol{\pi}\in NE\{\mathcal{G}\}.
\end{align*}
\end{proposition}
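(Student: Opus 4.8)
The plan is to take $B$ to be the dynamic potential function supplied by Proposition~\ref{dpg_proposition}, i.e. $B^{\boldsymbol{\pi}}(s)=\mathbb{E}\big[\sum_{t\ge 0}\gamma^t\phi(s_t,\boldsymbol{a}_t)\mid \boldsymbol{a}_t\sim\boldsymbol{\pi}\big]$, and to exploit the dynamic potential identity it satisfies: for every $i\in\mathcal{N}$ and every pair of joint policies differing only in agent $i$'s component,
\[
\mathbb{E}_{s\sim P(\cdot|)}\big[v_i^{(\pi_i,\pi_{-i})}(s)-v_i^{(\pi_i',\pi_{-i})}(s)\big]=\mathbb{E}_{s\sim P(\cdot|)}\big[B^{(\pi_i,\pi_{-i})}(s)-B^{(\pi_i',\pi_{-i})}(s)\big].
\]
Since this holds for an arbitrary initial distribution $P(\cdot|)$, I would first specialise it (equivalently, read it off the state-by-state induction in the proof of Proposition~\ref{dpg_proposition}, which establishes the equality at each $s_{T-k}$ before any expectation is taken) to the pointwise identity
\[
v_i^{(\pi_i,\pi_{-i})}(s)-v_i^{(\pi_i',\pi_{-i})}(s)=B^{(\pi_i,\pi_{-i})}(s)-B^{(\pi_i',\pi_{-i})}(s),\qquad\forall s\in\mathcal{S},
\]
which is the form matching the definition of a Markov perfect equilibrium. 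Note that by Lemma~\ref{difference_constant_lemma} $B$ is unique up to an additive constant, so $\underset{\boldsymbol{\pi'}\in\boldsymbol{\Pi}}{\arg\sup}\,B^{\boldsymbol{\pi'}}(s)$ is unambiguous, and by Theorem~\ref{reduction_theorem} the supremum is attained (in pure Markov strategies), so the hypothesis is non-vacuous.

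With the pointwise identity in hand the argument is a short contradiction. Suppose $\boldsymbol{\pi}=(\pi_i,\pi_{-i})$ attains $\sup_{\boldsymbol{\pi'}\in\boldsymbol{\Pi}}B^{\boldsymbol{\pi'}}(s)$ (equivalently, $\boldsymbol{\pi}$ is an optimal policy of the dual MDP $\mathscr{M}^\dagger$, hence optimal from all states), and assume $\boldsymbol{\pi}\notin NE\{\mathcal{G}\}$. Then there exist $i\in\mathcal{N}$, a deviation $\pi_i'\in\Pi_i$, and a state $s\in\mathcal{S}$ with $v_i^{(\pi_i',\pi_{-i})}(s)>v_i^{(\pi_i,\pi_{-i})}(s)$. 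Substituting into the pointwise identity yields $B^{(\pi_i',\pi_{-i})}(s)-B^{(\pi_i,\pi_{-i})}(s)=v_i^{(\pi_i',\pi_{-i})}(s)-v_i^{(\pi_i,\pi_{-i})}(s)>0$. But $(\pi_i',\pi_{-i})\in\boldsymbol{\Pi}=\times_{j\in\mathcal{N}}\Pi_j$ is a legitimate joint policy, so this contradicts the maximality of $B^{\boldsymbol{\pi}}(s)$. Hence no agent has a profitable unilateral deviation and $\boldsymbol{\pi}\in NE\{\mathcal{G}\}$.

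The step I expect to require the most care is the passage from the in-expectation identity of Proposition~\ref{dpg_proposition} to the pointwise version, since the MPE condition in the definition is stated pointwise over $\mathcal{S}$. This is not deep — it follows either by letting $P(\cdot|)$ concentrate on an arbitrary single state, or by recording that the inductive proof of Proposition~\ref{dpg_proposition} already works at the level of individual states — but it should be made explicit because the dummy terms $F_i$ and $G^{\pi_{-i}}_{i,k}$ that are cancelled there only vanish after one integrates over $a^i$, so one must verify that the cancellation survives at fixed $s$. A secondary and milder point is to be clear about which states' deviations are controlled: maximality of $B^{\boldsymbol{\pi}}$ at a state $s$ directly rules out deviations evaluated from $s$, and since we may take $\boldsymbol{\pi}$ to be a stationary policy optimal from every state (standard for discounted MDPs, and consistent with Theorem~\ref{reduction_theorem}), the MPE inequality is obtained at every $s\in\mathcal{S}$ uniformly.
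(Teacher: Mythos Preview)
Your approach is essentially the paper's: both argue by contradiction, invoking Proposition~\ref{dpg_proposition} to transfer a profitable unilateral deviation in $v_i$ to a strict increase in $B$, contradicting the assumed maximality of $B^{\boldsymbol{\pi}}$. Your discussion of the passage from the in-expectation identity to the pointwise one is in fact more careful than the paper's, which simply cites Proposition~\ref{dpg_proposition} and a parenthetical ``preservation of signs of integration''.

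One small logical slip to fix: you cite Theorem~\ref{reduction_theorem} to argue the supremum is attained, but in the paper's development Theorem~\ref{reduction_theorem} is \emph{proved by combining} Proposition~\ref{reduction_prop} with Proposition~\ref{dpg_proposition}, so that appeal is circular. Fortunately it is inessential --- Proposition~\ref{reduction_prop} asserts only an implication, which is vacuously true if the $\arg\sup$ were empty --- so simply drop that remark (and likewise the reference to Lemma~\ref{difference_constant_lemma}, which is unnecessary since the proposition asks only for \emph{some} $B$, namely the one furnished by Proposition~\ref{dpg_proposition}).
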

\begin{proof}[Proof of Prop. \ref{reduction_prop}]
We do the proof by contradiction. Let $\boldsymbol{\pi}=(\pi_1,\ldots,\pi_N)\in \underset{\boldsymbol{\pi'}\in\boldsymbol{\Pi}}{\arg\sup}\; v^{\boldsymbol{\pi'}}(s)$. Let us now therefore assume that $\boldsymbol{\pi}\notin NE\{\mathcal{G}\}$, hence there exists some other strategy profile $\boldsymbol{\pi'}=(\pi_1,\ldots,\pi_i',\ldots,\pi_N)$ which contains at least one profitable deviation by one of the agents so that $\pi_i'\neq \pi_i$ for $i\in\mathcal{N}$ i.e. $v^{\boldsymbol{\pi'}}_i(s)> v^{\boldsymbol{\pi}}_i(s)$ (using the preservation of signs of integration). Prop. \ref{dpg_proposition} however implies that $v_i^{\boldsymbol{\pi'}}(s)-v_i^{\boldsymbol{\pi}}(s)>0$ which is a contradiction since $\boldsymbol{\pi}$ is a maximum of $B$.  
\end{proof}
\begin{proof}[Proof of Theorem \ref{reduction_theorem}]
Combining Prop. \ref{reduction_prop} with Prop. \ref{dpg_proposition} proves Theorem \ref{reduction_theorem}.
\end{proof}
\begin{proof}[Proof of Prop. \ref{reparam_pg_ode}]
Since the functions $(R_{i})_{i\in\mathcal{N}}$ are differentiable in the action inputs,  we first we note the following
$ 
R_i(s,a^i,a^{-i})- R_i(s,a'^i,a^{-i})
=\int_{a^i}^{a'^i}\frac{\partial R_i(s,a,a^{-i})}{\partial a}da
$
and $
\phi(s,a^i,a^{-i})- \phi(s,a'^i,a^{-i})
=\int_{a^i}^{a'^i}\frac{\partial \phi(s,a,a^{-i})}{\partial a}da
$.
We then deduce that $
 \frac{\partial R_i(s,a,a^{-i})}{\partial a}=   \frac{\partial \phi(s,a,a^{-i})}{\partial a}$.
Considering actions sampled from stochastic policies, we find that \begin{align*}\nonumber
&\mathbb{E}_{\pi_i(\boldsymbol{\eta}_i),\pi_{-i}(\boldsymbol{\eta}_{-i})}\left[R_i(s,a^i,a^{-i})\right]- \mathbb{E}_{\pi_i(\boldsymbol{\eta}'_{i}),\pi_{-i}(\boldsymbol{\eta}_{-i})}\left[R_i(s,a'^i,a^{-i})\right]
\\&=\mathbb{E}_{\pi_i(\boldsymbol{\eta}_{i}),\pi_{-i}(\boldsymbol{\eta}_{-i})}\left[\phi(s,a^i,a^{-i})\right]- \mathbb{E}_{\pi'_i(\boldsymbol{\eta}'_{i}),\pi_{-i}(\boldsymbol{\eta}_{-i})}\left[\phi(s,a'^i,a^{-i})\right].
\end{align*}
Now suppose $a=h(s,\boldsymbol{\eta}^i)$ then
\begin{align}
&\int_{a^i}^{a'^i}\frac{\partial R_i(s,a,a^{-i})}{\partial a}da=\int_{\boldsymbol{\eta}_i=h^{-    1}(a^i)}^{\boldsymbol{\eta}'_i=h^{-1}(a'^i)}\frac{\partial R_i(s,h(s,\boldsymbol{\eta}^i),a^{-i})}{\partial a}\frac{dh}{d\boldsymbol{\eta}}d\boldsymbol{\eta} \label{deterministic_difference_integral}
\end{align}
Similarly we find that for $a^{i}\sim \pi_{i,\epsilon}(\cdot,|s,\eta)$ we have that
\begin{align}
&\mathbb{E}_{\pi_i(\boldsymbol{\eta}_i),\pi_{-i}(\boldsymbol{\eta}_{-i})}\left[R_i(s,a^i,a^{-i})\right]- \mathbb{E}_{\pi_i(\boldsymbol{\eta}'_{i}),\pi_{-i}(\boldsymbol{\eta}_{-i})}\left[R_i(s,a'^i,a^{-i})\right]
\nonumber\\&=\int_{\mathcal{A}_{-i}}\pi_{-i}(da^{-i},\boldsymbol{\eta}_{-i})\int_{\boldsymbol{\eta}_i(a^i)}^{\boldsymbol{\eta}'_i(a'^i)}\int_{\mathcal{A}_i}\frac{\partial}{\partial \boldsymbol{\eta}_i}\pi_i(da^i,s;\boldsymbol{\eta}_i)\frac{\partial R_i\left(s,\pi_{i,\epsilon}(\cdot,|s,\eta),a^{-i}\right)}{\partial a^i}d\boldsymbol{\eta}_i
\\&\nonumber=\int_{\mathcal{A}_{-i}}\pi_{-i}(da^{-i},\boldsymbol{\eta}_{-i})\int_{\boldsymbol{\eta}_i(a^i)}^{\boldsymbol{\eta}'_i(a'^i)}\int_{\mathcal{A}_i}\frac{\partial}{\partial \boldsymbol{\eta}_i}\pi_i(da^i,s;\boldsymbol{\eta}_i)\frac{\partial R_i(s,a^i,a^{-i})}{\partial a^i}d\boldsymbol{\eta}_i
\\&=\int_{\boldsymbol{\eta}_i(a^i)}^{\boldsymbol{\eta}'_i(a'^i)}\int_{\mathcal{A}_{-i}}\pi_{-i}(da^{-i},\boldsymbol{\eta}_{-i})\int_{\mathcal{A}_i}\frac{\partial}{\partial \boldsymbol{\eta}_i}\pi_i(da^i,s;\boldsymbol{\eta}_i)\frac{\partial R_i(s,a^i,a^{-i})}{\partial a^i}d\boldsymbol{\eta}_i
\label{derivative_reward_param}\end{align}
By the same reasoning as above we find that 
\begin{align}
&\nonumber\mathbb{E}_{\pi_i(\boldsymbol{\eta}_i),\pi_{-i}(\boldsymbol{\eta}_{-i})}\left[\phi(s,a^i,a^{-i})\right]- \mathbb{E}_{\pi_i(\boldsymbol{\eta}'_{i}),\pi_{-i}(\boldsymbol{\eta}_{-i})}\left[\phi(s,a'^i,a^{-i})\right]
\\&=\int_{\boldsymbol{\eta}_i(a^i)}^{\boldsymbol{\eta}'_i(a'^i)}\int_{\mathcal{A}_{-i}}\pi_{-i}(da^{-i},\boldsymbol{\eta}_{-i})\int_{\mathcal{A}_i}\frac{\partial}{\partial \boldsymbol{\eta}_i}\pi_i(da^i,s;\boldsymbol{\eta}_i)\frac{\partial \phi(s,a^i,a^{-i})}{\partial a^i}d\boldsymbol{\eta}_i    \label{derivative_phi_param}
\end{align}

Putting \eqref{derivative_reward_param} and \eqref{derivative_phi_param} together, we deduce that
\begin{align*}
&\int_{\boldsymbol{\eta}_i(a^i)}^{\boldsymbol{\eta}'_i(a'^i)}\int_{\mathcal{A}_{-i}}\pi_{-i}(da^{-i},\boldsymbol{\eta}_{-i})\int_{\mathcal{A}_i}\frac{\partial}{\partial \boldsymbol{\eta}_i}\pi_i(da^i,s;\boldsymbol{\eta}_i)\frac{\partial R_i(s,a^i,a^{-i})}{\partial a^i}d\boldsymbol{\eta}_i
\\&=
\int_{\boldsymbol{\eta}_i(a^i)}^{\boldsymbol{\eta}'_i(a'^i)}\int_{\mathcal{A}_{-i}}\pi_{-i}(da^{-i},\boldsymbol{\eta}_{-i})\int_{\mathcal{A}_i}\frac{\partial}{\partial \boldsymbol{\eta}_i}\pi_i(da^i,s;\boldsymbol{\eta}_i)\frac{\partial \phi(s,a^i,a^{-i})}{\partial a^i}d\boldsymbol{\eta}_i
\end{align*}

Which implies that
\begin{align*}
&\int_{\mathcal{A}_{-i}}\int_{\mathcal{A}_i}\pi_{-i}(da^{-i},\boldsymbol{\eta}_{-i})\frac{\partial}{\partial \boldsymbol{\eta}_i}\pi_i(da^i,s;\boldsymbol{\eta}_i)\frac{\partial R_i(s,a^i,a^{-i})}{\partial a^i}
\\&=
\int_{\mathcal{A}_{-i}}\int_{\mathcal{A}_i}\pi_{-i}(da^{-i},\boldsymbol{\eta}_{-i})\frac{\partial}{\partial \boldsymbol{\eta}_i}\pi_i(da^i,s;\boldsymbol{\eta}_i)\frac{\partial \phi(s,a^i,a^{-i})}{\partial a^i}
\end{align*}
Moreover we also find that
\begin{align*}
&
\int_{\mathcal{A}_{-i}}\pi_{-i}(da^{-i},\boldsymbol{\eta}_{-i})\int_{\mathcal{A}_i}\frac{\partial}{\partial \boldsymbol{\eta}_i}\pi_i(da^i,s;\boldsymbol{\eta}_i)\frac{\partial R_i(s,a^i,a^{-i})}{\partial a^i}d\boldsymbol{\eta}_i
\\&=
\int_{\mathcal{A}_i}\int_{\mathcal{A}_{-i}}\pi_{-i}(da^{-i}|s;\boldsymbol{\eta}^{-i})\pi_{i}(da^{i}|s;\boldsymbol{\eta}^{i})\frac{\partial}{\partial \boldsymbol{\eta}^i}\ln{[\pi_i(da^i|s;\boldsymbol{\eta}^i)]}\frac{\partial R_i(s,a^i,a^{-i})}{\partial a^i}
\\&=\mathbb{E}_{\boldsymbol{\pi}(\boldsymbol{\boldsymbol{\eta}})}\left[\frac{\partial}{\partial \boldsymbol{\eta}^i}\ln{[\pi_i(a^i|s;\boldsymbol{\eta}^i)]}\frac{\partial R_i(s,a^i,a^{-i})}{\partial a^i}\right].
\end{align*}
Hence, using the linearity of the expectation and the derivative we arrive at
\begin{align}
&\mathbb{E}_{(a_i,a_{-i})\sim(\pi_i,\pi_{-i})}\left[\frac{\partial}{\partial \boldsymbol{\eta}^i}\ln{[\pi_i(a^i|s;\boldsymbol{\eta}^i)]}\left(\frac{\partial R_i(s,a^i,a^{-i})}{\partial a^i}-\frac{\partial \phi(s,a^i,a^{-i})}{\partial a^i}\right)\right]=\boldsymbol{0}
\end{align}


In a similar way we observe that for any c-SPG in which the state transitive assumption holds, we have that $
R_i(s',a^i,a^{-i})- R_i(s,a^i,a^{-i})
=\int_{s'}^{s}\frac{\partial R_i(s,a,a^{-i})}{\partial s}ds$
and $
\phi(s',a^i,a^{-i})- \phi(s',a^i,a^{-i})
=\int_{s'}^{s}\frac{\partial \phi(s,a,a^{-i})}{\partial s}ds$.
We then find that $
 \frac{\partial R_i(s,a,a^{-i})}{\partial s}=   \frac{\partial \phi(s,a,a^{-i})}{\partial s}$.
By identical reasoning as above we deduce that 
\begin{align}
&\mathbb{E}_{(a_i,a_{-i})\sim(\pi_i,\pi_{-i})}\left[\frac{\partial}{\partial \boldsymbol{\eta}^i}\ln{[\pi_i(a^i|s;\boldsymbol{\eta}^i)]}\left(\frac{\partial R_i(s,a^i,a^{-i})}{\partial s}-\frac{\partial \phi(s,a^i,a^{-i})}{\partial s}\right)\right]=\boldsymbol{0}.
\end{align}
Putting the two statements together leads to the expression:
\begin{align}
&\mathbb{E}_{(a_i,a_{-i})\sim(\pi_i,\pi_{-i})}\left[\frac{\partial}{\partial \boldsymbol{\eta}^i}\ln{[\pi_i(a^i|s;\boldsymbol{\eta}^i)]}D[R_i,\phi](s,a^i,a^{-i})\right]=\boldsymbol{0}
\end{align}
which concludes the proof.
\end{proof}

\begin{lemma}\label{expected_potentiality_lemma}
For any c-SPG the following expression holds $\forall (\boldsymbol{\eta}^i,\boldsymbol{\eta}^{-i}) \in \boldsymbol{E}^{ps}, \forall s\in \mathcal{S}$ 
\begin{align}
&\mathbb{E}_{\pi_i(\boldsymbol{\eta}_i),\pi_{-i}(\boldsymbol{\eta}_{-i})}\left[R_i\right]- \mathbb{E}_{\pi_i(\boldsymbol{\eta}'_{i}),\pi_{-i}(\boldsymbol{\eta}_{-i})}\left[R_i\right]=\mathbb{E}_{\pi_i(\boldsymbol{\eta}_{i}),\pi_{-i}(\boldsymbol{\eta}_{-i})}\left[\phi\right]- \mathbb{E}_{\pi'_i(\boldsymbol{\eta}'_{i}),\pi_{-i}(\boldsymbol{\eta}_{-i})}\left[\phi\right]. \label{parameterisation_potential_condition}
\end{align}
\end{lemma}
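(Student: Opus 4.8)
The plan is to reduce the claim to the dummy--coordination separability of Lemma~\ref{dummy_coord_prop} and then integrate against the (factorised) policies. First I would fix an arbitrary $s\in\mathcal{S}$ and note that, by the stage-game potential condition \eqref{potential_condition_static}, for every fixed $a^{-i}$ and any two actions $a^i,a'^i\in\mathcal{A}_i$ one has $(R_i-\phi)(s,(a^i,a^{-i}))=(R_i-\phi)(s,(a'^i,a^{-i}))$, so the map $a^i\mapsto (R_i-\phi)(s,(a^i,a^{-i}))$ is constant. Hence there is a bounded measurable function $F_i$ with $R_i(s,\boldsymbol{a})=\phi(s,\boldsymbol{a})+F_i(s,a^{-i})$ for all $\boldsymbol{a}\in\boldsymbol{\mathcal{A}}$; this is exactly Lemma~\ref{dummy_coord_prop} read at the single state $s$, with boundedness of $F_i$ coming from Assumption~A.2 and the boundedness of $\phi$.

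Next I would take expectations under the product policy $\pi_i(\boldsymbol{\eta}_i)\otimes\pi_{-i}(\boldsymbol{\eta}_{-i})$. Since the two marginals are independent and all integrands are bounded, Fubini's theorem applies and gives
\begin{align*}
\mathbb{E}_{\pi_i(\boldsymbol{\eta}_i),\pi_{-i}(\boldsymbol{\eta}_{-i})}\!\left[R_i\right]
&=\mathbb{E}_{\pi_{-i}(\boldsymbol{\eta}_{-i})}\!\left[\,\mathbb{E}_{\pi_i(\boldsymbol{\eta}_i)}\!\left[\phi(s,a^i,a^{-i})\right]+F_i(s,a^{-i})\right]\\
&=\mathbb{E}_{\pi_i(\boldsymbol{\eta}_i),\pi_{-i}(\boldsymbol{\eta}_{-i})}\!\left[\phi\right]+\mathbb{E}_{\pi_{-i}(\boldsymbol{\eta}_{-i})}\!\left[F_i(s,a^{-i})\right],
\end{align*}
where the second equality uses that $F_i(s,a^{-i})$ does not depend on $a^i$ and $\pi_i(\boldsymbol{\eta}_i)$ is a probability measure, so it passes through the inner integral unchanged. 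Repeating the identical computation with the deviation parameter $\boldsymbol{\eta}'_i$ in place of $\boldsymbol{\eta}_i$ yields $\mathbb{E}_{\pi'_i(\boldsymbol{\eta}'_i),\pi_{-i}(\boldsymbol{\eta}_{-i})}[R_i]=\mathbb{E}_{\pi'_i(\boldsymbol{\eta}'_i),\pi_{-i}(\boldsymbol{\eta}_{-i})}[\phi]+\mathbb{E}_{\pi_{-i}(\boldsymbol{\eta}_{-i})}[F_i(s,a^{-i})]$.

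Finally I would subtract the two displays. The residual terms $\mathbb{E}_{\pi_{-i}(\boldsymbol{\eta}_{-i})}[F_i(s,a^{-i})]$ are identical on both sides --- same agent $i$, same measure $\pi_{-i}(\boldsymbol{\eta}_{-i})$ --- and therefore cancel, leaving precisely \eqref{parameterisation_potential_condition}; since $s$ was arbitrary the identity holds for all $s\in\mathcal{S}$ and all $(\boldsymbol{\eta}^i,\boldsymbol{\eta}^{-i})\in\boldsymbol{E}^{ps}$. I do not expect a genuinely hard step here: the only thing that needs care is the measurability/integrability bookkeeping required to split the joint expectation into iterated integrals (supplied by Assumption~A.2 together with boundedness of $\phi$), and the structural observation --- immediate from \eqref{potential_condition_static} --- that $R_i-\phi$ is \emph{independent of agent $i$'s own action}, which is what makes the $F_i$ contributions cancel rather than merely remain bounded.
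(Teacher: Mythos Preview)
Your argument is correct for the lemma as stated, and its forward direction coincides in spirit with the paper's: both pass from the pointwise identity \eqref{potential_condition_static} to \eqref{parameterisation_potential_condition} by integrating against the product policy, the only cosmetic difference being that you first invoke the decomposition $R_i=\phi+F_i(s,a^{-i})$ from Lemma~\ref{dummy_coord_prop} and then cancel the $F_i$ term, whereas the paper integrates the difference $\Delta R_i=\Delta\phi$ directly against $\pi_i(da^i)\pi'_i(da'^i)\pi_{-i}(da^{-i})$.

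The paper's proof does more than the lemma statement asks, however: it also establishes a \emph{converse} in pure strategies, namely that if \eqref{parameterisation_potential_condition} holds for all $(\boldsymbol{\eta}^i,\boldsymbol{\eta}^{-i})\in\boldsymbol{E}^{ps}$ then the pointwise stage-game condition \eqref{potential_condition_static} is recovered. This reverse step is where the paper brings in Dirac representations of pure strategies and the coarea formula to show that evaluating the parameterised expectation at a pure strategy collapses to the pointwise value $F_i(s,a(\boldsymbol{\eta}))$. Your proposal does not address this direction at all --- but since the lemma, read literally, only asserts the forward implication (``for any c-SPG the expression holds''), your proof is complete for what is actually claimed.
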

\begin{proof}[Proof of Lemma \ref{expected_potentiality_lemma}]
The forward implication is straightforward. 

Indeed, assume that \eqref{potential_condition_static} holds, that is:
\begin{align}
R_i(s_t,(a^i_{t},a^{-i}_{t}))- R_i(s_t,(a'^i_{t},a^{-i}_{t}))=\phi(s_t,(a^i_{t},a^{-i}_{t}))- \phi(s_t,(a'^i_{t},a^{-i}_{t}))\label{pot_game_static_notes}
\end{align}
Define by $\Delta F[s_t](a^i_{t},a'^i_t),a^{-i}_{t}):= F(s_t,(a^i_{t},a^{-i}_{t}))- F(s_t,(a'^i_{t},a^{-i}_{t}))$ for any $a'^i_{t}\in \mathcal{A}_i$ then for any $F:\mathcal{S}\times\mathcal{A}_i\times\mathcal{A}_{-i}\to \mathbb{R}$ ($F\in\mathcal{H}$) we have that 
\begin{align*}
\\&\int_{a_i\in\mathcal{A}_i}\int_{a'_i\in\mathcal{A}_i}\int_{a_{-i}\in\mathcal{A}_{-i}}\pi_{i}(da^i|\cdot)\pi_{i}(da'^i|\cdot)\pi_{-i}(da^{-i}|\cdot)\Delta F[s_t](a^i_{t},a'^i_t),a^{-i}_{t})
\\&=
\int_{a_i\in\mathcal{A}_i}\int_{a_{-i}\in\mathcal{A}_{-i}}\pi_{i}(da^i|\cdot)\pi_{-i}(da^{-i}|\cdot)F(a^i_{t},a'^i_t),a^{-i}_{t})- \int_{a'_i\in\mathcal{A}_i}\int_{a_{-i}\in\mathcal{A}_{-i}}\pi_{i}(da'^i|\cdot)\pi_{-i}(da^{-i}|\cdot)F(s_t,(a'^i_{t},a^{-i}_{t}))
\\&=
\mathbb{E}_{(\pi_i,\pi_{-i})}\left[F(s_t,(a^i_{t},a^{-i}_{t}))\right]- \mathbb{E}_{(\pi'_i,\pi_{-i})}\left[F(s_t,(a'^i_{t},a^{-i}_{t}))\right]
\end{align*}
This immediately suggests that we can get the result by multiplying \eqref{pot_game_static_notes} by\\ $\int_{a_i\in\mathcal{A}_i}\int_{a'_i\in\mathcal{A}_i}\int_{a_{-i}\in\mathcal{A}_{-i}}\pi_{i}(da^i|\cdot)\pi_{i}(da'^i|\cdot)\pi_{-i}(da^{-i}|\cdot)$.

For the reverse (in pure strategies) we first consider the case in which the pure strategy is a linear map from some parameterisation. We now readily verify that the reverse holds indeed:
\begin{align*}
&\mathbb{E}_{\pi_i(\boldsymbol{\eta}_i),\pi_{-i}(\boldsymbol{\eta}_{-i})}\left[F(s,a^i,a^{-i})\right]-\mathbb{E}_{\pi_i(\boldsymbol{\eta}'_i),\pi_{-i}(\boldsymbol{\eta}_{-i})}\left[F(s,a^i,a^{-i})\right]
\\&=\int_{a'\in \mathcal{A}^i}\int_{a''\in \mathcal{A}^{-i}}\pi_i(da'|s;\boldsymbol{\eta}_i)\pi_{-i}(da''|s;\boldsymbol{\eta}_{-i})F_i(s,(a',a''))
\\&\qquad-\int_{a'\in \mathcal{A}^i}\int_{a''\in \mathcal{A}^{-i}}\pi_i(da'|s;\boldsymbol{\eta}'_i)\pi_{-i}(da''|s;\boldsymbol{\eta}_{-i})F_i(s,(a',a''))
\\&
=\int_{a'\in \mathcal{A}^i}\int_{a''\in \mathcal{A}^{-i}}da'da''\delta_i(a'-a^i(s;\boldsymbol{\eta}_i))\delta_{-i}(a''-a^{-i}(s;\boldsymbol{\eta}_{-i}))F_i(s,(a',a''))
\\&\qquad-\int_{a'\in \mathcal{A}^i}\int_{a''\in \mathcal{A}^{-i}}da'da''\delta_i(a'-a^i(s;\boldsymbol{\eta}'_i))\delta_{-i}(a''-a^{-i}(s;\boldsymbol{\eta}_{-i}))F_i(s,(a',a''))
\\&
=\Delta F_i(s,(a^i(s;\boldsymbol{\eta}_i),a^{-i}(s;\boldsymbol{\eta}_{-i})))
\end{align*}
which proves the statement in the linear case.

For the general case, consider a strategy which is defined by a map $h:\mathcal{S}\times E\to \mathcal{A}$. We  note that for any $\beta\in\mathbb{R}/\{0\}$ and $x\in X\subset \mathbb{R}$ we have that
\begin{align}
\int_\mathbb{R}\delta(\beta x)=\frac{1}{|\beta|}\int_{\mathbb{R}}\delta(\beta x)dx    \label{simple_delta_equality}
\end{align}
This is true since for any $\beta\in\mathbb{R}/\{0\}$ we can construct the delta function in the following way:
\begin{align*}
\delta(\beta x)=\underset{m\to\infty}{\lim}
\frac{1}{|m|\sqrt{\pi}}e^{-\left(\beta x/m\right)^2}
\end{align*}
Now define $n:=m\beta^{-1}$, then
\begin{align*}
&\underset{m\to\infty}{\lim}
\frac{1}{|m|\sqrt{\pi}}e^{-\left(\beta x/m\right)^2}
\\&=\underset{n\to\infty}{\lim}
\frac{1}{|\beta n|\sqrt{\pi}}e^{-\left(x/n\right)^2}
\\&=\frac{1}{|\beta|}\underset{n\to\infty}{\lim}
\frac{1}{|n|\sqrt{\pi}}e^{-\left( x/n\right)^2}=
\frac{1}{|\beta|}\delta(x)
\end{align*}

In the following, we use the coarea formula (for geometric measures) \cite{simon1983lectures,nicolaescu2011coarea} which says that for any open set $X\subset \mathbb{R}^n$ and for any Lipschitz function $f:X\to\mathbb{R}$ on $X$ and for any $L^1$ function $g$ the following expression holds:
\begin{align*}
\int_X g(x)|\nabla f(x)|dx=\int_{\mathbb{R}}\left(\int_{f^{-1}(s)}g(x)dH_{n-1}(x)\right)ds    
\end{align*}
where $H_{n-1}$ is the $(n-1)-$dimensional Hausdorff measure.

Let us now define $k(s,\boldsymbol{\eta}, a):= h^{-1}(s,\boldsymbol{\eta})-a$.

Now
\begin{align*}
\int_{a\in\mathcal{A}}\pi_{\epsilon=0}(a;s,\boldsymbol{\eta})F_i(s,\boldsymbol{a})da
=\int_{a\in\mathcal{A}}\delta(k(s, a,h(\boldsymbol{\eta})))F_i(s,\boldsymbol{a})da
\end{align*}
By Taylor's theorem, expanding about the point $y$ where $y$ is defined by $k(s,\boldsymbol{\eta}, y)=0$ implies that
\begin{align*}
\int_{a\in\mathcal{A}}\delta(k(s, a,\boldsymbol{\eta}))F_i(s,\boldsymbol{a})da\approx\int_{a\in\mathcal{A}}\delta\left(k'(s, y,\boldsymbol{\eta})(a-y)\right)F_i(s,\boldsymbol{a})da
\end{align*}
Moreover
\begin{align*}
&\int_{a\in\mathcal{A}}\delta(k'(s, y,\boldsymbol{\eta}))(a-y)F_i(s,\boldsymbol{a})da
\\&=\int_{y\in k^{-1}(0)}\left(\int^{y+\epsilon}_{y-\epsilon}\delta\left(k'(s, y,\boldsymbol{\eta})(a-y)\right)F_i(s,\boldsymbol{a})da\right)d\sigma(y)
\end{align*}
where $\sigma$ is a Minkowski content measure.

Define $x:=a- y$ so $dx:=da$ 
\begin{align*}
&\int_{y\in k^{-1}(0)}\left(\int^{y+\epsilon}_{y-\epsilon}\delta\left(k'(s, y,g(\boldsymbol{\eta}))(a-y)\right)F_i(s,\boldsymbol{a})da\right)d\sigma(y)
\\&=\int_{y\in k^{-1}(0)}\left(\int^{x+\epsilon}_{x-\epsilon}\delta\left(k'(s, y,\boldsymbol{\eta})x\right)F_i(s,x+y)dx\right)d\sigma(y)
\\&=\int_{y\in k^{-1}(0)}\left(\int^{x+\epsilon}_{x-\epsilon}\delta(x)\frac{F(s,x+y)}{|k'(s, y,\boldsymbol{\eta})|}dx\right)d\sigma(y)
\\&=\int_{y\in k^{-1}(0)}\frac{F_i(s,y)}{|k'(s, y,\boldsymbol{\eta})|}d\sigma(y)
\\&=\int_{y\in k^{-1}(0)}F_i(s,y)d\sigma(y)
\\&=F_i(s,a)
\end{align*}
where we have used \eqref{simple_delta_equality} in the second step.

Hence we complete the proof by noting that
\begin{align*}
&\int_{a\in\mathcal{A}}\pi_{\epsilon=0}(a;s,\boldsymbol{\eta}')F_i(s,\boldsymbol{a})da-\int_{a\in\mathcal{A}}\pi_{\epsilon=0}(a;s,\boldsymbol{\eta}'')F_i(s,\boldsymbol{a})da
\\&=F_i(s,a(\eta))-F_i(s,a(\eta'))
\end{align*}
as required.
\end{proof}

\begin{proof}[Proof of Lemma \ref{bridge_lemma}]

Recall $\Delta F(s_t,(a^i_{t},a'^i_t),a^{-i}_{t}):= F(s_t,(a'^i_t,a^{-i}_{t}))- F(s_t,(a^i_{t},a^{-i}_{t}))$ define also by \\$\Delta F(s_t,(\pi_i,\pi'_i),\pi_{-i})):=\mathbb{E}_{(\pi_i,\pi_{-i})}\left[F(s_t,(a^i_{t},a^{-i}_{t}))\right]- \mathbb{E}_{(\pi'_i,\pi_{-i})}\left[F(s_t,(a'^i_{t},a^{-i}_{t}))\right]$. We wish to bridge the two cases by proving the following:
\begin{align}
\left|\Delta F(s_t,(a^i_{t},a'^i_t),a^{-i}_{t})-\Delta F(s_t,(\pi_{i,\epsilon},\pi'_{i,\epsilon}),a^{-i})\right|\leq c\bar{\sigma}^2_{\epsilon}.
\end{align}
where $\bar{\sigma}_\epsilon=\max\{\sigma_\epsilon,\sigma'_\epsilon\}$ and $\sigma_\epsilon,\sigma'_\epsilon$ are the variances of the policies $\pi_\epsilon$ and $\pi'_\epsilon$ respectively. 
Indeed,
\begin{align*}
&\quad \left|\mathbb{E}\left[F(s,\pi_{i,\epsilon=0}(\cdot,|s,\eta),a^{-i})-F(s,\pi_{i,\epsilon}(\cdot,|s,\eta),a^{-i})\right]\right|
\\&\leq \mathbb{E}\left[\left|F(s,\pi_{i,\epsilon=0}(\cdot,|s,\eta),a^{-i})-F(s,\pi_{i,\epsilon}(\cdot,|s,\eta),a^{-i})\right|\right]
\\&= \mathbb{E}\Big[\left(1_{\left|F(s,\pi_{i,\epsilon=0}(\cdot,|s,\eta),a^{-i})-F(s,\pi_{i,\epsilon}(\cdot,|s,\eta),a^{-i})\right|>\gamma}+1_{\left|F(s,\pi_{i,\epsilon=0}(\cdot,|s,\eta),a^{-i})-F(s,\pi_{i,\epsilon}(\cdot,|s,\eta),a^{-i})\right|\leq \gamma}\right)
\\&\quad\cdot\left(F(s,\pi_{i,\epsilon=0}(\cdot,|s,\eta),a^{-i})-F(s,\pi_{i,\epsilon}(\cdot,|s,\eta),a^{-i})\right)\Big]
\end{align*}
Now since $F$ is bounded and continuous, we deduce that 
\begin{align*}
&\mathbb{E}\left[\left(1_{\left|F(s,\pi_{i,\epsilon=0}(\cdot,|s,\eta),a^{-i})-F(s,\pi_{i,\epsilon}(\cdot,|s,\eta),a^{-i})\right|>\gamma}\right)\left(F(s,\pi_{i,\epsilon=0}(\cdot,|s,\eta),a^{-i})-F(s,\pi_{i,\epsilon}(\cdot,|s,\eta),a^{-i})\right)\right]\\& \leq \|F\|_\infty\mathbb{E}\left[\left(1_{\left|F(s,\pi_{i,\epsilon=0}(\cdot,|s,\eta),a^{-i})-F(s,\pi_{i,\epsilon}(\cdot,|s,\eta),a^{-i})\right|>\gamma}\right)\right]
\\&=\|F\|_\infty\mathbb{P}\left(\left|F(s,\pi_{i,\epsilon=0}(\cdot,|s,\eta),a^{-i})-F(s,\pi_{i,\epsilon}(\cdot,|s,\eta),a^{-i})\right|>\gamma\right)
\end{align*}
using the properties of the indicator function. Now by the continuity and boundedness of $F$ we deduce that there exists $\delta>0$ such that $a-b>\delta$ whenever $F(a)-F(b)>\gamma$ applying this result we then find that
\begin{align*}\nonumber
&\mathbb{E}\Big[\left(1_{\left|F(s,\pi_{i,\epsilon=0}(\cdot,|s,\eta),a^{-i})-F(s,\pi_{i,\epsilon}(\cdot,|s,\eta),a^{-i})\right|>\gamma}+1_{\left|F(s,\pi_{i,\epsilon=0}(\cdot,|s,\eta),a^{-i})-F(s,\pi_{i,\epsilon}(\cdot,|s,\eta),a^{-i})\right|\leq \gamma}\right)
\\&\cdot\left(F(s,\pi_{i,\epsilon=0}(\cdot,|s,\eta),a^{-i})-F(s,\pi_{i,\epsilon}(\cdot,|s,\eta),a^{-i})\right)\Big]
\\&\leq \|F\|_\infty\mathbb{P}\left(\left|F(s,\pi_{i,\epsilon=0}(\cdot,|s,\eta),a^{-i})-F(s,\pi_{i,\epsilon}(\cdot,|s,\eta),a^{-i})\right|>\gamma\right)+\gamma
\\&\leq \|F\|_\infty\mathbb{P}\left(\left|\pi_{i,\epsilon=0}(\cdot,|s,\eta)-\pi_{i,\epsilon}(\cdot,|s,\eta)\right|>\delta\right)+\gamma
\\&\leq \delta^{-2}\|F\|_\infty\sigma_{\epsilon}^2+\gamma
\end{align*}
where we have used Tschebeyshev's inequality in the last line.

Now since $\gamma$ is arbitrary we deduce that
\begin{align}
&\quad \left|\mathbb{E}\left[F(s,\pi_{i,\epsilon=0}(\cdot,|s,\eta),a^{-i})-F(s,\pi_{i,\epsilon}(\cdot,|s,\eta),a^{-i})\right]\right|\leq \delta^{-2}\|F\|_\infty\sigma_{\epsilon}^2
\end{align}

Moreover
\begin{align*}\nonumber
&\left| \Delta F(s_t,(a^i_{t},a'^i_t),a^{-i}_{t})-\Delta F(s_t,(\pi_i,\pi'_i),\pi_{-i}))\right|
\\&\nonumber= \left| F(s_t,(a'^i_t,a^{-i}_{t}))- F(s_t,(a^i_{t},a^{-i}_{t}))-\left(\mathbb{E}_{(\pi_i,\pi_{-i})}\left[F(s_t,(a^i_{t},a^{-i}_{t}))\right]- \mathbb{E}_{(\pi'_i,\pi_{-i})}\left[F(s_t,(a'^i_{t},a^{-i}_{t}))\right]\right)\right|
\\&\nonumber\leq 
\left|F(s_t,(a^i_t,a^{-i}_{t}))-\mathbb{E}_{(\pi_i,\pi_{-i})}\left[F(s_t,(a^i_{t},a^{-i}_{t}))\right]\right| +\left|F(s_t,(a'^i_{t},a^{-i}_{t}))- \mathbb{E}_{(\pi'_i,\pi_{-i})}\left[F(s_t,(a'^i_{t},a^{-i}_{t}))\right]\right|
\\&\leq c\|F\|_\infty\bar{\sigma}_{\epsilon}^2
\end{align*}

We deduce the last statement by applying the result to the sequence of $\epsilon/n$ and by the sandwich theorem. 
\end{proof}

\begin{lemma}\label{dynamic_potential_form}
The function $B$ is given by the following expression for $s\in \mathcal{S}, \forall \boldsymbol{\pi}\in\boldsymbol{\Pi}$:
\begin{align*}
B^{\boldsymbol{\pi}}(s)-B^{\boldsymbol{\pi'}}(s)
=\mathbb{E}_{s_t\sim P}\left[\sum_{t=0}^\infty\sum_{i\in\mathcal{N}}\gamma^t\int_0^1\boldsymbol{\gamma}'(z)\frac{\partial R_i}{\partial \pi_i}(s_t,\boldsymbol{\gamma}(z))\Big|s=s_0\right],
\end{align*}
where $\boldsymbol{\gamma}(z)$ is a continuous differentiable path in $\boldsymbol{\Pi}$ connecting two strategy profiles $\boldsymbol{\pi}\in\boldsymbol{\Pi}$ and $\boldsymbol{\pi'}\in\boldsymbol{\Pi}$.
\end{lemma}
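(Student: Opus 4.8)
The plan is to derive the formula by combining the explicit construction of $B$ supplied by Proposition~\ref{dpg_proposition} with the pointwise potential identity $\tfrac{\partial}{\partial a^i}\phi=\tfrac{\partial}{\partial a^i}R_i$ (and $\tfrac{\partial}{\partial s}\phi=\tfrac{\partial}{\partial s}R_i$) isolated inside the proof of Proposition~\ref{reparam_pg_ode}, and then observing that this identity makes the family of reward--gradient vector fields a \emph{conservative} field whose potential is exactly $\phi$. Since $B$ is only determined up to an additive constant (Lemma~\ref{difference_constant_lemma}), the natural object to pin down is the difference $B^{\boldsymbol{\pi}}-B^{\boldsymbol{\pi'}}$, which is what the statement does.

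First I would recall from the proof of Proposition~\ref{dpg_proposition} that one may take $B^{\boldsymbol{\pi}}(s)=\mathbb{E}_{\boldsymbol{\pi}}\big[\sum_{t\ge 0}\gamma^t\phi(s_t,\boldsymbol{a}_t)\mid s_0=s\big]$, and from Lemma~\ref{dummy_coord_prop} that each $R_i$ decomposes as $R_i=\phi+F_i$ with $F_i$ independent of $a^i$ and of $s$. The telescoping-and-cancellation argument used there (the $F_i$ terms drop out under any single-agent deviation) reduces $B^{\boldsymbol{\pi}}(s)-B^{\boldsymbol{\pi'}}(s)$ to a discounted sum over $t$ of the per-stage expected-potential increments $\boldsymbol{\phi}^{\boldsymbol{\pi}}(s_t)-\boldsymbol{\phi}^{\boldsymbol{\pi'}}(s_t)$, where $\boldsymbol{\phi}^{\boldsymbol{\pi}}(s'):=\mathbb{E}_{\boldsymbol{a}\sim\boldsymbol{\pi}(\cdot\mid s')}[\phi(s',\boldsymbol{a})]$, the outer expectation being over trajectories of $P$.

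Second, for the static increment I would lift the pointwise identity to the policy-parameter space. Setting $\boldsymbol{G}_i(\boldsymbol{\pi})(s'):=\tfrac{\partial}{\partial\pi_i}\boldsymbol{R}_i^{\boldsymbol{\pi}}(s')$, Lemma~\ref{expected_potentiality_lemma} together with the nascent-delta estimate of Lemma~\ref{bridge_lemma} (used to push the $\boldsymbol{\eta}^i$-derivative through the policy integral) gives $\boldsymbol{G}_i(\boldsymbol{\pi})(s')=\tfrac{\partial}{\partial\pi_i}\boldsymbol{\phi}^{\boldsymbol{\pi}}(s')$ for every $i$ and every $s'$; hence $(\boldsymbol{G}_i)_{i\in\mathcal{N}}=\nabla_{\boldsymbol{\pi}}\boldsymbol{\phi}^{\,\cdot}(s')$ is an exact gradient field. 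The fundamental theorem for line integrals (plus the chain rule) then yields, for any continuously differentiable path $\boldsymbol{\gamma}$ in $\boldsymbol{\Pi}$ from $\boldsymbol{\pi'}$ to $\boldsymbol{\pi}$,
\[
\boldsymbol{\phi}^{\boldsymbol{\pi}}(s')-\boldsymbol{\phi}^{\boldsymbol{\pi'}}(s')=\sum_{i\in\mathcal{N}}\int_0^1\boldsymbol{\gamma}'(z)\,\frac{\partial R_i}{\partial\pi_i}\big(s',\boldsymbol{\gamma}(z)\big)\,dz,
\]
independently of the chosen path (only the component of $\boldsymbol{\gamma}'$ in the $i$-th coordinate contributes to the $i$-th summand, so the sum reassembles the full directional derivative of $\boldsymbol{\phi}$).

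Finally I would substitute this expression for each stage increment into the discounted sum from the first step and interchange the summation over $t$, the expectation over $s_t$, and the line integral over $z$; this is licit because $R_i$ is bounded (Assumption~A.2), hence so is $\phi$, so the series converges absolutely and uniformly (exactly as in Lemma~\ref{convergence_of_B}) and Fubini/dominated convergence apply, delivering the claimed identity. I expect the main obstacle to be the first step: showing that $B^{\boldsymbol{\pi}}(s)-B^{\boldsymbol{\pi'}}(s)$ genuinely collapses to a sum of per-stage \emph{expected} potential differences even though the two trajectory laws differ — this is not a mere linearity-of-expectation manoeuvre and forces one to re-run the cancellation of the dummy terms $F_i$ from the proof of Proposition~\ref{dpg_proposition}. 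A secondary technical point is making the parameter-space differentiation rigorous (passing $\tfrac{\partial}{\partial\boldsymbol{\eta}^i}$ inside the policy integral), which is precisely what Lemmas~\ref{bridge_lemma} and~\ref{expected_potentiality_lemma} are set up to provide.
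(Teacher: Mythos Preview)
Your approach is essentially the paper's: obtain the line-integral representation of the static potential via the gradient theorem (the paper cites Monderer--Shapley for exactly your identity $\phi^{\boldsymbol{\pi}}(s)-\phi^{\boldsymbol{\pi'}}(s)=\sum_{i}\int_0^1\boldsymbol{\gamma}'(z)\tfrac{\partial R_i}{\partial\pi_i}(s,\boldsymbol{\gamma}(z))\,dz$), and then insert it into the explicit formula \eqref{expression_for_B} for $B$. The paper's entire argument is those two moves, stated in two sentences.

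Where you differ is in the amount of scaffolding. The paper does \emph{not} carry out your ``first step'' (collapsing $B^{\boldsymbol{\pi}}-B^{\boldsymbol{\pi'}}$ to a single-law expectation of per-stage potential differences and cancelling the dummy terms $F_i$); it literally writes ``inserting \eqref{line_integral_potential_equation} into \eqref{expression_for_B}'' and leaves the trajectory-law issue implicit in the ambiguous notation $\mathbb{E}_{s_t\sim P}$. So the obstacle you flag as the main difficulty is not resolved in the paper either---your instinct that this step needs the cancellation machinery from the proof of Proposition~\ref{dpg_proposition} is more careful than the source, not a deviation from it. Likewise, your appeal to Lemmas~\ref{bridge_lemma} and~\ref{expected_potentiality_lemma} to justify differentiating under the policy integral goes beyond what the paper invokes (it simply appeals to the gradient theorem). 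In short: same route, but you are proposing to fill in justifications that the paper omits.
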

\begin{proof}
We note that from \eqref{parameterisation_potential_condition}, using the gradient theorem of vector calculus, it is straightforward to deduce that the potential function $\phi$ can be computed from the reward functions $(R_{i})_{i\in\mathcal{N}}$ via the following expression \cite{monderer1996potential}:
\begin{align} 
\phi^{\boldsymbol{\pi}}(s)=\phi^{\boldsymbol{\pi'}}(s)+\sum_{i\in\mathcal{N}}\int_0^1\boldsymbol{\gamma}'(z)\frac{\partial R_i}{\partial \pi_i}(s_t,\boldsymbol{\gamma}(z)), \label{line_integral_potential_equation} 
\end{align}
where $\boldsymbol{\gamma}(z)$ is a continuous differentiable path in $\boldsymbol{\Pi}$ connecting two strategy profiles $\boldsymbol{\pi}\in\boldsymbol{\Pi}$ and $\boldsymbol{\pi'}\in\boldsymbol{\Pi}$.

We then deduce the result (in the finite case) after inserting (\ref{line_integral_potential_equation}) into (\ref{expression_for_B}).
\end{proof}

\begin{proof}[Proof of Prop. \ref{SL_convergence}]
Recall the following definitions:
\begin{align}
F_i(s,\boldsymbol{\eta},\rho):=\int_{\mathcal{A}}\boldsymbol{\pi}_\epsilon(\boldsymbol{da},\boldsymbol{\eta},s)\frac{\partial}{\partial \eta_i}\pi_{i,\epsilon}(a^i,\eta_i,s)\nabla F_\rho(s,\boldsymbol{a})    
\end{align}
and
\begin{align}
U(s,\boldsymbol{\eta},\rho):=\int_{\mathcal{A}}\boldsymbol{\pi}_\epsilon(\boldsymbol{da},\boldsymbol{\eta},s)\frac{\partial}{\partial \eta_i}\pi_{i,\epsilon}(a^i,\eta_i,s)\nabla R_i(s,\boldsymbol{a}),   
\end{align}
where we have used the shorthand: $\boldsymbol{\pi}_\epsilon(\boldsymbol{a},\boldsymbol{\eta},s):=\pi_i(a,s,\eta_i)\pi_{-i}(a_i,s,\eta_{-i})$

Since $U$ and $F_i$ are locally Lipschitz continuous each can have at most polynomial growth. By the H\"{o}lder inequality we find that:
\begin{align}
&\int_{\Omega}\sum_{i\in\mathcal{N}}\left|F_i(s,\boldsymbol{\eta},\rho)-u(s,\boldsymbol{\eta})\right|^2d\nu_1(s,\boldsymbol{\eta})\nonumber
\\&=\int_{\Omega}\sum_{i\in\mathcal{N}}\left|\int_{\mathcal{A}}\boldsymbol{\pi}_\epsilon(\boldsymbol{da},\boldsymbol{\eta},s)\frac{\partial}{\partial \eta_i}\pi_{i,\epsilon}(a^i,\eta_i,s)\nabla F_\rho(s,\boldsymbol{a})-\int_{\mathcal{A}}\boldsymbol{\pi}_\epsilon(\boldsymbol{da},\boldsymbol{\eta},s)\frac{\partial}{\partial \eta_i}\pi_{i,\epsilon}(a^i,\eta_i,s)\nabla R_i(s,\boldsymbol{a})\right|^2d\nu_1(s,\boldsymbol{\eta})\nonumber
\\&\leq\int_{\Omega}\sum_{i\in\mathcal{N}}\left|\int_{\mathcal{A}}\boldsymbol{\pi}_\epsilon(\boldsymbol{da},\boldsymbol{\eta},s)\left(\frac{\partial}{\partial \eta_i}\pi_{i,\epsilon}(a^i,\eta_i,s)\nabla F_\rho(s,\boldsymbol{a})-\frac{\partial}{\partial \eta_i}\pi_{i,\epsilon}(a^i,\eta_i,s)\nabla R_i(s,\boldsymbol{a})\right)\right|^2d\nu_1(s,\boldsymbol{\eta})\nonumber
\\&\leq \int_{\Omega}\left(\sum_{i\in\mathcal{N}}\underset{\boldsymbol{a}\in\mathcal{A}}{\max}\left|\frac{\partial}{\partial \eta_i}\pi_{i,\epsilon}(a^i,\eta_i,s)\nabla F_\rho(s,\boldsymbol{a})\right|^{q_i}+\underset{\boldsymbol{a}\in\mathcal{A}}{\max}N\left|\frac{\partial}{\partial \eta_i}\pi_{i,\epsilon}(a^i,\eta_i,s)\nabla R_i(s,\boldsymbol{a})\right|^r\right)
\\&\cdot\left(\sum_{i\in\mathcal{N}}\left|\int_{\mathcal{A}}\boldsymbol{\pi}_\epsilon(\boldsymbol{da},\boldsymbol{\eta},s)\left(\frac{\partial}{\partial \eta_i}\pi_{i,\epsilon}(a^i,\eta_i,s)\nabla F_\rho(s,\boldsymbol{a})-\frac{\partial}{\partial \eta_i}\pi_{i,\epsilon}(a^i,\eta_i,s)\nabla R_i(s,\boldsymbol{a})\right)\right|^2\right)d\nu_1(s,\boldsymbol{\eta})\nonumber
\\&\leq N\int_{\Omega}\left(\sum_{i\in\mathcal{N}}\left|\frac{\partial}{\partial \eta_i}\pi_{i,\epsilon}(a^i,\eta_i,s)\right|^{q_i\wedge r}\underset{\boldsymbol{a}\in\mathcal{A}}{\max}\left(\left|\nabla F_\rho(s,\boldsymbol{a})\right|^{q_i}+\left|\nabla R_i(s,\boldsymbol{a})\right|^r\right)\right)\nonumber
\\&\cdot\left(\sum_{i\in\mathcal{N}}\left|\frac{\partial}{\partial \eta_i}\pi_{i,\epsilon}(a^i,\eta_i,s)\right|^2\left|\int_{\mathcal{A}}\boldsymbol{\pi}_\epsilon(\boldsymbol{da},\boldsymbol{\eta},s)\left(\nabla F_\rho(s,\boldsymbol{a})-\nabla R_i(s,\boldsymbol{a})\right)\right|^2\right)d\nu_1(s,\boldsymbol{\eta})\nonumber
\\&\leq N\left(\int_{\Omega}\left(\sum_{i\in\mathcal{N}}\left|\frac{\partial}{\partial \eta_i}\pi_{i,\epsilon}(a^i,\eta_i,s)\right|^{q_i\wedge r}\underset{\boldsymbol{a}\in\mathcal{A}}{\max}\left(\left|\nabla F_\rho(s,\boldsymbol{a})\right|^{q_i}+\left|\nabla R_i(s,\boldsymbol{a})\right|^r\right)\right)^{r_1}d\nu_1(s,\boldsymbol{\eta})\right)^{1/r_1}\nonumber
\\& \qquad\cdot \left(\int_{\Omega}\left(\sum_{i\in\mathcal{N}}\left|\frac{\partial}{\partial \eta_i}\pi_{i,\epsilon}(a^i,\eta_i,s)\right|^2\left|\int_{\mathcal{A}}\boldsymbol{\pi}_\epsilon(\boldsymbol{da},\boldsymbol{\eta},s)\left(\nabla F_\rho(s,\boldsymbol{a})-\nabla R_i(s,\boldsymbol{a})\right)\right|^2\right)^{r_2}d\nu_1(s,\boldsymbol{\eta})\right)^{1/r_2}\nonumber
\\&\leq cN\left(\int_{\Omega}\left(\sum_{i\in\mathcal{N}}\left|\frac{\partial}{\partial \eta_i}\pi_{i,\epsilon}(a^i,\eta_i,s)\right|^{q_i\wedge r}\underset{\boldsymbol{a}\in\mathcal{A}}{\max}\left(\left|\nabla F_\rho(s,\boldsymbol{a})-\nabla R_i(s,\boldsymbol{a})\right|^{q_i}+\left|\nabla R_i(s,\boldsymbol{a})\right|^{r\wedge q_i}\right)\right)^{r_1}d\nu_1(s,\boldsymbol{\eta})\right)^{1/r_1}\nonumber
\\& \qquad\cdot \left(\int_{\Omega}\left(\sum_{i\in\mathcal{N}}\left|\frac{\partial}{\partial \eta_i}\pi_{i,\epsilon}(a^i,\eta_i,s)\right|^2\left|\int_{\mathcal{A}}\boldsymbol{\pi}_\epsilon(\boldsymbol{da},\boldsymbol{\eta},s)\left(\nabla F_\rho(s,\boldsymbol{a})-\nabla R_i(s,\boldsymbol{a})\right)\right|^2\right)^{r_2}d\nu_1(s,\boldsymbol{\eta})\right)^{1/r_2}\nonumber
\\&\leq cN\left(\int_{\Omega}\left(\sum_{i\in\mathcal{N}}\left|\frac{\partial}{\partial \eta_i}\pi_{i,\epsilon}(a^i,\eta_i,s)\right|^{q_i\wedge r}\underset{\boldsymbol{a}\in\mathcal{A}}{\max}\left(\left|\nabla \left[F_\rho(s,\boldsymbol{a})- R_i(s,\boldsymbol{a})\right]\right|^{q_i}+\left|\nabla R_i(s,\boldsymbol{a})\right|^{r\wedge q_i}\right)\right)^{r_1}d\nu_1(s,\boldsymbol{\eta})\right)^{1/r_1}\nonumber
\\& \qquad\cdot \left(\int_{\Omega}\left(\sum_{i\in\mathcal{N}}\left|\frac{\partial}{\partial \eta_i}\pi_{i,\epsilon}(a^i,\eta_i,s)\right|^2\left|\int_{\mathcal{A}}\boldsymbol{\pi}_\epsilon(\boldsymbol{da},\boldsymbol{\eta},s)\left(\nabla\left[ F_\rho(s,\boldsymbol{a})- R_i(s,\boldsymbol{a})\right]\right)\right|^2\right)^{r_2}d\nu_1(s,\boldsymbol{\eta})\right)^{1/r_2}\label{gradient_optimisation_pen_step}
\end{align}
Now for any $l\in\mathbb{R}_{>0}$ we have that
\begin{align}\nonumber
&\left|\nabla\left[ F_\rho(s,\boldsymbol{a})- R_i(s,\boldsymbol{a})\right]\right|^l
\\&\leq \left|\nabla\left[ F_\rho(s,\boldsymbol{a})-\phi(s,\boldsymbol{a})+\phi(s,\boldsymbol{a})- R_i(s,\boldsymbol{a})\right]\right|^l\nonumber
\\&\leq \left|\nabla\left[ F_\rho(s,\boldsymbol{a})-\phi(s,\boldsymbol{a})\right]\right|^l+\left|\nabla\left[\phi(s,\boldsymbol{a})- R_i(s,\boldsymbol{a})\right]\right|^l\nonumber
\\&\leq \left|\nabla\left[ F_\rho(s,\boldsymbol{a})-\phi(s,\boldsymbol{a})\right]\right|^l\label{subs_r_diff_phi_diff}
\end{align}
using the potentiality property.

Inserting \eqref{subs_r_diff_phi_diff} into \eqref{gradient_optimisation_pen_step} yields

\begin{align*}
&\int_{\Omega}\sum_{i\in\mathcal{N}}\left|F_i(s,\boldsymbol{a},\rho)-u(s,\boldsymbol{a})\right|^2d\nu_1(s,\boldsymbol{\eta})
\\&\leq cN\left(\int_{\Omega}\left(\sum_{i\in\mathcal{N}}\left|\frac{\partial}{\partial \eta_i}\pi_{i,\epsilon}(a^i,\eta_i,s)\right|^{q_i\wedge r}\underset{\boldsymbol{a}\in\mathcal{A}}{\max}\left(\left|\nabla \left[F_\rho(s,\boldsymbol{a})- \phi(s,\boldsymbol{a})\right]\right|^{q_i}+\left|\nabla R_i(s,\boldsymbol{a})\right|^{r\wedge q_i}\right)\right)^{r_1}d\nu_1(s,\boldsymbol{\eta})\right)^{1/r_1}\nonumber
\\& \qquad\cdot \left(\int_{\Omega}\left(\sum_{i\in\mathcal{N}}\left|\frac{\partial}{\partial \eta_i}\pi_{i,\epsilon}(a^i,\eta_i,s)\right|^2\left|\int_{\mathcal{A}}\boldsymbol{\pi}_\epsilon(\boldsymbol{da},\boldsymbol{\eta},s)\left(\nabla\left[ F_\rho(s,\boldsymbol{a})- \phi(s,\boldsymbol{a})\right]\right)\right|^2\right)^{r_2}d\nu_1(s,\boldsymbol{\eta})\right)^{1/r_2}
\\&\leq cN^2\left(\underset{i\in\mathcal{N}}{\max}\epsilon^{q_i}+\underset{i\in\mathcal{N}}{\max}\|\bar{R}_{\nabla}\|^{m\wedge q_i}_{\infty}\right)\epsilon^2
\end{align*}
where the last line follows from the boundedness of $\nabla R_i\leq \bar{R}_{\nabla}$ and $\frac{\partial \pi_{i,\epsilon}}{\partial \eta_i}$ and Lemma 6 in \cite{bertsekas2000gradient}.
\end{proof}

\begin{proof}[Proof of Prop. \ref{Lipschitz_potential_itself}]
We first show that the Bellman operator is a contraction. Indeed, for any bounded $F,F'\in\mathcal{H}$ we have 
\begin{align*}
     \left\|\left[T_{\phi} F\right]-\left[T_{\phi} F'\right]\right\|\leq \gamma d
\end{align*}
where $d:=\|F-F'\|_\infty$ using the fact that $T_{\phi}$ is monotonic.  

We now observe that 
\begin{align*}
    \left\|\left[T^k_{\phi} F\right]-\left[T^k_{\phi_{\epsilon}} F'\right]\right\|&\leq \left\|\phi-\phi_{\epsilon}\right\|_\infty+ \gamma\left\|T^{k-1}_{\phi}F-T^{k-1}_{\phi_{\epsilon}}F'\right\|_\infty
    \\&\leq \epsilon +\sum_{j=0}^{m-1}\gamma^j\left\|\phi-\phi_{\epsilon}\right\|_\infty +\gamma^m\left\|T^{k-m}_{\phi}F-T^{k-m}_{\phi_{\epsilon}}F'\right\|_\infty
    \\&\leq \epsilon +\sum_{j=0}^{k-1}\gamma^j\left\|\phi-\phi_{\epsilon}\right\|_\infty +\gamma^kd
        \\&\leq \epsilon\left(1+\frac{1-\gamma^k}{1-\gamma}\right) +\gamma^kd
\end{align*}
and hence $
    \underset{k\to\infty}{\lim}\left\|\left[T^k_{\phi} F\right]-\left[T^k_{\phi_{\epsilon}} F'\right]\right\|\leq c\epsilon$ 
for some $c>0$ from which we deduce the result.
\end{proof}

\end{document}